\acrodef{AEP}{Asymptotic Equipartition Property}
\acrodef{AoA}{Angle of Arrival}
\acrodef{AWGN}{Additive White Gaussian Noise}
\acrodef{BER}{Bit-Error-Rate}
\acrodef{BEC}{Binary Erasure Channel}
\acrodef{BPSK}{Binary Phase-Shift Keying}
\acrodef{BSC}{Binary Symmetric Channel}
\acrodef{CDF}[CDF]{Cumulative Distribution Function}
\acrodef{CLT}[CLT]{Central Limit Theorem}
\acrodef{CSI}[CSI]{Channel State Information}
\acrodef{DMC}[DMC]{Discrete Memoryless Channel}
\acrodef{DMS}[DMS]{Discrete Memoryless Source} 
\acrodef{iid}[i.i.d.]{independent and identically distributed}
\acrodef{LPD}[LPD]{Low Probability of Detection}
\acrodef{LDPC}[LDPC]{Low-Density Parity-Check}
\acrodef{MAC}[MAC]{multiple-access channel}
\acrodef{MIMO}[MIMO]{Multiple-Input Multiple-Output}
\acrodef{MISO}{Multiple-Input Single-Output}
\acrodef{PDF}[PDF]{Probability Distribution Function}
\acrodef{PMF}[PMF]{Probability Mass Function}
\acrodef{PPM}[PPM]{Pulse Position Modulation}
\acrodef{PSD}{Power Spectral Density}
\acrodef{QPSK}{Quadrature Phase-Shift Keying}
\acrodef{SIMO}{Single-Input Multiple-Output}
\acrodef{SNR}{Signal-to-Noise Ratio}
\acrodef{wrt}[w.r.t.]{with respect to}
\acrodef{WSS}{Wide Sense Stationary}
\DeclareMathAlphabet{\eurm}{U}{eur}{m}{n}
\DeclareMathAlphabet{\mathbsf}{OT1}{cmss}{bx}{n}% bold sans serif
\DeclareMathAlphabet{\mathssf}{OT1}{cmss}{m}{sl}% slanted sans serif
\DeclareMathAlphabet{\mathcsf}{OT1}{cmss}{sbc}{n}% condensed sans serif
\DeclareSymbolFont{bsfletters}{OT1}{cmss}{bx}{n}  
\DeclareSymbolFont{ssfletters}{OT1}{cmss}{m}{n}
\DeclareMathSymbol{\bsfGamma}{0}{bsfletters}{'000}
\DeclareMathSymbol{\ssfGamma}{0}{ssfletters}{'000}
\DeclareMathSymbol{\bsfDelta}{0}{bsfletters}{'001}
\DeclareMathSymbol{\ssfDelta}{0}{ssfletters}{'001}
\DeclareMathSymbol{\bsfTheta}{0}{bsfletters}{'002}
\DeclareMathSymbol{\ssfTheta}{0}{ssfletters}{'002}
\DeclareMathSymbol{\bsfLambda}{0}{bsfletters}{'003}
\DeclareMathSymbol{\ssfLambda}{0}{ssfletters}{'003}
\DeclareMathSymbol{\bsfXi}{0}{bsfletters}{'004}
\DeclareMathSymbol{\ssfXi}{0}{ssfletters}{'004}
\DeclareMathSymbol{\bsfPi}{0}{bsfletters}{'005}
\DeclareMathSymbol{\ssfPi}{0}{ssfletters}{'005}
\DeclareMathSymbol{\bsfSigma}{0}{bsfletters}{'006}
\DeclareMathSymbol{\ssfSigma}{0}{ssfletters}{'006}
\DeclareMathSymbol{\bsfUpsilon}{0}{bsfletters}{'007}
\DeclareMathSymbol{\ssfUpsilon}{0}{ssfletters}{'007}
\DeclareMathSymbol{\bsfPhi}{0}{bsfletters}{'010}
\DeclareMathSymbol{\ssfPhi}{0}{ssfletters}{'010}
\DeclareMathSymbol{\bsfPsi}{0}{bsfletters}{'011}
\DeclareMathSymbol{\ssfPsi}{0}{ssfletters}{'011}
\DeclareMathSymbol{\bsfOmega}{0}{bsfletters}{'012}
\DeclareMathSymbol{\ssfOmega}{0}{ssfletters}{'012}
\newcommand{\calE}{{\mathcal{E}}}
\newcommand{\calK}{{\mathcal{K}}}
\newcommand{\calN}{{\mathcal{N}}}
\newcommand{\calT}{{\mathcal{T}}}
\newcommand{\calS}{{\mathcal{S}}}
\newcommand{\calU}{{\mathcal{U}}}
\newcommand{\calX}{{\mathcal{X}}}
\newcommand{\calY}{{\mathcal{Y}}}
\newcommand{\calZ}{{\mathcal{Z}}}
\newcommand{\E}[2][]{{\mathbb{E}_{#1}}{\left(#2\right)}}       
\renewcommand{\P}[2][]{{\mathbb{P}_{#1}}{\left(#2\right)}}
\newcommand{\Var}[1]{{\text{\textnormal{Var}}{\left(#1\right)}}}       
\newcommand{\D}[2]{{{\mathbb{D}}\!\left({#1\Vert#2}\right)}}
\newcommand{\avgD}[2]{{{\mathbb{D}}\!\left({#1\Vert#2}\right)}}
\newcommand{\V}[1]{{{\mathbb{V}}\!\left(#1\right)}}
\newcommand{\avgI}[1]{{{\mathbb{I}}\!\left(#1\right)}}
\newcommand{\avgH}[1]{{\mathbb{H}}\!\left(#1\right)}
\newcommand{\Hb}[1]{{\mathbb{H}_b}\left(#1\right)}
\newcommand{\card}[1]{\ensuremath{\left|{#1}\right|}}           % Cardinality
\newcommand{\abs}[1]{\ensuremath{\left|#1\right|}}              % Absolute value        
\newcommand{\eqdef}{\ensuremath{\triangleq}}                    % Defined by equality
\newcommand{\intseq}[2]{\ensuremath{\llbracket{#1},{#2}\rrbracket}}  % Sequence of integers
\newcommand{\indic}[1]{\ensuremath{\mathds{1}\!\left\{#1\right\}}}
\renewcommand{\leq}{\leqslant}
\renewcommand{\geq}{\geqslant}
\newcommand{\proddist}{%
  \mathchoice{\raisebox{1pt}{$\displaystyle\otimes$}}
             {\raisebox{1pt}{$\otimes$}}
             {\raisebox{0.5pt}{\scalebox{0.7}{$\scriptstyle\otimes$}}}
             {\raisebox{0.4pt}{\scalebox{0.6}{$\scriptscriptstyle\otimes$}}}}
\newcommand{\pn}{{\proddist n}}
\newtheorem{theorem}{Theorem}
\newtheorem{definition}{Definition}
\newtheorem{lemma}{Lemma}
\newtheorem{corollary}{Corollary}
\newtheorem{proposition}{Proposition}
\newcommand{\bxn}{\textbf{X}}
\newcommand{\xn}{\textbf{x}}
\newcommand{\byn}{\textbf{Y}}
\newcommand{\yn}{\textbf{y}}
\newcommand{\bzn}{\textbf{Z}}
\newcommand{\zn}{\textbf{z}}
\newcommand{\pz}{P_\emptyset}
\newcommand{\pal}{P_{\alpha_n}}
\newcommand{\paln}{P_{\alpha_n}^{\pn}}
\newcommand{\qz}{Q_\emptyset}
\newcommand{\qzn}{Q^{\pn}_\emptyset}
\newcommand{\qhat}{\widehat{Q}}
\newcommand{\qhatn}{\widehat{Q}^n}
\newcommand{\qal}{Q_{\alpha_n}}
\newcommand{\qaln}{Q_{\alpha_n}^{\pn}}
\newcommand{\mun}[1]{\mu_{#1}^{(n)}}
\newcommand{\Psin}[1]{\Psi_{#1}^{(n)}}
\newcommand{\xijn}{\xi_{j}^{(n)}}
\newcommand{\xin}{\xi^{(n)}}
\newcommand{\agamman}[1]{\mathcal{A}_{\gamma{#1}}^n}
\newcommand{\bseteta}[1]{\mathcal{B}_{\eta{#1}}^n}
\newcommand{\rhovector}{\boldsymbol{\rho}}
\newcommand{\betavector}{\boldsymbol{\beta}}
\newcommand{\brackcurl}[1]{\left\{#1\right\}}
\newcommand{\bracknorm}[1]{\left(#1\right)}
\newcommand{\bracksq}[1]{\left[#1\right]}
\newcommand{\bigO}[1]{\mathcal{O} \left( #1 \right)}
\newcommand{\nex}{\nonumber \\}
\newcommand{\limn}{\lim_{n \to \infty}}
\newcommand{\dupspace}{\phantom{==}} 
\newcommand{\RNum}[1]{\uppercase\expandafter{\romannumeral #1\relax}}
\begin{document}
	
\title{Covert Communication over a $K$-User Multiple-Access Channel}
\author{Keerthi Suria Kumar Arumugam, \textit{Student Member, IEEE}, and Matthieu R. Bloch, \textit{Senior Member, IEEE}\thanks{Parts of this manuscript were presented at the 2016 IEEE International Symposium on Information Theory~\cite{arumugam2016keyless}. This work was supported by the National Science Foundation under Award 1527387.}}
\maketitle

\begin{abstract}
	We consider a scenario in which $K$ transmitters attempt to communicate covert messages reliably to a legitimate receiver over a discrete memoryless \ac{MAC} while simultaneously escaping detection from an adversary who observes their communication through another discrete memoryless \ac{MAC}. We assume that each transmitter may use a secret key that is shared only between itself and the legitimate receiver. We show that each of the $K$ transmitters can transmit on the order of $\sqrt{n}$ reliable and covert bits per $n$ channel uses, exceeding which, the warden will be able to detect the communication. We identify the optimal pre-constants of the scaling, which leads to a complete characterization of the covert capacity region of the $K$-user binary-input \ac{MAC}. We show that, asymptotically, all sum-rate constraints are inactive unlike the traditional \ac{MAC} capacity region. We also characterize the channel conditions that have to be satisfied for  the transmitters to operate without a secret key.
\end{abstract}  
\section{Introduction} \label{sec:Introduction}
%In a number of restrictive scenarios, information needs to be transmitted reliably to a legitimate receiver while simultaneously being undetected by adversaries. 
Recently, there has been a renewed interest in the characterization of the information-theoretic limits of covert communication. 
Following the early work~\cite{bash2013limits} identifying the existence of a \emph{square-root law} similar to that of steganography~\cite{Ker2007},~{\cite{Fridrich2009}}, several follow-up works have refined the characterization of the square-root law for point-to-point classical channels~\cite{Wang2016b,bloch2016covert,Tahmasbi2017,Tahmasbi2017b} and classical-quantum channels~\cite{Bash2015a,Sheikholeslami2016,Wang2016c}; in particular, the \emph{covert capacity} has been defined and precisely computed for \acp{DMC} and \ac{AWGN} channels.
Other extensions have included developing an understanding of when the square-root law does not apply, such as in the presence of channel uncertainty~\cite{che2014reliable,lee2015achieving,Sobers2017,Goeckel2016},~{\cite{lee2018covert}} or timing uncertainty~\cite{Bash2016,arumugam2016asynch}. There have also been several investigations of multi-user models~\cite{ArumugamBloch2018,Tan2017},~{\cite{arumugam2018covertrelay, HuYanZhouEtAl2017a}}, timing channels~\cite{Soltani2015,Mukherjee2016a}, artificial-noise assisted transmissions~\cite{Soltani2017}, and covert key generation~\cite{Tahmasbi2017c}, as well as code designs~\cite{Freche2017, Zhang2016},~{\cite{KadampotTahmasbiBloch2018}}.
 Finally, we note that related works have explored the concept of \emph{stealth}~\cite{Hou2014}, which is tightly tied to the notion of approximation of output statistics~\cite{Han1993,Cuff2013} and may be viewed as \emph{low probability of interception}, whereas covertness focuses on \emph{low probability of detection}.
% Artificial noise generation (beats square root law) - Soltani2017

The main result developed in this paper is the characterization of the {covert capacity region} of the $K$-user binary-input \ac{MAC}. The tools used are natural extensions of the techniques developed for point-to-point covert and stealth channels in~\cite{Wang2016b,bloch2016covert,Hou2014} and for \ac{MAC} resolvability~\cite{Steinberg1998,Yassae2010,Frey2017}, but the converse proof requires special care beyond the approach used in~\cite{bloch2016covert}. 
%We build upon our previous work~\cite{arumugam2016keyless} by analyzing a $K$-user \ac{MAC} instead of a two-user \ac{MAC} to show that our proof techniques extend to \acp{MAC} with more than two users. In addition, we remove channel requirements that guarantee keyless covert communication in~\cite{arumugam2016keyless} and characterize the optimal key throughput in this work. 
We extend our previous work~\cite{arumugam2016keyless} by analyzing $K$-user \acp{MAC} for any $K\geq 2$ and characterizing  the optimal key throughput required for covert communication.
We show that, asymptotically, there exist no sum-rate constraints unlike the traditional \ac{MAC} rate region;\footnote{We correct an embarrassing oversight in the converse argument outlined in~\cite{arumugam2016keyless}, in which we assumed that the channels perceived by the users were independent without formal proof. The corrected argument is now given in Section~\ref{sec:converse-proof}.} intuitively, this happens because covertness is such a stringent constraint that {the covert users} never transmit enough bits to saturate the capacity of the channel. The system behaves as if a covert communication \emph{budget} were merely allocated to the different users.\footnote{This intuitive interpretation is attributed to Sidharth Jaggi, during discussions at ISIT 2016.} 
{A similar behavior was observed~\cite[Theorem 6]{verdu1990channel} in the calculation of the channel capacity per unit cost of a two-user \ac{MAC} when both users consist of a \emph{free} input symbol.}
%\item We characterize the covert capacity region of the $K$-user \ac{MAC} when a single common key is available to all $K$ users (Theorem~\ref{thm:common-key-capacity-region}). 

The remainder of the paper is organized as follows. 
In Section~\ref{sec:notation}, we set the notation used in the paper, and in Section~\ref{sec:channelmodel}, we formally introduce our channel model and define the covert capacity region. 
In Section~\ref{sec:covertcommprocess}, we develop a preliminary result that captures the essence of our approach to covertness and extends~\cite[Lemma 1]{bloch2016covert}. 
We establish the covert capacity region of the $K$-user binary-input \ac{MAC} in Section~\ref{sec:mainresult} { and conclude our work with a brief discussion of extensions and open problems in Section~\ref{sec:conclusion}. 
The proofs of all lemmas are relegated to the appendix. }
%We conclude with a discussion of extensions and discuss the remaining challenges in Section~\ref{sec:discussion}.

\section{Notation} \label{sec:notation}
We denote random variables and their realizations in upper and lower case, respectively. All sequences in boldface are $n$-length sequences, where $n \in \mathbb{N}^*$, unless specified otherwise. We define the weight of a sequence as the number of non-zero symbols in that sequence. Throughout the paper, $\log$ and $\exp$ are understood to the base $e$; the results can be interpreted in bits by converting $\log$ to the base $2$. Adhering to standard information-theoretic notation, $\avgH{X}$ and $\avgI{X; Y}$ represent the {entropy} of $X$ and the {mutual information} between $X$ and $Y$, respectively. For $x \in \bracksq{0,1}$, let $\Hb{x}$ denote the {binary entropy} of $x$. For two distributions, $P$ and $Q$, defined on the same finite alphabet $\calX$, the Kullback-Leibler (KL) divergence is $\avgD{P}{Q} \eqdef \sum_x P(x) \log \frac{P(x)}{Q(x)}$ and the variational distance is $\V{P,Q} \eqdef \frac{1}{2} \sum_x |P(x) - Q(x)|$. They are related by Pinsker's inequality~{\cite{CoverThomas2006}} as $\V{P,Q}^2 \leq \frac{1}{2}\avgD{P}{Q}$. If $P$ is absolutely continuous \ac{wrt} $Q$, we write $P \ll Q$. 
For $x \in \mathbb{R}$, we define $[x]^+ \eqdef \max\bracknorm{{x,0}}$. 
We denote the cardinality of a set $\calT$ by $\card{\calT}$, and we represent the vector $\brackcurl{X_k : k \in \calT}$ by $X{{\bracksq{\calT}}}$. 
We denote the cartesian product $\times_{k \in \calT} \calX_k$ by $\calX \bracksq{\calT}$.
Furthermore, $\emptyset$ denotes an empty set, and $\calS \setminus \calT$ denotes the set difference of two sets $\calS$ and $\calT$. 

\section{Channel model} \label{sec:channelmodel}
We define the set $\calK \eqdef \intseq{1}{K}$, where $K \in \mathbb{N}^*$ and $K \geq 2$. We analyze the channel model illustrated in Figure~\ref{fig:channelmodel}, in which $K$ transmitters simultaneously communicate with a legitimate receiver over a discrete memoryless \ac{MAC} $\left(\calX[\calK],W_{Y|X[\calK]},\calY\right)$ in the presence of a warden monitoring the communication over another discrete memoryless \ac{MAC} $\left(\calX[\calK],W_{Z|X[\calK]},\calZ\right)$. As both channels are memoryless, we denote the transition probabilities corresponding to $n$ uses of the channel by $W_{Y|X{\bracksq{\calK}}}^{\pn} \eqdef \prod_{i=1}^n W_{Y|X{\bracksq{\calK}}}$ and $W_{Z|X{\bracksq{\calK}}}^{\pn} \eqdef \prod_{i=1}^n W_{Z|X{\bracksq{\calK}}}$. In addition, we assume for simplicity of exposition that each user $k\in\calK$ uses the same binary input alphabet $\calX_k\eqdef \calX\eqdef\{0,1\}$ and that the output alphabets $\calY$ and $\calZ$ are finite. We let $0 \in \calX$ be the innocent symbol corresponding to the channel input when no communication takes place. We assume that all terminals are synchronized and possess complete knowledge of the coding scheme used. 

The user indexed by $k \in \calK$, encodes a uniformly-distributed message $W_k \in\intseq{1}{M_k}$ and a uniformly-distributed secret key $S_k\in\intseq{1}{L_k}$, which is shared only with the receiver, into a codeword $\bxn_k(W_k,S_k)\in\calX^n$ of length $n$. 
We denote the collection of the $K$ codewords $\brackcurl{\bxn_k\bracknorm{W_k,S_k}}_{k \in \calK}$ by $\bxn_{\calK}\bracknorm{W{\bracksq{\calK}}, S{\bracksq{\calK}}}$. 
When the context is clear, we drop the message and key indices, $W_k$ and $S_k$, and denote $\bxn_k(W_k,S_k)$ by $\bxn_k$ instead for conciseness.  
It is convenient to think about the $K$ inputs to the channel over $n$ uses as a matrix $\bxn\bracksq{\calK}$ of size $K \times n$ obtained by vertically stacking the $K$ codewords, each of which is a row vector. 
The inputs corresponding to all users indexed by the elements of a non-empty set $\calU\subset\calK$ is a sub-matrix of $\bxn\bracksq{\calK}$ obtained by selecting the rows whose indices belong to $\calU$ and is denoted by $\bxn\bracksq{\calU}$.
The $K$ users then transmit codewords $\bxn\bracksq{\calK}$ over the channel in $n$ channel uses. 
At the end of transmission, the receiver observes $\byn$ while the warden observes $\bzn$, both of which are of length $n$.
\begin{figure}[b] 
  \centering
  \includegraphics[width=0.6\linewidth]{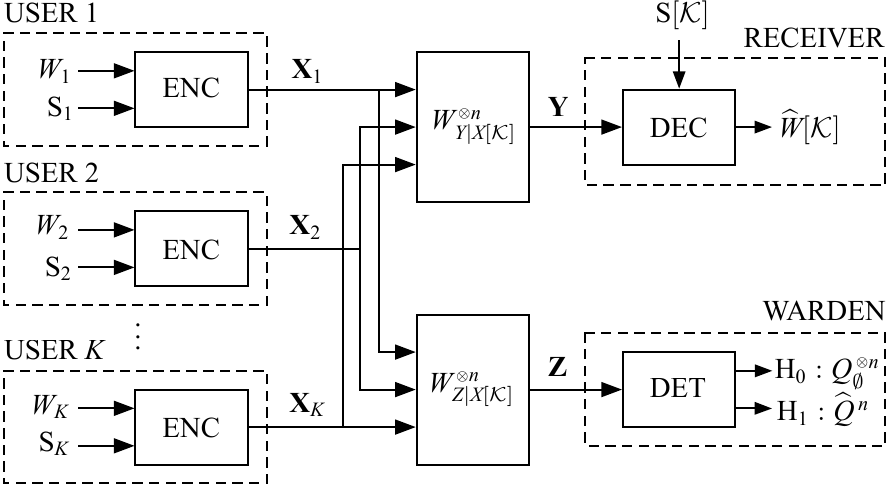}
  \caption{Model of covert communication over a \ac{MAC} with $K$ transmitters.}
  \label{fig:channelmodel}
\end{figure} 

We introduce a $K$-length row vector $X_\calU=(X_1,X_2,\dots,X_K)$, $\calU \subseteq \calK$, with entry $X_k=1$ if $k\in\calU$ and $X_k=0$ otherwise. 
With our assumption that all channel inputs are binary, we represent every column of the matrix $\bxn\bracksq{\calK}$ by a vector $\bracknorm{ X_{\calU}}^T$, where the set $\calU$  consists of the indices of all users transmitting symbol $1$ in this column.
We denote the $k^{\text{th}}$ component of $X_{\calU}$ by $X_{\calU, k}$. 
In accordance with the notation introduced in the previous paragraph, $X_\calU[\calT]$ represents a row vector of length $\card{\calT}$ that contains the entries $\brackcurl{X_{\calU,k}}_{k \in \calT}$. 
Note the difference between $X\bracksq{\calU}$ and $X_{\calU}$; the former is a $\abs{\calU}$-length vector $\brackcurl{X_k}_{k \in \calU}$ whereas the latter is a $K$-length vector with $1$'s in indices that belong to the set $\calU$. 
For conciseness, we define
\begin{align}
	P_\calU(y) \eqdef W_{Y|X{{\bracksq{\calK}}}}(y|x_\calU), \quad	Q_\calU(z) \eqdef W_{Z|X{{\bracksq{\calK}}}}(z|x_\calU), \label{eq:1}
\end{align}
which represent the one-shot output distributions at the legitimate receiver and the warden, respectively, when only the transmitters in $\calU \subseteq \calK$ transmit symbol $1$, while the transmitters in $\calU^c$ transmit a $0$. 
When $\calU$ is a singleton set $\brackcurl{k}$, which corresponds to user $k$ transmitting $1$ and all other users transmitting $0$, we write $P_k$ and $Q_k$ instead of $P_{\brackcurl{k}}$ and $Q_{\brackcurl{k}}$, respectively. 
If $\calU = \emptyset$, which occurs when all users transmit the innocent symbol $0$, we write $\pz$ and $\qz$.
{
We assume that $Q_\calU \ll \qz$ for all non-empty sets $\calU \subseteq \calK$ and that $\qz$ cannot be written as a convex combination of the form $ \qz(z) = \sum_{\calT\subseteq \calK} \left(\prod_{k\in\calT}\mu_{k}\right)\left(\prod_{k\in\calT^c}(1-\mu_{k})\right)Q_{\calT}(z)$ for some $\{\mu_k\}_{k\in\calK}\in[0,1]^K \setminus \brackcurl{0}_{k \in \calK}$. 
In the former case, covert communication involving all $K$ users is impossible; in the latter case, covert communication would directly follow from known channel resolvability results~\cite{Steinberg1998,Yassae2010,Frey2017} and would be possible at a non zero-rate. 
We also assume that there does not exist $\{\rho_k\}_{k\in\calK}\in[0,1]^K $ with $\sum_{k\in\calK}\rho_k=1$ such that $\sum_{k\in\calK}\rho_k    Q_k(z)=Q_\emptyset(z)$ for all $z \in \calZ$.    
As we shall see later in Section~\ref{sec:covertcommprocess}, the square root law of covert communication can be circumvented if such a $\{\rho_k\}_{k\in\calK}$ exists.}
% For all non-empty sets $\calU \subseteq \calK$, we assume that $Q_\calU \ll \qz$ and that $\qz$ cannot be written as a convex combination of $\brackcurl{ Q_\calU}_{\calU \subseteq \calK: \, \calU \neq \emptyset}$. 
% Covert communication involving all $K$ users is impossible without the former assumption and is trivial without the latter~\cite{bloch2016covert}. 
% For all non-empty sets $\calU \subseteq \calK$, we also assume that $P_\calU \ll P_k \ll \pz$ for every $k \in \calU$ to ensure that the legitimate receiver does not have an unfair advantage over the warden; otherwise, the square-root law can be circumvented~\cite{bloch2016covert}.

Upon observing $\byn$,  the legitimate receiver estimates the message vector $\widehat{W}{{\bracksq{\calK}}}$. 
We measure reliability {at the receiver} with the average probability of error $\smash{P_e^n \eqdef \P{\widehat{W}{{\bracksq{\calK}}} \neq W{{\bracksq{\calK}}}}}$.
Upon observing $\bzn$, the warden attempts to detect whether all $K$ users transmitted covert messages (Hypothesis $H_1$) or not (Hypothesis $H_0$) by performing a hypothesis test on $\bzn$. 
We denote the Type I (rejecting $H_0$ when true) and Type II (accepting $H_0$ when false) {error probabilities} by $\alpha$ and $\beta$, respectively. 
The warden can achieve any pair $\bracknorm{\alpha, \beta}$ such that $\alpha + \beta = 1$ by ignoring his observation $\bzn$ and basing his decision on the result of a coin toss. 
We define the distribution induced at the warden when communication takes place by
\begin{align}
	\qhatn(\zn) \eqdef \frac{1}{\prod_{k \in \calK} M_{k} L_{k}}  \sum_{m{{\bracksq{\calK}}}} \sum_{\ell{{\bracksq{\calK}}}}  W_{Z|X{{\bracksq{\calK}}}}^{\pn} \bracknorm{\zn | \xn_{\calK}\bracknorm{m{\bracksq{\calK}}, \ell{\bracksq{\calK}}}  }. \label{eq:3}
\end{align}
We measure covertness in terms of the KL divergence $\avgD{\smash{\qhatn}}{\qzn}$, where $\qzn$ is the distribution observed by the warden when none of the $K$ users transmits any covert information. 
{We know from~\cite{lehmann2006testing} that any test conducted by the warden on $\bzn$ satisfies $\alpha + \beta \geq 1 - \V{\smash{\qhatn},\qzn}$. 
Using Pinsker's inequality~\cite{CoverThomas2006}, we write $\alpha + \beta \geq 1 - \sqrt{\avgD{\smash{\qhatn}}{\qzn}}$. }
The primary objective of our covert communication scheme is to guarantee that $\avgD{\smash{\qhatn}}{\qzn}$ is negligible so that any statistical test used by the warden on $\bzn$ is futile.
{Note that we only consider communication schemes for which $\log M_k$ grows to infinity,  for $k \in \calK$, as $n$ grows to infinity.}
\begin{definition}
  The tuple $r{\bracksq{\calK}} \in \mathbb{R}_{{+}}^K$ is an achievable reliable and covert throughput tuple if there exists a sequence of codes as defined above with increasing blocklength $n$ such that for every $k \in \calK$,
  {
  \begin{align}
    & \liminf_{n \to \infty} \frac{\log M_k}{\sqrt{n \avgD{\smash{\qhatn}}{\qzn}}} \geq r_k , \label{eq:4} 
   \end{align}
   }
   and
   \begin{align} 
    \limn P_e^n = 0, \quad \limn \avgD{\qhatn}{\qzn} = 0. \label{eq:6}
  \end{align}
  The covert capacity region of the $K$-user \ac{MAC} consists of the closure of the set of all achievable throughput tuples $r{\bracksq{\calK}}$.
  {
%   A tuple $s[\mathcal{K}]\in\mathbb{R}_+^K$
%   associated to an achievable reliable and covert throughput tuple $r[\mathcal{K}]$, 
%   is an achievable key throughput tuple 
  Also, we define the tuple $s[\mathcal{K}]\in\mathbb{R}_+^K$ as an achievable key throughput tuple associated with the achievable reliable and covert throughput tuple $r[\mathcal{K}]$, if there exist a sequence of codes satisfying~\eqref{eq:4} and~\eqref{eq:6} and if for all $k\in\mathcal{K}$, 
\begin{align}
s_k \geq \limsup_{n \to \infty} \frac{\log L_k}{\sqrt{n \avgD{\smash{\qhatn}}{\qzn}}}.  \label{eq:7}
\end{align}
}
\end{definition}
{Note that in~\eqref{eq:4}, we normalize the number of bits $\log M_k$ by $\sqrt{n \avgD{\smash{\qhatn}}{\qzn}}$ instead of $n$ as traditionally done in information-theoretic problems. 
The normalization by $\sqrt{n}$ is essential to reflect the fact that covert communication corresponds to a zero-rate regime, in which the number of bits scales sub-linearly with the number of channel uses. 
The normalization by $\sqrt{\avgD{\smash{\qhatn}}{\qzn}}$ is also crucial to reflect the fact that  $\avgD{\smash{\qhatn}}{\qzn}$ influences $\brackcurl{ \log M_k}_{ k \in \calK}$. 
While the normalization might seem somewhat ad-hoc, it is justified \emph{a posteriori} in Section~\ref{sec:mainresult} when we prove that $\log M_k/\sqrt{n \avgD{\smash{\qhatn}}{\qzn}}$ is independent of $n$ in the limit of large blocklength. 
Said differently, $\log M_k/\sqrt{n \avgD{\smash{\qhatn}}{\qzn}}$ plays the role of the usual ``rate'' in that it asymptotically does not depend on the blocklength $n$ and already integrates the scaling. 
To avoid confusion, we refer to $r_k$ as \emph{throughput} instead of rate.}

\section{Preliminaries}
\label{sec:covertcommprocess}
Following the approach proposed in~\cite{bloch2016covert}, we introduce a \emph{covert communication process}, which is an \ac{iid} process indistinguishable from the innocent distribution $\qzn$ in the limit. The rationale for introducing this process is to precisely quantify the fraction of channel uses in which the users can transmit symbol $1$ while simultaneously avoiding detection by the warden, without introducing the coding aspect of the problem yet.

For $n \in \mathbb{N}^*$, let $\alpha_n \in \bracknorm{0,1}$. Let $\rhovector \eqdef \{\rho_k\}_{k\in\calK}\in[0,1]^K$ such that\footnote{{The assumption $\sum_k \rho_k = 1$ is only made for convenience and, as we shall see from the converse part of Theorem~\ref{thm:covertcapacity}, without loss of generality.}} $\sum_{k \in \calK} \rho_k = 1$. We define the input distributions $\brackcurl{\Pi_{X_k}}_{k \in \calK}$ on $\calX$ as 
\begin{align}
  \Pi_{X_k}(1) = 1 - \Pi_{X_k}(0) = \rho_k \alpha_n. \label{eq:8}
\end{align}
The output distributions at the legitimate receiver and the warden when the input distribution of each user $k$ is $\Pi_{X_k}$ are defined, respectively, as
\begin{align}
	\pal\!\bracknorm{y} & \eqdef \sum_{x{{\bracksq{\calK}}}} W_{Y|X{{\bracksq{\calK}}}}(y|x{{\bracksq{\calK}}})\bracknorm{ \prod_{k \in \calK} \Pi_{X_k}(x_k)}, \label{eq:9} \\
	\qal\!\bracknorm{z} & \eqdef \sum_{x{{\bracksq{\calK}}}} W_{Z|X{{\bracksq{\calK}}}}(z|x{{\bracksq{\calK}}})\bracknorm{ \prod_{k \in \calK} \Pi_{X_k}(x_k)}. \label{eq:10}
\end{align} 
The $n$-fold product distributions corresponding to~\eqref{eq:8},~\eqref{eq:9}, and~\eqref{eq:10} are
\begin{align}
  \Pi_{X_k}^{\pn} & = \prod_{j=1}^n \Pi_{X_k}, \quad \paln = \prod_{j=1}^n \pal, \quad \qaln  = \prod_{j=1}^n \qal. \label{eq:11}
\end{align}
For a set $\calT \subseteq \calK$, we define 
 \begin{align}
 	G_{\calT}(z) \eqdef \sum_{\calU \subseteq \calT} \bracknorm{-1}^{|\calT| - |\calU|} Q_{\calU}(z). \label{eq:12}
 \end{align}
Then, using {Lemma~\ref{lem:altrepresentation} in Appendix~\ref{sec:altrep}}, we write
\begin{align}
	\qal(z) = \qz(z) + \sum_{\substack{\calT \subseteq \calK: \calT \neq  \emptyset}} \bracknorm{\prod_{k \in \calT} \rho_k \alpha_n } G_{\calT}(z). \label{eq:13}
\end{align}
Note that since $Q_{\calT} \ll \qz$ for all non-empty sets $\calT \subseteq \calK$, it is also true that $\qal \ll \qz$. Furthermore, we define 
  \begin{align}
    & \zeta_n(z) \eqdef \frac{Q_{\alpha_n}(z) - \qz(z)}{\alpha_n},\quad \chi_n{\bracknorm{\rhovector}} \eqdef \sum_z \frac{\zeta_n^2(z)}{\qz(z)}, \label{eq:14} \\
    & \zeta(z)  \eqdef \sum_{k \in \calK} \rho_k (Q_k(z) - \qz(z)),\quad	\chi{\bracknorm{\rhovector}} \eqdef \sum_z \frac{\zeta^2(z)}{\qz(z)}.  \label{eq:15}
  \end{align}
In the following lemma, we bound the KL divergence between $\qal$ and $\qz$. Later, we use the results of this lemma to show that for specific choices of $\alpha_n$, the stochastic process $\qaln$ is indistinguishable from the innocent distribution $\qzn$ in the limit.
\begin{lemma}
  \label{lem:covertprocess}
  Let the sequence $\brackcurl{\alpha_n}_{n \geq 1}$ be such that $\limn \alpha_n = 0$. 
  Then, for $n \in \mathbb{N}^*$ large enough,
  \begin{align}
    \frac{\alpha_n^2}{2} \bracknorm{1 + \sqrt{\alpha_n}} \chi_n{\bracknorm{\rhovector}} \geq \avgD{Q_{\alpha_n}}{\qz} \geq \frac{\alpha_n^2}{2} \bracknorm{1 - \sqrt{\alpha_n}} \chi_n{\bracknorm{\rhovector}}. \label{eq:lem1}
  \end{align}
  In addition, for all $z \in \calZ$, $\limn \zeta_n(z) = \zeta(z)$ and $\limn \chi_n{\bracknorm{\rhovector}} = \chi{\bracknorm{\rhovector}}$. Finally, for random variables $\bracknorm{X{{\bracksq{\calT}}},Z} \in \calX^{\abs{\calT}} \times \calZ$ for some non-empty set $\calT \subseteq \calK$ with joint distribution $W_{Z|X{{\bracksq{\calT}}}}\bracknorm{\prod_{k \in \calT} \Pi_{X_k}}$, we have
  \begin{align}
  	\avgI{X{{\bracksq{\calT}}};Z} = \sum_{k \in \calT} \rho_k \alpha_n \avgD{Q_k}{\qz} + \bigO{\alpha_n^2}. \label{eq:lem2}
  \end{align}
\end{lemma}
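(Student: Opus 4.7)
The plan is to prove the three assertions in sequence, each via a Taylor expansion in the vanishing parameter $\alpha_n$, exploiting that $\calZ$ is finite and $Q_\calU\ll\qz$ for every non-empty $\calU\subseteq\calK$.

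\textbf{Bounds on $\avgD{\qal}{\qz}$.} Setting $t_n(z)\eqdef\alpha_n\zeta_n(z)/\qz(z)$, one writes
\begin{align*}
\avgD{\qal}{\qz}=\sum_z\qz(z)\bracknorm{1+t_n(z)}\log\bracknorm{1+t_n(z)}.
\end{align*}
Because $\qal$ and $\qz$ are probability measures, $\sum_z\qz(z)t_n(z)=\alpha_n\sum_z\zeta_n(z)=0$, so the linear term in the expansion $\bracknorm{1+t}\log(1+t)=t+t^2/2+R(t)$ vanishes and the quadratic term yields $\alpha_n^2\chi_n(\rhovector)/2$. Using~\eqref{eq:13} together with the absolute continuity assumption, $|\zeta_n(z)/\qz(z)|$ is bounded uniformly in $z$, so $|R(t_n(z))|=\bigO{\alpha_n^3}$. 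For $n$ large enough this remainder is dominated in magnitude by $(\alpha_n^{5/2}/2)\chi_n(\rhovector)$, delivering both sides of~\eqref{eq:lem1}.

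\textbf{Limits.} Separating the $|\calT|=1$ contribution in~\eqref{eq:13} gives
\begin{align*}
\zeta_n(z)=\sum_{k\in\calK}\rho_k\bracknorm{Q_k(z)-\qz(z)}+\sum_{\calT\subseteq\calK,~|\calT|\geq 2}\alpha_n^{|\calT|-1}\bracknorm{\prod_{k\in\calT}\rho_k}G_\calT(z),
\end{align*}
and only the first sum survives as $\alpha_n\to 0$, so $\zeta_n(z)\to\zeta(z)$ pointwise. Since $\calZ$ is finite, this convergence transfers immediately to $\chi_n(\rhovector)\to\chi(\rhovector)$ by continuity of the defining finite sum.

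\textbf{Mutual-information expansion.} Writing $\avgI{X{\bracksq{\calT}};Z}=\sum_{x{\bracksq{\calT}}}\Pi(x{\bracksq{\calT}})\avgD{W_{Z|X{\bracksq{\calT}}}\bracknorm{\cdot|x{\bracksq{\calT}}}}{\qal}$, I would decompose the outer sum by the Hamming weight of $x{\bracksq{\calT}}$. Marginalizing the channel over users in $\calK\setminus\calT$ (distributed as $\prod_{k\in\calK\setminus\calT}\Pi_{X_k}$) gives $W_{Z|X{\bracksq{\calT}}}(\cdot|\mathbf{0})=\qz+\bigO{\alpha_n}$ and, for the input with a single $1$ at coordinate $k\in\calT$, $W_{Z|X{\bracksq{\calT}}}(\cdot|e_k)=Q_k+\bigO{\alpha_n}$. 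The weight-zero input has probability $1-\bigO{\alpha_n}$ and, since both its conditional distribution and $\qal$ differ from $\qz$ by $\bigO{\alpha_n}$, its divergence is $\bigO{\alpha_n^2}$ by the same Taylor argument as in the first part. Each weight-one input $e_k$ contributes $\bracknorm{\rho_k\alpha_n+\bigO{\alpha_n^2}}\bracknorm{\avgD{Q_k}{\qz}+\bigO{\alpha_n}}=\rho_k\alpha_n\avgD{Q_k}{\qz}+\bigO{\alpha_n^2}$. Weight-$\geq 2$ inputs have total probability $\bigO{\alpha_n^2}$ and, thanks to finite $\calZ$ and absolute continuity, bounded divergence, contributing $\bigO{\alpha_n^2}$. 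Summing these pieces yields~\eqref{eq:lem2}. I expect the most delicate step to be verifying that the weight-zero divergence is truly $\bigO{\alpha_n^2}$ rather than $\bigO{\alpha_n}$: this requires observing that $\sum_z\bracknorm{W_{Z|X{\bracksq{\calT}}}(z|\mathbf{0})-\qal(z)}=0$ so the first-order Taylor term cancels and only the harmless quadratic remainder survives. Every other piece reduces to routine expansions and finite-sum manipulations.
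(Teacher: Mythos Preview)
Your proposal is correct and follows essentially the same route as the paper: the same Taylor expansion of $(1+t)\log(1+t)$ with cancellation of the linear term for~\eqref{eq:lem1}, the same extraction of the $|\calT|=1$ summands from~\eqref{eq:13} for the limits, and the same Hamming-weight decomposition of the mutual information for~\eqref{eq:lem2}. The only organisational difference is that, for the mutual-information part, the paper first shifts the reference measure from $\qal$ to $\qz$ (picking up a $-\avgD{\qal}{\qz}=\bigO{\alpha_n^2}$ correction) before splitting by weight, whereas you keep $\qal$ as the reference throughout; both variants lead to the same termwise estimates.
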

The proof of Lemma~\ref{lem:covertprocess} is provided in {Appendix~\ref{sec:covertprocess}}. 
Assume that each transmitter $k \in \calK$ generates a sequence of length $n$ using the process $\Pi_{X_k}^{\pn}$. The weight of these sequences is $\rho_k n \alpha_n$ on average. To be indistinguishable from the innocent distribution in the limit, the covert process $\qaln$ has to satisfy 
\begin{align}
	\limn \avgD{\qaln}{\qzn} = \limn n\avgD{\qal}{\qz} = 0. \label{eq:25}
\end{align}
{Our assumptions in Section~\ref{sec:channelmodel} ensure that $\chi\bracknorm{\rhovector}$ is non-zero. }
Consequently, from the results of Lemma~\ref{lem:covertprocess} and~\eqref{eq:25}, we conclude that if we choose the sequence $\brackcurl{\alpha_n}_{n \in \mathbb{N}^*}$ such that $\limn n \alpha_n^2 = 0$, our covert process {$\qaln$} is indistinguishable from $\qzn$ in the limit. Consequently, we will construct a coding scheme that emulates the covert process $\qaln$ instead of $\qzn$. The prime benefit of using $\qaln$ instead of $\qzn$ is that $\qaln$ allows us to convey covert information through the use of $1$ symbols. In particular, it is possible to choose $\brackcurl{\alpha_n}_{n \in \mathbb{N}^*}$ such that $\limn n \alpha_n = \infty $ so that the number of information bits grows with $n$. 

{The ``square root law'' of covert communication follows from the constraint $\limn n \alpha_n^2 = 0$, which forces the scaling of $n\alpha_n$ to be arbitrarily close to but not exceed $\sqrt{n}$. If $\chi(\rhovector)=0$ for some $\rhovector$,  one would need to push the approximation of $\avgD{\qaln}{\qzn}$ at least to the order $\alpha_n^3$ in Lemma~\ref{lem:covertprocess}. In turn, we would only need to choose a sequence such that $\limn n \alpha_n^3 = 0$, effectively allowing the increase of the scaling of $n \alpha_n$ to be arbitrarily close to but not exceed ${n}^{2/3}$ and beating the square root law. The assumption that $\chi(\rhovector)>0$ made in Section~\ref{sec:channelmodel} therefore excludes the (rare) situations in which the square root law can be beaten}.

\section{Main result} \label{sec:mainresult}
We characterize the covert capacity region of a $K$-user binary-input \ac{MAC} in Theorem~\ref{thm:covertcapacity}, {with the achievability proof in Section~\ref{sec:achievability-proof} and the converse proof in Section~\ref{sec:converse-proof}}. The proofs adapt channel resolvability and converse techniques used in~\cite{bloch2016covert} for point-to-point channels to the \acp{MAC}. The achievability proof is an extension of~\cite{bloch2016covert}, and we provide details in { the appendix}; the converse proof presents more challenges and is fully detailed.
\subsection{Covert capacity region of the $K$-user binary-input \ac{MAC}}
\label{sec:covertcapacity}
\begin{theorem}
  \label{thm:covertcapacity}
  For $\rhovector \eqdef \brackcurl{ \rho_k}_{k \in \calK } \in \bracksq{0,1}^{K}$ such that $\sum_{k \in \calK} \rho_k = 1$, define 
  \begin{align}
  	\chi(\rhovector) \eqdef \sum_z \frac{\bracknorm{\sum_{k \in \calK} \rho_k \bracknorm{ Q_k(z) - \qz(z)}}^2 }{\qz(z)}. \label{eq:cap_0}
  \end{align}
  For the $K$-user binary-input \ac{MAC} described in Section~\ref{sec:channelmodel}, the covert capacity region is
  \begin{align}
    \bigcup_{\{\rho_k\}_{k\in\calK}\in[0,1]^K:\sum_{k \in \calK} \rho_k = 1}\left\{\{r_k\}_{k\in\calK}:\forall k\in\calK,\quad r_k\leq \sqrt{\frac{2}{\chi\bracknorm{\rhovector}}} \rho_k \avgD{P_k}{\pz}\right\}. \label{eq:cap_1}
  \end{align}
  {
  In addition, for any achievable reliable and covert throughput tuple $r[\calK]$ on the boundary of the covert capacity region characterized by $\rhovector$, the set of achievable key throughput tuples is
  \begin{align}
\left\{\{s_k\}_{k\in\calK}:\forall k\in\calK,\quad  s_k \geq \sqrt{ \frac{2}{\chi(\rhovector) }} \rho_k \left[\avgD{Q_k}{\qz} - \avgD{P_k}{\pz}\right]^+\right\}.  \label{eq:cap_2}
  \end{align}
  }
%In addition, any point on the boundary of {the covert capacity region} characterized by a specific $\rhovector $ is achievable if, $\forall k\in \calK$,
%  \begin{align}
%    s_k > \sqrt{ \frac{2}{\chi(\rhovector) }} \rho_k \left[\avgD{Q_k}{\qz} - \avgD{P_k}{\pz}\right]^+, \label{eq:cap_2}
%  \end{align}
%  and only if, $\forall k\in\calK$, 
%    \begin{align}
%	s_k \geq \sqrt{ \frac{2}{\chi(\rhovector)} } \rho_k \left[\avgD{Q_k}{\qz} - \avgD{P_k}{\pz}\right]^+. \label{eq:cap_3}
%  \end{align}
\end{theorem}
{Note that $\chi\bracknorm{\rhovector}$ in~\eqref{eq:cap_0} is positive under the assumption made in Section~\ref{sec:channelmodel}, so that the bounds in~\eqref{eq:cap_1} and~\eqref{eq:cap_2} are well defined and finite. 
A few remarks are now in order.}
\begin{itemize}
    \item Our characterization of the covert capacity region only involves constraints on individual user's throughputs; there are no active constraints on the sum throughput. However, the individual throughputs are not identical to those of the single-user case~\cite{bloch2016covert}, as there exists a non-trivial interplay among the $\rho_k$'s, for $k \in \calK$, through $\chi(\rhovector) $ in~\eqref{eq:cap_1}.  
    \item User $k \in \calK$ can {achieve its maximum covert and reliable throughput} without a key {only} if 
    \begin{align}
      \avgD{P_k}{\pz} \geq \avgD{Q_k}{\qz}, \label{eq:cap_4}
    \end{align} 
    is satisfied; that is, no secret key is required for {user $k$} if the channel from user $k$ to the receiver is \emph{better} than the channel to the warden when all other users are silent. 
    \item { If the \ac{MAC} is symmetric, in the sense that $\forall z \in \calZ$ and $\forall k \in \calK$, $Q_k(z) = Q(z)$, then $\sum_{k \in \calK} \rho_k \allowbreak \bracknorm{Q_k(z) - \qz(z)} = Q(z) - \qz(z)$, so that $\chi{\bracknorm{\rhovector}}$ is independent of  $\rhovector$ and time sharing is optimal.} 
\end{itemize}
%The users who are intrinsically harder to detect than others because their channel to the warden introduces more noise tend to benefit from a better throughput. 
\begin{figure}[b!] 
  \centering
  \includegraphics[width=0.6\linewidth]{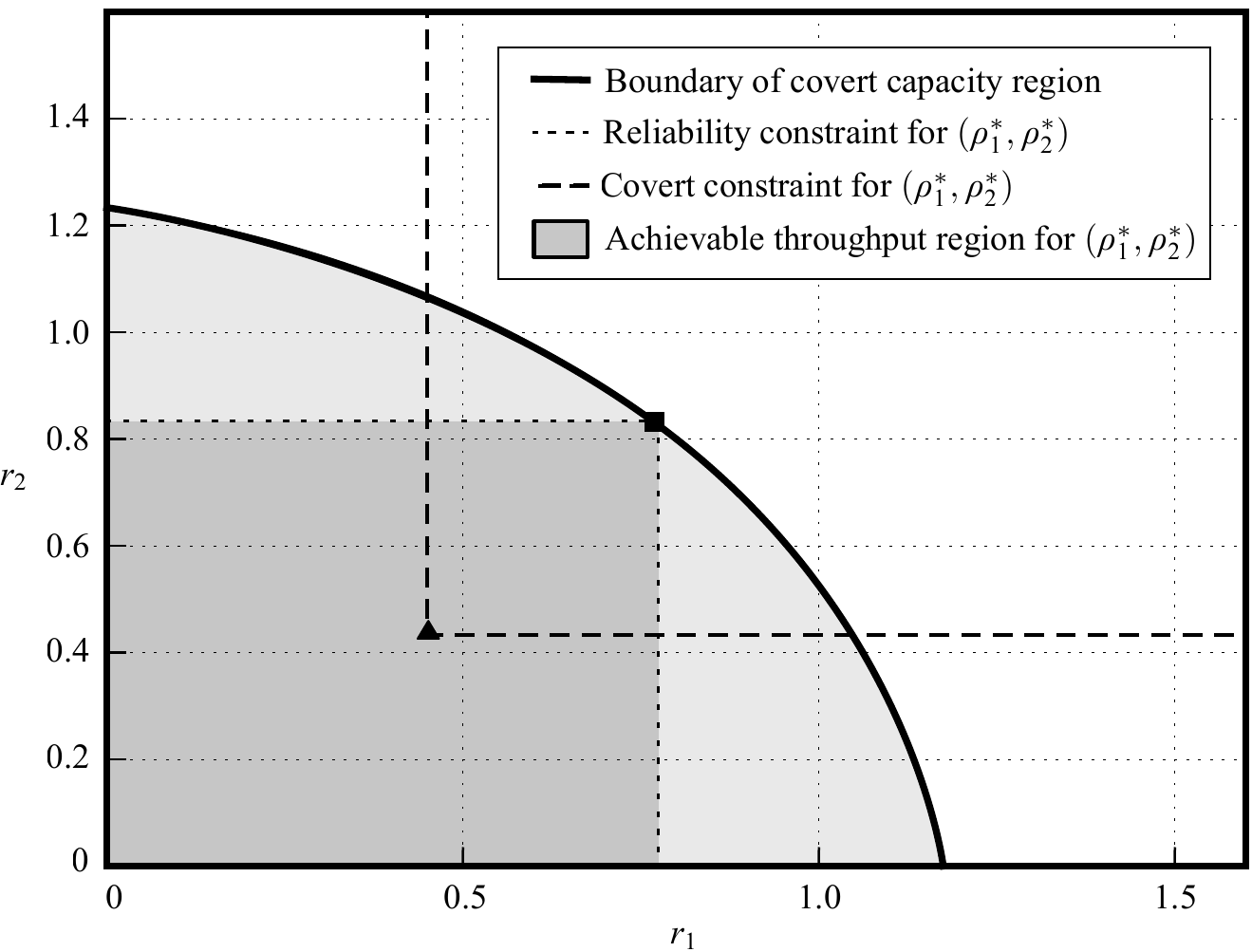}
  \caption{Representative example of the covert capacity region for a 2-user \ac{MAC}. The achievable rate region for a specific choice of $\rhovector = \rhovector^* = \bracknorm{\rho^*_1, \rho^*_2}$ is highlighted.}
  \label{fig:capacityregion}
\end{figure}
%{
%Note that the key throughput $\brackcurl{s_k}_{k \in \calK}$ in~\eqref{eq:cap_2} is achievable when $\brackcurl{r_k}_{k \in \calK}$ lies on the boundary of the covert capacity region. }

Figure~\ref{fig:capacityregion} illustrates the covert capacity region for a $2$-user \ac{MAC} with randomly generated channel matrices, $W_{Y|X_1X_2}$ and $W_{Z|X_1X_2}$, that satisfy~\eqref{eq:cap_4} for $k \in \brackcurl{1,2}$ and the absolute continuity requirements described in Section~\ref{sec:channelmodel} for $\calK = \brackcurl{1,2}$. 
The thick solid curve denotes the boundary of the covert capacity region. 
All points on this boundary can be achieved by varying the values of $\bracknorm{\rho_1, \rho_2}$. 
	For $\rhovector = \rhovector^* \eqdef  \bracknorm{\rho^*_1, \rho^*_2}$, the achievable covert throughput region is highlighted in Figure~\ref{fig:capacityregion}, where the square marker represents the maximum achievable covert throughput pair $\bracknorm{ \sqrt{\frac{2}{\chi{\bracknorm{\rhovector^*}}}} \rho^*_1 \avgD{P_1}{\pz}, \sqrt{\frac{2}{\chi{\bracknorm{\rhovector^*}}}} \rho^*_2 \avgD{P_2}{\pz} }$, while the triangular marker represents the pair $\bracknorm{ \sqrt{\frac{2}{\chi{\bracknorm{\rhovector^*}}}} \rho^*_1 \avgD{Q_1}{\qz}, \sqrt{\frac{2}{\chi{\bracknorm{\rhovector^*}}}} \rho^*_2 \avgD{Q_2}{\qz} }$. A non-empty intersection of the region to the {top-right} of the triangular marker and the region to the {bottom-left} of the square marker implies the existence of keyless covert communication schemes. If the regions do not intersect, a secret key is required to communicate covertly.
	{Note that the achievable region is still the region spanning from (0,0) to the square marker as highlighted in Figure~\ref{fig:capacityregion}. }
{Also note that, for a symmetric 2-user \ac{MAC}, the boundary of the covert capacity region is a straight line, and time sharing is optimal. }
%Also note that our results extend to non-binary input alphabets with one innocent symbol and multiple information symbols as well.
%We do not provide the results in this work to avoid confusion since multiple information symbols cannot be represented by our current notation for the one-shot output distributions at the receiver and the warden.

\subsection{Achievability proof}
\label{sec:achievability-proof}

We consider a communication scheme in which every user $k$ employs $L_k$ sub-codebooks, each {consisting} of $M_k$ codewords. The value of the key $S_k \in \intseq{1}{L_k}$ chooses the sub-codebook that user $k$ uses to encode its message $W_k \in \intseq{1}{M_k}$. The decoder, which possesses complete knowledge of the keys $S{{\bracksq{\calK}}}$, attempts to decode the messages sent by the $K$ transmitters. The idea underlying the scheme is to use channel resolvability techniques to ensure that the total number of codewords is sufficiently large to keep the warden confused, while simultaneously ensuring that each sub-codebook is small enough for the receiver to reliably decode the messages.
	\begin{proposition} \label{prop:achievability}
	    Let $\rhovector \eqdef \brackcurl{\rho_k}_{k \in \calK} \in [0,1]^K$ with $\sum_{k \in \calK} \rho_k = 1$. 
	    Let $\brackcurl{\alpha_n}_{n \in \mathbb{N}^*}$ be such that $\alpha_n \in \bracknorm{0,1}$, $\limn n \alpha_n = \infty$, and $\limn n \alpha_n^2 = 0$. 
	    For the channel model described in Section~\ref{sec:channelmodel}, for an arbitrary $\mu \in \bracknorm{0,1}$, there exist covert communication schemes such that $\forall k \in \calK$,
	    \begin{align}
	      & r_k = \bracknorm{1 - \mu}  \sqrt{\frac{2}{\chi{\bracknorm{\rhovector}}}}\rho_k\avgD{P_k}{\pz}, \label{eq:ach_1} \\
	      & s_k = \sqrt{\frac{2}{\chi{\bracknorm{\rhovector}}}} \rho_k \bracksq{\bracknorm{1 + \mu} \avgD{Q_k}{\qz} - \bracknorm{1 - \mu} \avgD{P_k}{\pz}}^+ , \label{eq:ach_2}  \\
	      & \limn P_e^n  = 0, \label{eq:ach_3} \\
	      & \limn \avgD{\qhatn}{\qzn} = 0. \label{eq:ach_4}
	    \end{align}
	\end{proposition}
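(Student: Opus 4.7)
The plan is random coding in which, for every user $k\in\calK$ and every pair $(w_k,\ell_k)\in\intseq{1}{M_k}\times\intseq{1}{L_k}$, a codeword $\bxn_k(w_k,\ell_k)$ is drawn independently according to the $n$-fold covert process $\Pi_{X_k}^{\pn}$ with $\Pi_{X_k}(1)=\rho_k\alpha_n$. Each user partitions its $M_kL_k$ codewords into $L_k$ sub-codebooks of size $M_k$; the key $S_k$ selects a sub-codebook, the message $W_k$ picks a codeword within, and the legitimate receiver, having full knowledge of $S{\bracksq{\calK}}$, performs joint-typicality decoding across the $K$ users. I would target $\log M_k=(1-\mu)n\alpha_n\rho_k\avgD{P_k}{\pz}$ and $\log(M_kL_k)=(1+\mu)n\alpha_n\rho_k\avgD{Q_k}{\qz}$ (with $\log L_k$ floored at $0$), and conclude by normalizing by $\sqrt{n\avgD{\qhatn}{\qzn}}\sim\alpha_n\sqrt{n\chi(\rhovector)/2}$, a relation that follows from Lemma~\ref{lem:covertprocess} once covertness is established, to recover~\eqref{eq:ach_1}--\eqref{eq:ach_2}.

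For covertness, I would use that the codebook-averaged induced distribution at the warden equals $\qaln$ and split
\begin{align*}
    \avgD{\qhatn}{\qzn}=\avgD{\qhatn}{\qaln}+\sum_{\zn}\qhatn(\zn)\log\frac{\qaln(\zn)}{\qzn(\zn)}.
\end{align*}
The first term is a MAC channel-resolvability gap, which I would bound by extending the $\chi^2$-based single-user argument of~\cite[Lemma 2]{bloch2016covert}: expanding the codebook-averaged $\chi^2$-divergence between $\qhatn$ and $\qaln$ yields, for every non-empty subset $\calT\subseteq\calK$ of users with ``colliding'' indices, a term that vanishes provided $\sum_{k\in\calT}\log(M_kL_k)\geq (1+\mu')n\avgI{X{\bracksq{\calT}};Z}$. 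By~\eqref{eq:lem2}, the right-hand side equals $(1+\mu')n\alpha_n\sum_{k\in\calT}\rho_k\avgD{Q_k}{\qz}+\bigO{n\alpha_n^2}$, and the chosen $\log(M_kL_k)$ satisfies all $2^K-1$ subset inequalities. The second term is controlled by Lemma~\ref{lem:covertprocess}, which gives $\avgD{\qaln}{\qzn}=n\avgD{\qal}{\qz}=\bigO{n\alpha_n^2}\to 0$, and the bound transfers from the $\qaln$-expectation to the $\qhatn$-expectation via Pinsker's inequality applied to $\avgD{\qhatn}{\qaln}$, which is already controlled.

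For reliability, conditionally on $S{\bracksq{\calK}}$ each user holds a sub-codebook of $M_k$ i.i.d.\ codewords, so the problem reduces to a classical MAC with joint-typicality decoding with respect to $\prod_{k\in\calK}\Pi_{X_k}$ and $W_{Y|X{\bracksq{\calK}}}$. A standard packing-lemma analysis yields vanishing average error whenever $\sum_{k\in\calT}\log M_k\leq(1-\mu')n\avgI{X{\bracksq{\calT}};Y}$ for every non-empty $\calT\subseteq\calK$; applying~\eqref{eq:lem2} once more converts this to $\sum_{k\in\calT}\log M_k\leq(1-\mu')n\alpha_n\sum_{k\in\calT}\rho_k\avgD{P_k}{\pz}+\bigO{n\alpha_n^2}$, which the chosen $\log M_k$ satisfy. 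The decisive feature, here and in the covertness step, is the additive decoupling~\eqref{eq:lem2}: the $2^K-1$ subset constraints on each side collapse to $K$ per-user constraints at leading order, which is what makes all sum-rate constraints disappear in the covert regime.

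The principal obstacle is the MAC-resolvability bound in the low-weight regime: one must produce a $\chi^2$-style second-moment estimate that is uniform across all non-empty subsets $\calT\subseteq\calK$ and sharp enough that the $\bigO{n\alpha_n^2}$ error terms from~\eqref{eq:lem2} are absorbed into the $\mu$ slack without disturbing the target scaling $\log(M_kL_k)=\Theta(n\alpha_n)$. The cross-terms between different subsets have to be tracked carefully, and the ``transfer'' step that replaces $\qhatn$-expectations by $\qaln$-expectations in the covertness decomposition requires a quantitative convergence argument beyond the mean identity $\mathbb{E}_\calC[\qhatn]=\qaln$.
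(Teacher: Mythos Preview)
Your architecture---random coding with the covert process $\Pi_{X_k}^{\pn}$, a reliability analysis yielding per-subset constraints $\sum_{k\in\calT}\log M_k\lesssim n\alpha_n\sum_{k\in\calT}\rho_k\avgD{P_k}{\pz}$, a resolvability analysis yielding $\sum_{k\in\calT}\log(M_kL_k)\gtrsim n\alpha_n\sum_{k\in\calT}\rho_k\avgD{Q_k}{\qz}$, the observation that these collapse to $K$ per-user constraints, a Markov/union step to extract one good code, and normalization via Lemma~\ref{lem:covertprocess}---is exactly the paper's. Two points of comparison are worth recording.

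First, a small slip: the standard MAC packing lemma gives $\sum_{k\in\calT}\log M_k\leq(1-\mu')n\,\avgI{X{\bracksq{\calT}};Y\mid X{\bracksq{\calT^c}}}$, with the \emph{conditional} mutual information, not the unconditional one you wrote. This is harmless here because $\avgI{X{\bracksq{\calT}};Y\mid X{\bracksq{\calT^c}}}=\avgI{X{\bracksq{\calK}};Y}-\avgI{X{\bracksq{\calT^c}};Y}=\sum_{k\in\calT}\rho_k\alpha_n\avgD{P_k}{\pz}+\bigO{\alpha_n^2}$ by~\eqref{eq:lem2}, so both quantities coincide at leading order; but the paper (correctly) works with the conditional form throughout.

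Second, the paper resolves the low-weight obstacle you flag not via joint typicality and a $\chi^2$ second-moment bound, but via information-density threshold decoding and a direct KL resolvability bound (Jensen on the codebook average of $\log(\qhatn/\qaln)$), in both cases driving the concentration with Bernstein's inequality applied to the i.i.d.\ sums $\sum_i\log\frac{W_{Y|X{\bracksq{\calK}}}}{W_{Y|X{\bracksq{\calT^c}}}}$ and $\sum_i\log\frac{W_{Z|X{\bracksq{\calT}}}}{\qal}$. The point is that the variance of each summand is $\bigO{\alpha_n}$ while the mean is $\Theta(\alpha_n)$, so Bernstein (unlike Hoeffding or a bare typicality count) yields exponents of order $n\alpha_n$, which is exactly what is needed to beat the polynomial prefactors and survive the final Markov step. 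Your $\chi^2$ route would also work, but the paper's choice sidesteps the cross-subset bookkeeping you anticipate and gives the exponential decay $\exp(-\xi n\alpha_n)$ in one stroke.
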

	\begin{proof}
    To prove Proposition~\ref{prop:achievability}, we rely on random coding arguments for channel reliability and channel resolvability. 
    However, the use of low-weight codewords in our communication scheme requires that we handle concentration inequalities carefully. Since basic concentration inequalities do not apply in the low-weight regime~\cite{bloch2016covert}, we use Bernstein's inequality to establish our random coding arguments. The proof follows otherwise along the lines of~\cite[Theorem 2]{bloch2016covert}. 

    \paragraph{Random codebook generation} At each transmitter $k \in \calK$, generate $M_kL_k$ codewords $\xn_k\bracknorm{m_k,\ell_k} \allowbreak \in \calX^n$, where $\bracknorm{m_k,\ell_k} \in \intseq{1}{M_k} \times \intseq{1}{L_k}$, independently at random according to the distribution $\Pi_{X_k}^{\pn}$.
    For a set $\calT \subset \calK$, define
		\begin{align}
			W^\pn_{Y|X{{\bracksq{\calT}}}}\bracknorm{\yn|\xn{{\bracksq{\calT}}}} \eqdef \sum_{\xn{{\bracksq{\calT^c}}}} W^{\pn}_{Y|X{{\bracksq{\calK}}}}\bracknorm{\yn|\xn{{\bracksq{\calK}}}} \bracknorm{\prod_{k \in \calT^c} \Pi_{X_k}^{\pn}\bracknorm{\xn_k} } \label{eq:26}.
		\end{align}
		Note that $W^\pn_{Y|X{{\bracksq{\calT}}}}$ is a product distribution since each user $k \in \calK$ generates its codeword according to an $n$-fold product distribution $\Pi_{X_k}^{\pn}$. 
		Also, note that if $\calT = \emptyset$, $W_{Y|X{{\bracksq{\calT}}}}^{\pn} = \paln$. 
		Define the set $\agamman{} \eqdef \bigcap_{\substack{\calT \subseteq \calK: \\\calT \neq \emptyset}} \agamman{_\calT}$ with 
		\begin{align}
			\agamman{_{\calT}} & \eqdef \Bigg\{  \bracknorm{\xn{{\bracksq{\calK}}}, \yn} \in \calX^n{{\bracksq{\calK}}} \times \calY^n:  \log \frac{W_{Y|X{{\bracksq{\calK}}}}^{\pn}\bracknorm{\yn|\xn{{\bracksq{\calK}}}}}{W^\pn_{Y|X{{\bracksq{\calT^c}}}}\bracknorm{\yn|\xn{{\bracksq{\calT^c}}}}} \geq  \gamma_{\calT} \Bigg\} , \label{eq:27}
		\end{align}
		where, for every non-empty set ${\calT \subseteq \calK}$,  $\gamma_\calT \eqdef \bracknorm{1 - \mu} n \avgI{X{\bracksq{\calT}}; Y | X {\bracksq{\calT^c}}}$ for an arbitrary $\mu \in (0,1)$. Encoder $k \in \calK$ uses the key $S_k = \ell_k$ to map the message $W_k = m_k$ onto the codeword $\xn_k \bracknorm{m_k,\ell_k}$. The codewords are then transmitted through the memoryless \ac{MAC} to the legitimate receiver. The decoder, who observes $\yn$ and has complete knowledge of the keys $\ell{{\bracksq{\calK}}}$, operates as follows. 
		\begin{itemize}
			\item If there exists a unique $m{{\bracksq{\calK}}} \in \times_{k=1}^K \intseq{1}{M_k}$ such that $ \bracknorm{ \xn_{\calK}\bracknorm{m{\bracksq{\calK}}, \ell{\bracksq{\calK}}}, \yn} \in \agamman{}$, output $\widehat{W}{{\bracksq{\calK}}} = m{{\bracksq{\calK}}}$,
			\item Else, declare a decoding error. 
		\end{itemize}
		\paragraph{Channel reliability analysis} The decoding error probability $P_e^n$ averaged over all random codebooks satisfies the following. 
		\begin{lemma} \label{lem:reliability}
			For any $\mu \in \bracknorm{0,1}$, an $n$ large enough, and
			\begin{align}
				\log M_k = \bracknorm{1 - \mu} \rho_k n \alpha_n \avgD{P_k}{\pz}, \label{eq:28}
			\end{align}
			for every $k \in \calK$, the probability of decoding error averaged over all random codebooks satisfies
			\begin{align}
				\E{P_e^n} \leq \exp \bracknorm{-\xi n \alpha_n}, \label{eq:29}
			\end{align}
			for an appropriate $\xi>0$.
		\end{lemma}
		The proof of Lemma~\ref{lem:reliability} is provided in {Appendix~\ref{sec:lemma_rel}.}
		\paragraph{Channel resolvability analysis} 
%		Based on the discussion in Section~\ref{sec:covertcommprocess}, we show that the distribution induced by the communication scheme is close to the covert stochastic process $\qaln$ in KL divergence. 
		In the following lemma, we show that the KL divergence between the induced distribution and the covert stochastic process averaged over all random codebooks vanishes in the limit. 
		\begin{lemma} \label{lem:resolvability}
			For any $\mu \in \bracknorm{0,1}$, an $n$ large enough, and
			\begin{align}
				\log M_k L_k  = \bracknorm{1 + \mu} \rho_k n  \alpha_n \avgD{Q_k}{\qz}, \label{eq:30}
			\end{align}
			for every $k \in \calK$, the KL divergence between $\qhatn$ and $\qaln$ averaged over all random codebooks satisfies
			\begin{align}
				\E{\avgD{\qhatn}{\qaln}} \leq \exp \bracknorm{-\xi n \alpha_n}, \label{eq:31}
			\end{align}
			for an appropriate $\xi>0$.
		\end{lemma}
		The proof of Lemma~\ref{lem:resolvability} is provided in {Appendix~\ref{sec:lemma_res}.}
		\paragraph{Identification of a specific code} Using Markov's inequality, we obtain
		\begin{align}
			& \P{P_e^n < 4 \E{P_e^n} \cap \avgD{\qhatn}{\qaln} < 4 \E{\avgD{\qhatn}{\qaln}}} \geq \frac{1}{2}. 
		\end{align}
%		We consider the probability that, for a particular coding scheme, the decoding error probability at the receiver and the KL divergence between $\qhatn$ and $\qaln$ are less than four times their respective expected values.
%		\begin{align}
%			& \P{P_e^n < 4 \E{P_e^n} \cap \avgD{\qhatn}{\qaln} < 4 \E{\avgD{\qhatn}{\qaln}}} \nex
%			& \dupspace = 1 - \P{P_e^n \geq 4 \E{P_e^n} \cup \avgD{\qhatn}{\qaln} \geq 4 \E{\avgD{\qhatn}{\qaln}}} \\
%			& \dupspace \geq 1 - \P{P_e^n \geq 4 \E{P_e^n} } - \P{ \avgD{\qhatn}{\qaln} \geq 4 \E{\avgD{\qhatn}{\qaln}}} \\
%			& \dupspace \stackrel{(a)}{\geq} \frac{1}{2}, \label{eq:32}
%		\end{align}
%		where, $(a)$ follows from Markov's inequality. 
	Then, we conclude that there must exist at least one coding scheme such that for appropriate constants $\xi_1, \xi_2 > 0$ and an $n$ large enough, we have
		\begin{align}
			P_e^n & \leq \exp \bracknorm{-\xi_1 n \alpha_n}, \label{eq:33} \\
			\avgD{\qhatn}{\qaln} & \leq \exp \bracknorm{-\xi_2 n \alpha_n}. \label{eq:34}
		\end{align}
		\begin{lemma} \label{lem:identificationcode}
		For $n$ large enough and an appropriate constant $\xi_3 > 0$, 
			\begin{align}
			\left| \avgD{\qhatn}{\qzn} - \avgD{\qaln}{\qzn} \right| \leq \exp \bracknorm{- \xi_3 n \alpha_n}, \label{eq:38}
			\end{align}
			provided $\avgD{\qhatn}{\qaln}$ satisfies~\eqref{eq:34}. 
		\end{lemma}	
		The proof of Lemma~\ref{lem:identificationcode} is provided in {Appendix~\ref{sec:proof-lemma-identification}.}
		Using~\eqref{eq:lem1},~\eqref{eq:33},~\eqref{eq:38}, and our choice of $\brackcurl{\alpha_n}_{n \in \mathbb{N}^*}$,                                                                                                           we conclude that there exists at least one coding scheme that satisfies~\eqref{eq:ach_3} and~\eqref{eq:ach_4}. 
%		\begin{align}
%			 & \limn P_e^n = 0, \label{eq:39} \\
%			 & \limn \avgD{\qhatn}{\qzn} = 0. \label{eq:40}
%		\end{align}
%		Let us now characterize the asymptotic scaling of $\brackcurl{\log M_k}_{k \in \calK}$ and $\brackcurl{\log L_k}_{k \in \calK}$ for the covert communication scheme that we just identified. 		
		Combining~\eqref{eq:lem1} and~\eqref{eq:38} yields
		\begin{align}
			&\frac{n \alpha_n^2}{2} \bracknorm{1 + \sqrt{\alpha_n}} \chi_n{\bracknorm{\rhovector}} + \exp \bracknorm{- \xi_3 n \alpha_n} \geq \avgD{\qhatn}{\qzn} \nex
			& \dupspace \dupspace \dupspace \dupspace \dupspace \dupspace \geq \frac{n \alpha_n^2}{2} \bracknorm{1 - \sqrt{\alpha_n}} \chi_n{\bracknorm{\rhovector}} - \exp \bracknorm{- \xi_3 n \alpha_n}. \label{eq:42} 
		\end{align}
%		\begin{align}
%			\avgD{\qhatn}{\qzn} & \leq \frac{n \alpha_n^2}{2} \bracknorm{1 + \sqrt{\alpha_n}} \chi_n{\bracknorm{\rhovector}} + \exp \bracknorm{- \xi_3 n \alpha_n}, \label{eq:41} \\
%			\avgD{\qhatn}{\qzn} & \geq \frac{n \alpha_n^2}{2} \bracknorm{1 - \sqrt{\alpha_n}} \chi_n{\bracknorm{\rhovector}} - \exp \bracknorm{- \xi_3 n \alpha_n}. \label{eq:42} 
%		\end{align}
		We normalize $\log M_k$, where $k \in \calK$, by $\sqrt{n \avgD{\qhatn}{\qzn}}$ using~\eqref{eq:28},~\eqref{eq:30}, and~\eqref{eq:42} to obtain
		\begin{align}
			\limn \frac{\log M_k}{\sqrt{n \avgD{\qhatn}{\qzn}}} & = \bracknorm{1 - \mu}\sqrt{ \frac{{2}}{{\chi{\bracknorm{\rhovector}}}}  }\rho_k \avgD{P_k}{\pz}, \label{eq:44_1} \\
			\limn \frac{\log M_kL_k}{\sqrt{n \avgD{\qhatn}{\qzn}}} & = \bracknorm{1+\mu} \sqrt{ \frac{{2}}{{\chi{\bracknorm{\rhovector}}}}}  \rho_k \avgD{Q_k}{\qz}. \label{eq:44_2}
		\end{align}
		{
		Combining~\eqref{eq:44_1} and~\eqref{eq:44_2}, we obtain
		\begin{align}
			\limn \frac{\log L_k}{\sqrt{n \avgD{\qhatn}{\qzn}}} = \sqrt{\frac{2}{\chi{\bracknorm{\rhovector}}}} \rho_k \bracksq{\bracknorm{1 + \mu} \avgD{Q_k}{\qz} - \bracknorm{1 - \mu} \avgD{P_k}{\pz}}^+. \label{eq:44_3}
		\end{align}
		}

	\end{proof}
	{
	Since $\mu$ in~\eqref{eq:ach_1} is arbitrary, we conclude from Proposition~\ref{prop:achievability} that the covert capacity region contains the region defined by
	\begin{align}
		\bigcup_{\{\rho_k\}_{k\in\calK}\in[0,1]^K:\sum_{k \in \calK} \rho_k = 1}\left\{\{r_k\}_{k\in\calK}:\forall k\in\calK,\quad r_k\leq \sqrt{\frac{2}{\chi\bracknorm{\rhovector}}} \rho_k \avgD{P_k}{\pz}\right\}.  \label{eq:region1}
	\end{align}
	In addition, any achievable covert throughput tuple $r\bracksq{\calK}$ that is characterized by a specific $\rhovector$ and lies on the boundary of the region defined in~\eqref{eq:region1} is associated with an achievable key throughput tuple $ \brackcurl{\sqrt{\frac{2}{\chi{\bracknorm{\rhovector}}}} \rho_k \bracksq{ \avgD{Q_k}{\qz} -  \avgD{P_k}{\pz}}^+}_{k \in \calK}$. 
	
%	$$
%	\bigcup_{\{\rho_k\}_{k\in\calK}\in[0,1]^K:\sum_{k \in \calK} \rho_k = 1}\left\{\{r_k\}_{k\in\calK}:\forall k\in\calK,\quad r_k\leq \sqrt{\frac{2}{\chi\bracknorm{\rhovector}}} \rho_k \avgD{P_k}{\pz}\right\}.
%	$$
	}

\subsection{Converse proof}
\label{sec:converse-proof}

	\begin{proposition} \label{prop:converse}
		For the channel model described in Section~\ref{sec:channelmodel},  consider a sequence of covert communication schemes with increasing blocklength $n \in \mathbb{N}^*$ characterized by $\epsilon_n \eqdef P_e^n$ and $\delta_n \eqdef \avgD{\qhatn}{\qzn}$ such that $\lim_{{n \to \infty}} \epsilon_n = 0$ and $\lim_{{n \to \infty}} \delta_n = 0$. 
	{
		Then, there exists a vector $\rhovector \eqdef \brackcurl{\rho_k}_{k \in \calK} \in \bracksq{0,1}^K$ with $\sum_{k \in \calK} \rho_k = 1$ and an infinite subset  $\calN \subseteq \mathbb{N}^*$, such that for all $k \in \calK$, 
%		Then, there exists a subsequence of codes with blocklength $n \in \calN$, where $\calN  \subseteq \mathbb{N}^*$ is an infinite set, and a vector $\rhovector \eqdef \brackcurl{\rho_k}_{k \in \calK} \in \bracksq{0,1}^K$ with $\sum_{k \in \calK} \rho_k = 1$, such that for all $k \in \calK$, 
		\begin{align}
			{\liminf_{\substack{n \to \infty \\ n \in \calN}}} \frac{\log M_k}{\sqrt{n \delta_n}} \leq \sqrt{\frac{2}{\chi{\bracknorm{\rhovector}}}} \rho_k \avgD{P_k}{\pz}. \label{eq:con1}
		\end{align}
		For a sequence of codes that achieves the right hand side of~\eqref{eq:con1} for all $k \in \calK$, we have
		\begin{align}
			{\limsup_{\substack{n \to \infty \\ n \in \calN}}} \frac{\log M_kL_k}{\sqrt{n \delta_n}} \geq  \sqrt{\frac{2}{\chi{\bracknorm{\rhovector}}}} \rho_k \avgD{Q_k}{\qz}, \label{eq:con2}
		\end{align} 
		for all $k \in \calK$. 
		}
	\end{proposition}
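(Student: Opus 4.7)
The plan is to adapt the single-user covert converse of~\cite{bloch2016covert} to the \ac{MAC} setting, with the essential new difficulty being the correct handling of the warden's output divergence which, unlike the naive assumption in~\cite{arumugam2016keyless}, does not decompose into a sum of per-user contributions. The argument will have three main ingredients: a user-wise Fano-type reliability bound, a quadratic Taylor expansion of $\avgD{\qhatn}{\qzn}$ that produces $\chi{\bracknorm{\rhovector}}$ as a cross-user quadratic form on the simplex, and a point-to-point-style lower bound on $I(\bxn_k; \bzn \mid \bxn{\bracksq{\calK \setminus \{k\}}})$ for the key throughput.

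First I would set up weight parameters: define $\alpha_{n,k}^{(i)} \eqdef \Pr(X_{k,i}=1)$, where the probability is taken over uniform $(W_k, S_k)$, and set $\alpha_{n,k} \eqdef \tfrac{1}{n}\sum_i \alpha_{n,k}^{(i)}$, $\alpha_n \eqdef \sum_{k \in \calK} \alpha_{n,k}$, and $\rho_{n,k} \eqdef \alpha_{n,k}/\alpha_n$. Passing to a subsequence along which every $\rho_{n,k}$ converges to some $\rho_k$ with $\sum_k \rho_k = 1$ (covertness will force $\alpha_n \to 0$, so this is the relevant regime), Fano's inequality applied user by user while conditioning on $S{\bracksq{\calK}}$ and $W{\bracksq{\calK \setminus \{k\}}}$ yields $\log M_k \leq (1-\epsilon_n)^{-1}\bracknorm{I(\bxn_k; \byn \mid \bxn{\bracksq{\calK \setminus \{k\}}}) + 1}$. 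Single-letterizing via a uniform time-sharing variable and applying~\eqref{eq:lem2} with $\calT = \{k\}$ at each time step then gives $\log M_k \leq n\alpha_{n,k}\avgD{P_k}{\pz}(1+o(1))$.

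Next I would lower-bound $\delta_n$. The chain rule combined with the memoryless warden channel yields $\delta_n \geq \sum_{i=1}^n \avgD{\widehat{Q}_i}{\qz}$, where $\widehat{Q}_i$ is the $i$-th marginal of $\qhatn$. Because the encoders are mutually independent, the input distribution at time $i$ factors as $\prod_{k \in \calK} \bar{P}_{X_{k,i}}$, and hence $\widehat{Q}_i$ admits the representation~\eqref{eq:13} with per-time-step weights $\{\alpha_{n,k}^{(i)}\}$; the lower bound in~\eqref{eq:lem1} then gives $\avgD{\widehat{Q}_i}{\qz} \geq \tfrac{1}{2}(\alpha_n^{(i)})^2 \chi{\bracknorm{\rho_n^{(i)}}}(1-o(1))$, and Jensen's inequality applied to the jointly convex map in the weight coordinates collapses the per-time-step sum to $\delta_n \geq \tfrac{n\alpha_n^2}{2}\chi{\bracknorm{\rhovector}}(1-o(1))$. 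Dividing the reliability bound by $\sqrt{n\delta_n}$ produces~\eqref{eq:con1}.

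The main obstacle lies precisely in this third step: because the warden observes a genuine \ac{MAC} output, $\avgD{\widehat{Q}_i}{\qz}$ does not split as $\sum_k \avgD{\widehat{Q}_{i,k}}{\qz}$---the erroneous decomposition implicit in~\cite{arumugam2016keyless}---and one must instead carry the cross-user quadratic terms arising from~\eqref{eq:13} and recognize them as $\chi{\bracknorm{\rhovector}}$ on the simplex $\{\sum_{k \in \calK}\rho_k=1\}$. For~\eqref{eq:con2}, I would combine $\log M_kL_k = H(W_k, S_k) \geq H(\bxn_k) \geq I(\bxn_k; \bzn \mid \bxn{\bracksq{\calK \setminus \{k\}}})$, where the last inequality uses the mutual independence of the encoders, with a lower bound on the conditional mutual information obtained by fixing $\bxn{\bracksq{\calK \setminus \{k\}}} = \bar{x}$, noting that $\bar{x}$ is the all-zero vector with probability $1 - O(n\alpha_n)$, and applying~\eqref{eq:lem2} to the resulting point-to-point channel; this yields $\log M_kL_k \geq n\alpha_{n,k}\avgD{Q_k}{\qz}(1-o(1))$, and substituting the covertness-saturating relation $\alpha_n = \sqrt{2\delta_n/(n\chi{\bracknorm{\rhovector}})}(1+o(1))$ forced by equality in~\eqref{eq:con1} then yields~\eqref{eq:con2}.
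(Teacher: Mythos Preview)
Your treatment of~\eqref{eq:con1} is essentially the paper's argument. The paper also single-letterizes the Fano bound to obtain $\log M_k \leq \sum_j \mun{kj}\avgD{P_k}{\pz}(1+o(1))$, lower-bounds $\delta_n \geq \sum_j \avgD{\qhat_j}{\qz}$, and Taylor-expands each summand. Where you invoke Jensen on the quadratic form in the per-time weights, the paper uses the equivalent Cauchy--Schwarz step $n\sum_j (\Psin{j}(z))^2 \geq (\sum_j \Psin{j}(z))^2$; the two are the same inequality. One point you gloss over: before any Taylor expansion you need $\alpha_{n,k}^{(i)} \to 0$ for \emph{each} $(k,i)$, not just on average; the paper extracts this from $\avgD{\qhat_j}{\qz}\to 0$ together with the assumption that $\qz$ is not a convex combination of the $Q_\calU$'s.

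Your argument for~\eqref{eq:con2} has a genuine gap. You write $\log M_kL_k \geq I(\bxn_k;\bzn\mid \bxn\bracksq{\calK\setminus\{k\}})$ and then claim that the conditioning sequence is the all-zero block with probability $1-O(n\alpha_n)$. This is false: the probability that every other user sends $0$ in every one of the $n$ slots is roughly $\exp\bigl(-\sum_{k'\neq k}\sum_j \mun{k'j}\bigr)$, which is $\exp(-\Theta(n\alpha_n))\to 0$ since $n\alpha_n\to\infty$ in any regime with nonzero throughput. Even if you try to repair this by conditioning on a typical nonzero $\bar x$ and arguing that the $O(n\alpha_n)$ polluted coordinates contribute only $O(n\alpha_n^2)$, you still face the difficulty that $\bxn_k$ is a code, not an i.i.d.\ process, so $H(\bzn\mid \bar x)$ does not lower-bound by a sum of single-letter entropies; the natural single-letterization goes the wrong way for a lower bound.

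The paper avoids this by working with the \emph{unconditional} quantity $I(\bxn_k;\bzn)$ and the chain-rule split $I(\bxn_k;\bzn)=I(\bxn\bracksq{\calK};\bzn)-I(\bxn\bracksq{\calK\setminus\{k\}};\bzn\mid \bxn_k)$. The first term is handled by rewriting the log-ratio against $\qzn$ instead of $\qhatn$, which produces a product-form sum minus $\delta_n$ and hence single-letterizes exactly; the second term is upper-bounded by single-letterization (which now goes the right way) and a careful expansion of $\avgD{W_{Z_j\mid X_{kj}=1}}{\qz}$ around $\avgD{Q_k}{\qz}$. This decomposition, not a ``with high probability the other users are silent'' argument, is what makes the key-throughput converse go through.
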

	\begin{proof}
{
Consider a sequence of covert communication schemes with increasing blocklength $n$ characterized by $\epsilon_n \eqdef P_e^n$ and $\delta_n \eqdef \avgD{\qhatn}{\qzn}$, and $\log M_k$ takes the maximum value such that $\limn \log M_k = \infty$ for all $k \in \calK$.
Each user $k$ transmits an $n-$length codeword ${\bxn}_k = \bracknorm{{X}_{k1}, {X}_{k2}, \ldots, {X}_{kn}} \in \calX^n$, where $n \in \mathbb{N}^*$, to the receiver. }
%For $j \in \intseq{1}{n}$, we denote the distribution of each symbol $\overline{X}_{kj}$ on $\calX$ by $\Pi_{\overline{X}_{kj}}$, where
%	\begin{align}
%		& \Pi_{\overline{X}_{kj}}(x) \eqdef \frac{ \sum_{m_k = 1}^{M_k} \sum_{\ell_k = 1}^{L_k} \indic{\overline{X}_{kj}(m_{k}, \ell_k)=x} }{M_kL_k}. \label{eq:72_1}
%	\end{align}
%	We define $\Pi_{\overline{X}_{kj}}(1) = 1 - \Pi_{\overline{X}_{kj}}(0) \eqdef \nu_{kj}^{(n)}$. 
%	Note that $\nu_{kj}^{(n)}$ depends on $n$, the transmitter index $k$, and the symbol position $j$. 
%
%	For each user $k$, we create the new codeword ${\bxn}_k = \bracknorm{{X}_{k1}, {X}_{k2}, \ldots, {X}_{kn}} \in \calX^n$ by permuting the corresponding codeword ${\overline{\bxn}_{k}}$ according to the same publicly known permutation of $\intseq{1}{n}$ which we precisely define later. 
%	, where we define $\pi_{k^*}^{(n)}$ as the permutation of $\intseq{1}{n}$ which satisfies
%	\begin{align}
%		\bracknorm{k^*, 1} = \arg \max_{\bracknorm{k,j} \in \calK \times \intseq{1}{n} } \nu_{kj}^{(n)}.  \label{eq:79_6}
%	\end{align}
%	Since the permutation is publicly known and since the channel is memoryless, permuting the codewords does not change the performance of the code. 
%	Henceforth, we only consider the sequence of appropriately permuted codes. 
%	
%	We denote the permuted codeword corresponding to the codeword of user $k$ by $\bxn_k = \bracknorm{X_{k1}, X_{k2}, \ldots, X_{kn}} \in \calX^n$, where $n \in \mathbb{N}^*$. }
	For $j \in \intseq{1}{n}$, we denote the distribution of each symbol $X_{kj}$ on $\calX$ by $\Pi_{X_{kj}}$, where
		\begin{align}
			& \Pi_{X_{kj}}(x) \eqdef \frac{ \sum_{m_k = 1}^{M_k} \sum_{\ell_k = 1}^{L_k} \indic{X_{kj}(m_{k}, \ell_k)=x} }{M_kL_k}. \label{eq:72}
	\end{align}
	We define $\Pi_{X_{kj}}(1) = 1 - \Pi_{X_{kj}}(0) \eqdef \mun{kj}$.  
%		& \Pi_{X_{kj}}(0) \eqdef \frac{ \sum\limits_{m_k = 1}^{M_k} \sum\limits_{\ell_k = 1}^{L_k} \indic{X_{kj}(m_{k}, \ell_k)=0}}{M_kL_k} = 1 - \mun{kj}. \label{eq:73}
	Note that $\mun{kj}$ depends on $n$, the transmitter index $k$, and the symbol position $j$. 
%	Note that $\mun{k^*1} \geq \mun{kj} $ for 
{
	For every $n \in \mathbb{N}^*$, we define a permutation $\pi_{k^*}^{(n)}$ of $\intseq{1}{n}$ to define a new code such that 
	\begin{align}
		\bracknorm{k^*, 1} = \arg \max_{\bracknorm{k,j} \in \calK \times \intseq{1}{n} } \mun{kj}.  \label{eq:79_6}
	\end{align}
	Since the channel is memoryless, the performance of the new code that satisfies~\eqref{eq:79_6} matches that of the original code. 
	Hence, without loss of generality, we only study the sequence of codes for which~\eqref{eq:79_6} holds for every $n \in \mathbb{N}^*$. 	
%	Now, we formally define the permutation used to create the new codewords $\brackcurl{\bxn_k}_{k \in \calK}$ from $\brackcurl{\overline{\bxn}_k}_{k \in \calK}$ by $\pi_{k^*}^{(n)}$, which is a permutation of $\intseq{1}{n}$ that satisfies
%	Consequently, we have, for any $k \in \calK$,  $\mun{k^*1} \geq \mun{kj}$ for all $j \in \intseq{1}{n}$. 
	Note that the sequence $\brackcurl{\brackcurl{\smash{\mun{k1}}}_{k \in \calK}}_{n \in \mathbb{N}^*}$ belongs to $\bracksq{0,1}^K$ which is a closed and bounded set. 
	Hence, we can extract a convergent subsequence $\brackcurl{\brackcurl{\smash{\mun{k1}}}_{k \in \calK}}_{n \in \calN^*}$, where $\calN^* \subseteq \mathbb{N}^* $ is an infinite set, with limit $\brackcurl{\smash{\mu_{k1}^*}}_{k \in \calK}$. 
	Let us now assume that the sequence $\brackcurl{\smash{\mu_{k1}^*}}_{k \in \calK} \in \bracksq{0,1}^K$ is not an all-zero sequence. 
	}
	
%	Associated to this subsequence is a sequence of codes with at least the same performance as the corresponding unpermuted sequence of codes since the channel is memoryless. 
%	Also, since the permutation is public and can be reversed at the receiver, we maintain $\limn \epsilon_n = 0$. 
	For $j \in \intseq{1}{n}$, we denote the $K$-length vector $\brackcurl{x_{kj}}_{k \in \calK} $ by $x_{\bracknorm{j}}{\bracksq{\calK}}$. The warden makes an observation $\bzn$ of length $n$, whose distribution is denoted by $\qhatn$. For $j \in \intseq{1}{n}$, we denote the distribution of each component $Z_j$ of $\bzn$ by $\qhat_j$, where
	\begin{align}
		\qhat_j(z) & \eqdef \frac{1}{\prod_{k \in \calK} M_k L_k}  \sum\limits_{m{\bracksq{\calK}}} \sum\limits_{\ell{\bracksq{\calK}}} W_{Z|X{\bracksq{\calK}}}(z|\brackcurl{x_{kj}(m_k, \ell_k)}_{k \in \calK} ) \\
		& =  \sum_{x_{(j)}{\bracksq{\calK}}} \bracknorm{\prod_{k \in \calK} \Pi_{X_{kj}}(x_{kj}) } W_{Z|X{\bracksq{\calK}}} \bracknorm{z|x_{(j)}{\bracksq{\calK}} } \\
		& \stackrel{(a)}{=} \sum_{\calT \subseteq \calK}  \bracknorm{\prod_{k \in \calT} \mun{kj}} \bracknorm{\prod_{k \in \calT^c} \bracknorm{1 - \mun{kj}}} Q_{\calT}(z), \label{eq:74}
%		& = \qz(z) + \sum_{\substack{\calT \subseteq \calK: \\ \calT \neq \emptyset}} \bracknorm{\prod_{k \in \calT} \mun{kj} } G_{\calT}(z) \\
%		& =  \sum_{\substack{\calT \subseteq \calK: \\ \calT \neq  \emptyset}} \bracknorm{\prod_{k \in \calT} \mun{kj}} \bracknorm{\prod_{k \in \calT^c} \bracknorm{1 - \mun{kj}}}  Q_{\calT}(z)  + \bracknorm{\prod_{k \in \calK} \bracknorm{1 - \mun{kj}}} \qz(z), \label{eq:74}
	\end{align}
	{where $(a)$ follows from the definition of $Q_\calT(z) \eqdef W_{Z|X{{\bracksq{\calK}}}}(z|x_\calT)$ in~\eqref{eq:1}.} 
	{Alternatively, using Lemma~\ref{lem:altrepresentation} in the appendix, we write }
	\begin{align}
		\qhat_j(z) &= \qz(z) + \sum_{\substack{\calT \subseteq \calK: \\ \calT \neq \emptyset}} \bracknorm{\prod_{k \in \calT} \mun{kj} } G_{\calT}(z). \label{eq:74_1}
	\end{align}
	From the definition of $\delta_n$, we have
	\begin{align}
		\delta_n &= \avgD{\qhatn}{\qzn} \\
		&= -\avgH{\bzn} + \E[\qhatn]{\log \frac{1}{\qzn(\bzn)}} \\ 
		& = - \bracknorm{ \sum_{j=1}^n \avgH{Z_j|\bzn^{\, j-1}}} + \E[\qhatn]{\sum_{j=1}^n \log \frac{1}{\qz(Z_j)}}  \\ 
		&\stackrel{(a)}{\geq} \sum_{j=1}^n \bracknorm{-\avgH{Z_j} + \E[\qhat_j]{\log \frac{1}{\qz(Z_j)}}} \\
		& = \sum_{j=1}^n \avgD{\qhat_j}{\qz},\label{eq:75}
	\end{align}
	where $(a)$ follows from the fact that conditioning reduces entropy.
	Since $\limn \delta_n=0$ and KL divergence is non-negative, it follows from~\eqref{eq:75} that 
	\begin{align}
		\limn \avgD{\qhat_j}{\qz} = 0, \label{eq:76}
	\end{align}
	for all $j \in \intseq{1}{n}$. 
	Applying Pinsker's inequality on~\eqref{eq:76}, we obtain $\limn \V{\qhat_j , \qz} =0$, which implies that $\forall z \in \calZ$,
	{
	\begin{align}
		\limn \abs{\qhat_j(z) - \qz(z)} &= 0, \label{eq:79_2}  \\
		\limn \qhat_j(z) &= \qz(z). \label{eq:79_4}
%		\limn \abs{\sum_{\substack{\calT \subseteq \calK: \\ \calT \neq  \emptyset}} \bracknorm{\prod_{k \in \calT} \mun{k1}} \bracknorm{\prod_{k \in \calT^c} \bracknorm{1 - \mun{k1}}}  Q_{\calT}(z)  - \bracknorm{1 - \prod_{k \in \calK} \bracknorm{1 - \mun{k1}}} \qz(z)} & = 0, \\
%	\limn \bracknorm{ \bracknorm{1 - \prod_{k \in \calK} \bracknorm{1 - \mun{k1}}} \abs{\sum_{\substack{\calT \subseteq \calK: \\ \calT \neq  \emptyset}} \frac{\bracknorm{\prod_{k \in \calT} \mun{k1}} \bracknorm{\prod_{k \in \calT^c} \bracknorm{1 - \mun{k1}}}}{\bracknorm{1 - \prod_{k \in \calK} \bracknorm{1 - \mun{k1}}}}  Q_{\calT}(z)  -  \qz(z)}} & = 0. \label{eq:79}
	\end{align} 
	Fixing $j = 1$ and by using~\eqref{eq:74} and~\eqref{eq:79_4}, for $n \in \calN^*$, we obtain
	\begin{align}
		\lim_{\substack{n \to \infty \\ n \in \calN^*}} \bracknorm{\sum_{\calT \subseteq \calK}  \bracknorm{\prod_{k \in \calT} \mun{k1}} \bracknorm{\prod_{k \in \calT^c} \bracknorm{1 - \mun{k1}}} Q_{\calT}(z)} & = \qz(z), \\
		\sum_{\calT \subseteq \calK}  \bracknorm{\prod_{k \in \calT} \mu_{k1}^*} \bracknorm{\prod_{k \in \calT^c} \bracknorm{1 - \mu_{k1}^*}} Q_{\calT}(z) & = \qz(z).  \label{eq:79_5}
	\end{align}
	Since we assumed that the sequence $\brackcurl{\mu_{k1}^*}_{k \in \calK}$ is not an all-zero sequence,~\eqref{eq:79_5} implies that $\qz$ is a convex combination of $\brackcurl{Q_{\calT}}_{\calT \subseteq \calK: \, \calT \neq \emptyset}$. 
	Note that the convex combination in~\eqref{eq:79_5} does not require the transmitters to coordinate, which is the case in our channel model, since the input from each user is independent of the inputs from other users. 
	Since~\eqref{eq:79_5} contradicts the assumption made in Section~\ref{sec:channelmodel}, our assumption about $\brackcurl{\mu_{k1}^*}_{k \in \calK}$ is incorrect, and we have 
	\begin{align}
		 \lim_{\substack{n \to \infty \\ n \in \calN^*}} \mu_{k1}^{(n)} = 0, 
	\end{align}
	for all $k \in \calK$, which implies that 
	\begin{align}
		\lim_{\substack{n \to \infty \\ n \in \calN^*}} \mu_{k^*1}^{(n)} = 0. \label{eq:79_7}
	\end{align}
	Subsequently, from~\eqref{eq:79_6} and~\eqref{eq:79_7}, we obtain
	\begin{align}
		\lim_{\substack{n \to \infty \\ n \in \calN^*}} \mu_{kj}^{(n)} = 0, 
	\end{align}
	for all $\bracknorm{k,j} \in \calK \times \intseq{1}{n}$. 
	Henceforth, we only consider the subsequence of codes with blocklength $n \in \calN^*$. 
%	, but we omit mentioning $n \in \calN^*$ for simplicity.
	}
{	Next, for $j \in \intseq{1}{n}$, define 
	\begin{align}
	    \Psin{j}(z) \eqdef \qhat_j(z) - \qz(z).     \label{eq:79_8}
	\end{align}
	Note that $\sum_z \Psin{j}(z) = 0$. }
	{Also note that from~\eqref{eq:79_2} and~\eqref{eq:79_8}, we have $\limn \Psin{j}(z)=0$ for all $j \in \intseq{1}{n}$ and $\forall z \in \calZ$. }
	We lower bound $\avgD{\qhat_j}{\qz}$ for $n$ large enough by
	\begin{align}
		\avgD{\qhat_j}{\qz} & = \sum_z \qhat_j(z) \log \frac{\qhat_j(z)}{\qz(z)}  \\
		&= \sum_z \qz(z) \bracknorm{1+\frac{\Psin{j}(z)}{\qz(z)}} \log \bracknorm{1 + \frac{\Psin{j}(z)}{\qz(z)} }  \\ 
		& \stackrel{(a)}{\geq} \sum_z \bracknorm{ \frac{\bracknorm{\Psin{j}(z)}^2}{2 \qz(z)} - \frac{\bracknorm{\Psin{j}(z)}^3}{2\qz^2(z)}} + \sum_{z: \Psin{j}(z)<0} \frac{2 \bracknorm{\Psin{j}(z)}^3}{3\qz^2(z)} \\
		& \geq \sum_z \frac{\bracknorm{\Psin{j}(z)}^2}{2 \qz(z)}  \bracknorm{1 - \frac{\Psin{j}(z)}{\qz(z)} - \frac{4 \left| \Psin{j}(z) \right|}{3 \qz(z)} }, \label{eq:84}
	\end{align}
	where $(a)$ follows from the inequality $\log \bracknorm{1+x} > x - \frac{x^2}{2}$ for $x \geq 0$ and\footnote{{Note that for $n$ large enough, we can ensure that $\Psin{j}(z) \in \bracksq{-\frac{1}{2}, 0}$ if $\Psin{j}(z)< 0$ since $\limn \Psin{j}(z)=0$.}} $\log \bracknorm{1+x} > x - \frac{x^2}{2} + \frac{2x^3}{3}$ for $x \in \bracksq{-\frac{1}{2}, 0}$. 
	For $j \in \intseq{1}{n}$, define $\xijn(z) \eqdef \frac{\Psin{j}(z)}{\qz(z)} + \frac{4 \left| \Psin{j}(z) \right|}{3 \qz(z)}$ and $\xin(z) \eqdef \max_{j \in \intseq{1}{n}} \xijn(z)$.
	{Since $\limn \Psin{j}(z)=0$, we have $\limn \xijn(z) = 0$ for all $j \in \intseq{1}{n}$. 
	From~\eqref{eq:74_1} and~\eqref{eq:79_8}, for $j \in \intseq{1}{n}$, we write
	\begin{align}
	    \abs{\Psin{j}(z)} & = \abs{\qhat_{j}(z) - \qz(z)} \\
	    & \leq \sum_{\calT \subseteq \calK: \calT \neq \emptyset} \bracknorm{\prod_{k \in \calT} \mun{kj} } \abs{G_\calT(z)} \\
	    & \stackrel{(a)}{\leq} \mun{k^*1} \bracknorm{ \sum_{\calT \subseteq \calK: \calT \neq \emptyset} \abs{G_\calT(z)}},  \label{eq:84_1}
	\end{align}
	where $(a)$ follows from~\eqref{eq:79_6} and the fact that $\mun{kj} \in [0,1]$ for all $k \in \calK$ and $j \in \intseq{1}{n}$. 
	Note that the term inside the parentheses in~\eqref{eq:84_1} is positive and bounded. 
	Consequently, for $z \in \calZ$, 
	\begin{align}
	    \max_{j \in \intseq{1}{n}} \abs{\Psin{j}(z)} \leq \mun{k^*1} \bracknorm{ \sum_{\calT \subseteq \calK: \calT \neq \emptyset} \abs{G_\calT(z)}}. 
	\end{align}
	From the definition of $\xijn(z)$, we have
	\begin{align}
	    \xijn(z) & = \frac{\Psin{j}(z)}{\qz(z)} + \frac{4\abs{\Psin{j}(z)}}{3 \qz(z)} \\
	    & \leq \frac{\abs{\Psin{j}(z)}}{\qz(z)} + \frac{4\abs{\Psin{j}(z)}}{3 \qz(z)} \\
	    & = \frac{7\abs{\Psin{j}(z)}}{3\qz(z)}. 
	\end{align}
	Consequently, we have
	\begin{align}
	    \xin(z) & = \max_{j \in \intseq{1}{n}} \xijn(z) \\
	    & \leq \frac{7}{3\qz(z)} \max_{j \in \intseq{1}{n}} \abs{\Psin{j}(z)} \\
	    & \leq \frac{7}{3\qz(z)} \mun{k^*1} \bracknorm{ \sum_{\calT \subseteq \calK: \calT \neq \emptyset} \abs{G_\calT(z)}}. \label{eq:84_2}
	\end{align}
	Note that, by definition, $\xijn(z)$ is non-negative irrespective of the sign of $\Psin{j}(z)$. 
	Then, using~\eqref{eq:79_7} and~\eqref{eq:84_2}, we conclude that $\lim_{\substack{n \to \infty \\ n \in \calN^*}} \xin(z) = 0$. }
	Using~\eqref{eq:84}, we lower bound~\eqref{eq:75} by
	\begin{align}
		\delta_n &\geq \sum_{j = 1}^n  \sum_z \frac{\bracknorm{\Psin{j}(z)}^2}{2 \qz(z)}  \bracknorm{1 - \xijn(z)} \\
		& \geq \sum_z \frac{ \bracknorm{1 - \xin(z)}}{2 \qz(z)} \sum_{j = 1}^n \bracknorm{\Psin{j}(z)}^2. \label{eq:85}
	\end{align}
	For $k \in \calK$, we upper bound $\log M_k$ using standard techniques, 
	\begin{align}
		\log M_k 
%		& = \avgH{W_k} \\
%		& = \avgI{W_k; \byn S_k} + \avgH{W_k|\byn S_k} \\
		& \stackrel{(a)}{\leq} \avgI{W_k;\byn S_k} + \Hb{\epsilon_n} + \epsilon_n \log M_k \\
%		& = \avgI{W_k; \byn | S_k} + \Hb{\epsilon_n} + \epsilon_n \log M_k \\
		& \leq \avgI{W_k S_k; \byn} + \Hb{\epsilon_n} + \epsilon_n \log M_k \\ 
		& = \avgI{\bxn_k; \byn} + \Hb{\epsilon_n} + \epsilon_n \log M_k  \label{oeq:con11_a} \\
		& = \avgH{\bxn_k} - \avgH{\bxn_k|\byn} + \Hb{\epsilon_n} + \epsilon_n \log M_k \\
		& \stackrel{(b)}{\leq} \avgH{\bxn_k|\bxn\bracksq{\calK \setminus\brackcurl{k}}} - \avgH{\bxn_k|\byn \bxn\bracksq{\calK \setminus \brackcurl{k}}} + \Hb{\epsilon_n} + \epsilon_n \log M_k \\ 
		& = \avgI{\bxn_k; \byn | \bxn\bracksq{\calK \setminus \brackcurl{k}}} + \Hb{\epsilon_n} + \epsilon_n \log M_k \\
		& = \avgH{\byn | \bxn\bracksq{\calK \setminus \brackcurl{k}}} - \avgH{\byn|\bxn{\bracksq{\calK}}} + \Hb{\epsilon_n} + \epsilon_n \log M_k  \\
		& \stackrel{(c)}{\leq} \sum_{j=1}^n \avgH{Y_j|X_{(j)}\bracksq{\calK \setminus \brackcurl{k}}}  - \sum_{j=1}^n \avgH{Y_j|X_{(j)}{\bracksq{\calK}}} + \Hb{\epsilon_n} + \epsilon_n \log M_k \\
		& = \sum_{j=1}^n \avgI{X_{kj}; Y_j |X_{(j)} \bracksq{\calK \setminus \brackcurl{k}} } + \Hb{\epsilon_n}  + \epsilon_n \log M_k, \label{eq:86}
	\end{align}
	where $(a)$ follows from Fano's inequality, $(b)$ follows from the fact that $\bxn_k$ and $\bxn{\bracksq{\calK\! \setminus\! {\brackcurl{k}}}}$ are mutually independent and the fact that conditioning reduces entropy, and $(c)$ follows from the fact that conditioning reduces entropy and  the memoryless property of the channel. 
	We expand the mutual information term in~\eqref{eq:86} as
	\begin{align}
		& \avgI{X_{kj}; Y_j |X_{\bracknorm{j}} \bracksq{\calK \setminus \brackcurl{k}} } \nex
		& =  \sum_{\calT \subseteq \calK}  \bracknorm{\prod_{i \in \calK} \Pi_{X_{ij}} \bracknorm{x_{\calT,i}}} \avgD{P_\calT}{W_{Y_j|X_{\bracknorm{j}}\bracksq{\calK \setminus \brackcurl{k}}=x_{\calT} \bracksq{\calK \setminus \brackcurl{k}}}  }   \\
		& = \sum_{\calT \subseteq \calK}  \bracknorm{\prod_{i \in \calK} \Pi_{X_{ij}} \bracknorm{x_{\calT,i}}} \avgD{P_\calT}{\pz} \nex
		& \dupspace - \sum_y \sum_{\calT \subseteq \calK}  \bracknorm{\prod_{i \in \calK} \Pi_{X_{ij}} \bracknorm{x_{\calT,i}}}P_{\calT}\bracknorm{y}  \log \frac{W_{Y_j|X_{\bracknorm{j}}\bracksq{\calK \setminus \brackcurl{k}}} \bracknorm{y|x_{\calT} \bracksq{\calK \setminus \brackcurl{k}}} }{\pz(y)}. \label{eq:87}
	\end{align} 
	Defining {$\mun{\max} \eqdef \mun{k^*1} $} and $d_1 \eqdef 2^K \max_{\substack{\calT \subseteq \calK: \card{\calT}>1}} \avgD{P_\calT}{\pz} $, we upper bound the first term in~\eqref{eq:87} by
	\begin{align}
		& \sum_{\calT \subseteq \calK}  \bracknorm{\prod_{i \in \calK} \Pi_{X_{ij}} \bracknorm{x_{\calT,i}}} \avgD{P_\calT}{\pz} \nex 
		 &  \stackrel{(a)}{=} \sum_{\substack{\calT \subseteq \calK:\\ \abs{\calT}>1}}  \bracknorm{\prod_{i \in \calT}\mun{ij}} \bracknorm{\prod_{i \in \calT^c} \bracknorm{1 - \mun{ij}} } \avgD{P_\calT}{\pz}  +  \sum_{i \in \calK} \mun{ij} \bracknorm{ \prod_{i' \in \calK \setminus \brackcurl{i}} \bracknorm{1 - \mun{i'j}} } \avgD{P_i}{\pz}  \\
		 &  \stackrel{(b)}{\leq}  d_1 \mun{\max}{ \sum_{i \in \calK} \mun{ij} }  + \sum_{i \in \calK} \mun{ij} \avgD{P_{i}}{\pz} , \label{eq:88}
	\end{align}
	where $(a)$ follows from splitting the sum into two based on the number of $1'$s in $x_\calT$, and $(b)$ follows from the fact that $ \!\bracknorm{\!1 \!- \!\mun{i'j}\!}\! \leq \!1$ for all $\bracknorm{i',j}\! \in\! \calK \times \intseq{1}{n}$.  
%	By the definitions of $\mun{\max}$ and $\mun{\min}$, it follows that $\limn \mun{\max} =0, \limn \mun{\min} = 0$. 
	Defining $d_2 \!\eqdef\! 2^K\!\!\! \max\limits_{i \in \calK \setminus \brackcurl{k}} \!\!\avgD{W_{Y_j|X_{\bracknorm{j}}\bracksq{\calK \setminus\brackcurl{k}} = x_{i} \bracksq{\calK \setminus\brackcurl{k}}}  }{\pz}$,  we lower bound the second term in~\eqref{eq:87} by 
		\begin{align}
		& \sum_y \sum_{\calT \subseteq \calK}  \bracknorm{\prod_{i \in \calK} \Pi_{X_{ij}} \bracknorm{x_{\calT,i}}}P_{\calT}\bracknorm{y} \log \frac{W_{Y_j|X_{\bracknorm{j}}\bracksq{\calK \setminus \brackcurl{k}}} \bracknorm{y|x_{\calT} \bracksq{\calK \setminus \brackcurl{k}}} }{\pz(y)} \nex
		& \dupspace =\!\!\sum_{\calT \subseteq \calK \setminus \brackcurl{k}} \!\!\!\bracknorm{\prod_{i \in \calK \setminus \brackcurl{k}} \Pi_{X_{ij}}(x_{\calT,i}) }\!\!\sum_y \sum_{x} \Pi_{X_{kj}}(x) W_{Y_j|X_{\bracknorm{j}}\bracksq{\calK \setminus \brackcurl{k}}X_{kj}} \bracknorm{y|x_{\calT} \bracksq{\calK{\setminus}{\brackcurl{k}}}x} \nex
		& \dupspace  \dupspace \dupspace \times  \log \frac{W_{Y_j|X_{\bracknorm{j}}\bracksq{\calK \setminus \!\brackcurl{k}}} \bracknorm{y|x_{\calT}\!\bracksq{\calK \!\setminus \!\!\brackcurl{k}}} }{\pz(y)}  \\
		& \dupspace \stackrel{(a)}{ =} \sum_{\calT \subseteq {\calK \setminus \brackcurl{k}}} \bracknorm{\prod_{i \in { \calK \setminus \brackcurl{k}}} \Pi_{X_{ij}} \bracknorm{x_{\calT,i}}}\avgD{W_{Y_j|X_{\bracknorm{j}}\bracksq{\calK \setminus\brackcurl{k}} = x_{\calT} \bracksq{\calK \setminus\brackcurl{k}}}  }{\pz} \\
		& \dupspace \geq  \sum_{\substack{ \calT \subseteq {\calK \setminus \brackcurl{k}}:  \card{\calT}=1}} \bracknorm{\prod_{i \in { \calK \setminus \brackcurl{k}}} \Pi_{X_{ij}} \bracknorm{x_{\calT,i}}}\avgD{W_{Y_j|X_{\bracknorm{j}}\bracksq{\calK \setminus \brackcurl{k}} = x_{\calT} \bracksq{\calK \setminus\brackcurl{k}}}  }{\pz} \\
		& \dupspace = \sum_{i \in {\calK \setminus \brackcurl{k}}} \mun{ij} \bracknorm{\prod_{\substack{ i' \in { \calK \setminus \brackcurl{i,k}} }} \bracknorm{1 - \mun{i'j}} }\avgD{W_{Y_j|X_{\bracknorm{j}}\bracksq{\calK \setminus\brackcurl{k}} = x_{i} \bracksq{\calK \setminus\brackcurl{k}}}  }{\pz} \label{eq:89_1}  \\
		& \dupspace \stackrel{(b)}{=} \sum_{i \in {\calK \setminus \brackcurl{k}}} \mun{ij} \bracknorm{ 1 + \sum_{\calT \subseteq \calK \setminus \brackcurl{i,k} } \bracknorm{-1}^{\card{\calT}} \bracknorm{\prod_{i' \in \calT} \mun{i'j} } } \avgD{W_{Y_j|X_{\bracknorm{j}}\bracksq{\calK \setminus\brackcurl{k}} = x_{i} \bracksq{\calK \setminus\brackcurl{k}}}  }{\pz} \\
		& \dupspace \geq \sum_{i \in {\calK \setminus \brackcurl{k}}} \mun{ij} \bracknorm{ 1 - \sum_{\substack{\calT \subseteq \calK \setminus \brackcurl{i,k}: ~ \card{\calT} \text{ is odd } }}  \bracknorm{\prod_{i' \in \calT} \mun{i'j} } } \avgD{W_{Y_j|X_{\bracknorm{j}}\bracksq{\calK \setminus\brackcurl{k}} = x_{i} \bracksq{\calK \setminus\brackcurl{k}}}  }{\pz} \\
		& \dupspace \geq \sum_{i \in {\calK \setminus \brackcurl{k}}} \mun{ij} \bracknorm{ 1 - 2^K \mun{\max} } \avgD{W_{Y_j|X_{\bracknorm{j}}\bracksq{\calK \setminus\brackcurl{k}} = x_{i} \bracksq{\calK \setminus\brackcurl{k}}}  }{\pz} \\
		& \dupspace \geq \sum_{i \in {\calK \setminus \brackcurl{k}}} \mun{ij} \avgD{W_{Y_j|X_{\bracknorm{j}}\bracksq{\calK \setminus\brackcurl{k}} = x_{i} \bracksq{\calK \setminus\brackcurl{k}}}  }{\pz} -  d_2 \mun{\max} { \sum_{{i \in {\calK \setminus \brackcurl{k}}}} \mun{ij} } , \label{eq:89}
	\end{align}
	where $(a)$ follows from $\sum\limits_{x}\! \Pi_{X_{kj}}\!(x) W_{Y_j|X_{\bracknorm{j}}\bracksq{\calK \setminus \! \brackcurl{k}}X_{kj}} \!\!\bracknorm{y|x_{\calT} \!\bracksq{\calK{\setminus}\!{\brackcurl{k}}}\!x}  \!=\! W_{Y_j|X_{\bracknorm{j}}\bracksq{\calK \setminus \! \brackcurl{k}}} \bracknorm{y|x_{\calT} \bracksq{\calK \setminus \brackcurl{k}}} $ and $(b)$ follows from the fact that $\prod_{ i' \in \calK \setminus \brackcurl{i,k}} \bracknorm{1 - \smash{\mun{i'j}}}  =  1 + \sum_{\calT \subseteq \calK \setminus \brackcurl{i,k} } \smash{\bracknorm{-1}^{\card{\calT}}} \bracknorm{\prod_{i' \in \calT} \mun{i'j} }$. 
	Note that we can write $W_{Y_j|X_{(j)}{\bracksq{\calK \setminus \brackcurl{k}}} }(y|x_i{\bracksq{\calK \setminus \brackcurl{k}}}) = \bracknorm{1 - \smash{\mun{kj}}} P_i(y) + \mun{kj} P_{\brackcurl{i,k}}(y)$. 
	We define $d_3 \eqdef \abs{ \sum_y \bracknorm{P_{\brackcurl{i,k}}(y) - P_i(y)}\log \frac{P_i(y)}{\pz(y)}}$. Note that $d_3$ is bounded since $P_i \ll \pz$. Then, we lower bound the KL divergence term in~\eqref{eq:89} by
	\begin{align} 
		\avgD{W_{Y_j|X_{\bracknorm{j}}\bracksq{\calK \setminus\brackcurl{k}} = x_{i} \bracksq{\calK \setminus\brackcurl{k}}}  }{\pz} & =  \sum_y W_{Y_j|X_{\bracknorm{j}}\bracksq{\calK \setminus\brackcurl{k}}} \bracknorm{y|x_{i} \bracksq{\calK \setminus\brackcurl{k}}} \log \frac{P_i(y)}{\pz(y)} \nex
		& \dupspace \dupspace + \avgD{W_{Y_j|X_{\bracknorm{j}}\bracksq{\calK \setminus\brackcurl{k}} = x_{i} \bracksq{\calK \setminus\brackcurl{k}}}  }{P_i}  \\
		& \geq \sum_y P_i(y) \bracknorm{1 + \mun{kj} \frac{P_{\brackcurl{i,k}}(y) - P_i(y)}{P_i(y)} } \log \frac{P_i(y)}{\pz(y)} \\
		& \geq \avgD{P_i}{\pz} - d_3 \mun{\max} . \label{eq:90}
	\end{align} 
%	Combining~\eqref{eq:89} and~\eqref{eq:90}, we obtain
%	\begin{align}
%		& \sum_y \sum_{\calT \subseteq \calK}  \bracknorm{\prod_{i \in \calK} \Pi_{X_{ij}} \bracknorm{x_{\calT,i}}}P_{\calT}\bracknorm{y} \log \frac{W_{Y_j|X_{\bracknorm{j}}\bracksq{\calK \setminus\brackcurl{k}}} \bracknorm{y|x_{\calT} \bracksq{\calK \setminus\brackcurl{k}}} }{\pz(y)} \nex
%		& \dupspace \geq \sum_{i \in \bracknorm{\calK \setminus \brackcurl{k}}} \mun{ij}  \avgD{P_i}{\pz} - \bracknorm{ \sum_{i \in \calK} \mun{ij} } \mun{\max} d_4, \label{eq:91}
%	\end{align}
%	where $d_4 > 0$ is an appropriate constant.
	Defining $d_4 \eqdef d_1 + d_2 + d_3 $ and combining~\eqref{eq:87},~\eqref{eq:88},~\eqref{eq:89}, and~\eqref{eq:90}, we obtain
	\begin{align}
		\avgI{X_{kj}; Y_j |X_j \bracksq{\calK\!\setminus \!\brackcurl{k}} } &\leq \mun{kj} \avgD{P_k}{\pz} + d_4 \mun{\max} { \sum_{i \in \calK} \mun{ij} }. \label{eq:92}
	\end{align}
%	Using~\eqref{eq:92}, we then upper bound~\eqref{eq:86} as 
%	\begin{align}
%		\log M_k & \leq \frac{\bracknorm{ \sum_{j=1}^n \mun{kj} }\avgD{P_k}{\pz} + \sum_{i \in \calK} \bracknorm{ \sum_{j=1}^n  \mun{ij} } \mun{\max} d_5   + \Hb{\epsilon_n}}{1 - \epsilon_n} . \label{eq:93}
%	\end{align}
Next, we normalize $\log M_k$, where $k \in \calK$, by $\sqrt{n \delta_n}$. 
{Using~\eqref{eq:85},~\eqref{eq:86}, and~\eqref{eq:92}, for $n$ large enough, we obtain
	\begin{align}
		\frac{\log M_k}{\sqrt{n \delta_n }} & \leq  \frac{ \sum_{j=1}^n \mun{kj} {\avgD{P_k}{\pz}}+ d_4 \mun{\max} \sum_{i \in \calK}  \sum_{j=1}^n  \mun{ij} + \Hb{\epsilon_n}  }{\bracknorm{1 - \epsilon_n} \sqrt{n \sum_z \frac{\bracknorm{1 - \xin(z)}}{2 \qz(z)} \sum_{j = 1}^n \bracknorm{\Psin{j}(z)}^2 } } \\
		& \leq  \frac{ \bracknorm{ \sum_{i \in \calK} \sum_{j=1}^n \mun{ij}} \bracknorm{\avgD{P_k}{\pz} \frac{\sum_{j=1}^n \mun{kj}}{\sum_{i \in \calK}\sum_{j=1}^n \mun{ij} } + d_4\mun{\max}  + \frac{\Hb{\epsilon_n}}{ \sum_{i \in \calK} \sum_{j=1}^n \mun{ij}} }  }{\bracknorm{1 - \epsilon_n} \sqrt{n \sum_z \frac{\bracknorm{1 - \xin(z)}}{2 \qz(z)} \sum_{j = 1}^n \bracknorm{\Psin{j}(z)}^2 } } \\
		& =  \frac{  {\avgD{P_k}{\pz} \frac{\sum_{j=1}^n \mun{kj}}{\sum_{i \in \calK}\sum_{j=1}^n \mun{ij} } + d_4 \mun{\max}  + \frac{\Hb{\epsilon_n}}{ \sum_{i \in \calK} \sum_{j=1}^n \mun{ij}} }  }{\bracknorm{1 - \epsilon_n} \sqrt{n \sum_z \frac{\bracknorm{1 - \xin(z)}}{2 \qz(z)}  { \frac{ \sum_{j = 1}^n \bracknorm{ \Psin{j}(z)}^2}{\bracknorm{ \sum_{i \in \calK} \sum_{j=1}^n \mun{ij}}^2} } } },  \\
		& \stackrel{(a)}{\leq} \frac{  {\avgD{P_k}{\pz} \frac{\sum_{j=1}^n \mun{kj}}{\sum_{i \in \calK}\sum_{j=1}^n \mun{ij} } + d_4 \mun{\max} + \frac{\Hb{\epsilon_n}}{ \sum_{i \in \calK} \sum_{j=1}^n \mun{ij}} }  }{\bracknorm{1 - \epsilon_n} \sqrt{ \sum_z \frac{\bracknorm{1 - \xin(z)}}{2 \qz(z)}  \bracknorm{  \frac{\sum_{j = 1}^n\Psin{j}(z)}{\sum_{i \in \calK} \sum_{j=1}^n \mun{ij}} }^2 } },  \label{eq:94} 
% 		& =  \frac{\avgD{P_k}{\pz} +  d_4  \mun{\max} \sum_{i \in \calK}  \frac{ \sum_{j=1}^n  \mun{ij} }{\sum_{j=1}^n \mun{kj}} + \frac{\Hb{\epsilon_n}}{\sum_{j=1}^n \mun{kj}} }{\bracknorm{1 - \epsilon_n} \sqrt{n \sum_z \frac{\bracknorm{1 - \xin(z)}}{2 \qz(z)} \frac{ \sum_{j = 1}^n\bracknorm{\Psin{j}(z)}^2}{\bracknorm{\sum_{j=1}^n \mun{kj}}^2} } } \\
% 		& \stackrel{(a)}{\leq}  \frac{\avgD{P_k}{\pz}+  d_4 \mun{\max}  \sum_{i \in \calK}  \frac{ \sum_{j=1}^n  \mun{ij} }{\sum_{j=1}^n \mun{kj}} +\frac{\Hb{\epsilon_n}}{\sum_{j=1}^n \mun{kj}} }{\bracknorm{1 - \epsilon_n} \sqrt{\sum_z \frac{\bracknorm{1 - \xin(z)}}{2 \qz(z)} \bracknorm{\frac{ \sum_{j = 1}^n\Psin{j}(z)}{\sum_{j=1}^n \mun{kj}}}^2 } }, \label{eq:94}
	\end{align}
where  % $(a)$ follows from the fact that $\limn \sum_{j=1}^n \mun{kj} = \infty $, which is obtained by combining~\eqref{eq:86},~\eqref{eq:92}, and the facts that $\limn \log M_k = \infty$ and $\mun{kj} \in [0,1]$ for all $(k,j) \in \calK \times \intseq{1}{n}$. 
$(a)$ follows from the fact that $ n \sum_{j = 1}^n\bracknorm{\Psin{j}(z)}^2 \geq \bracknorm{\sum_{j=1}^n \Psin{j}(z)}^2 $ according to the Cauchy-Schwarz inequality. 
Note that since $\bracknorm{1 - \xin(z)}$ is positive for $n$ large enough, our application of Cauchy-Schwarz inequality in~\eqref{eq:94} is valid. 
% From~\eqref{eq:12}, for $i \in \calK$, we have $G_i(z)  = Q_i(z) - \qz(z)$.
% Let us assume that $\min_z G_i(z) \geq 0$, then $\forall z, \, Q_i(z) \geq \qz(z)$. 
% Consequently, $\forall z, \, Q_i(z) = \qz(z)$ since both $Q_i$ and $\qz$ are probability distributions. 
% This contradicts our assumption in Section~\ref{sec:channelmodel}, and therefore $\min_z G_i(z) < 0$ for all $i \in \calK$. 
% On a similar note, let us assume that $\max_z G_i(z) \leq 0$, then $\forall z, \, Q_i(z) \leq \qz(z)$. 
% Consequently, $\forall z, \, Q_i(z) = \qz(z)$ since both $Q_i$ and $\qz$ are probability distributions. 
% Again, this contradicts our assumption in Section~\ref{sec:channelmodel}, and therefore $\max_z G_i(z) > 0$ for all $i \in \calK$. 
From the definition of $\Psin{j}(z)$ in~\eqref{eq:79_8}, we have
\begin{align}
    \Psin{j}(z) &= \sum_{i \in \calK} \mun{ij}G_i(z) + \sum_{\substack{\calT \subseteq \calK: \\ \card{\calT} \geq 2 }} \bracknorm{ \prod_{k \in \calT} \mun{kj} } G_\calT(z) \displaybreak[0] \\
    & = \sum_{i \in \calK} \bracknorm{ \mun{ij} G_i(z) + \sum_{\substack{\calT \subseteq \calK: \\ i \in \calT, \card{\calT} \geq 2, \\ \forall k \in \calT, k\geq i }} \mun{ij} \bracknorm{\prod_{k \in \calT \setminus \brackcurl{i}} \mun{kj} } G_\calT(z) }  \\
    & =  \sum_{i \in \calK} \mun{ij} \bracknorm{  G_i(z) + \sum_{\substack{\calT \subseteq \calK: \\ i \in \calT, \card{\calT} \geq 2, \\ \forall k \in \calT, k\geq i }}  \bracknorm{\prod_{k \in \calT \setminus \brackcurl{i}} \mun{kj} } G_\calT(z) }. \label{eq:94_2}
\end{align}
Define $d_5 \eqdef 2^K \max\limits_{z \in \calZ} \max\limits_{\calT \subseteq \calK: \card{\calT}>1} \abs{G_{\calT}(z)} $. 
If $ \sum_{i \in \calK} \mun{ij}  G_i(z) \leq 0 $, we upper bound~\eqref{eq:94_2} by
\begin{align}
    \Psin{j}(z) &\leq \sum_{i \in \calK} \mun{ij} \bracknorm{  G_i(z) + d_5 \mun{\max}  },   \label{eq:94_3}
\end{align}
which is negative for $n$ large enough. 
If $ \sum_{i \in \calK} \mun{ij}  G_i(z) \geq 0 $, we lower bound~\eqref{eq:94_2} by
\begin{align}
        \Psin{j}(z) &\geq \sum_{i \in \calK} \mun{ij} \bracknorm{  G_i(z) - d_5 \mun{\max}  },  \label{eq:94_4}
\end{align}
which is positive for $n$ large enough. 
Consequently, for $n$ large enough, combining~\eqref{eq:94},~\eqref{eq:94_3} and~\eqref{eq:94_4}, we obtain
\begin{align}
    \frac{\log M_k}{\sqrt{n \delta_n }} & {\leq} \frac{  {\avgD{P_k}{\pz} \frac{\sum_{j=1}^n \mun{kj}}{\sum_{i \in \calK}\sum_{j=1}^n \mun{ij} } + d_4 \mun{\max} + \frac{\Hb{\epsilon_n}}{ \sum_{i \in \calK} \sum_{j=1}^n \mun{ij}} }  }{\bracknorm{1 - \epsilon_n} \sqrt{ \sum_z \frac{\bracknorm{1 - \xin(z)}}{2 \qz(z)}  \bracknorm{  \frac{\sum_{j = 1}^n\Psin{j}(z)}{\sum_{i \in \calK} \sum_{j=1}^n \mun{ij}} }^2 } } \\
    & = \frac{  {\avgD{P_k}{\pz} \frac{\sum_{j=1}^n \mun{kj}}{\sum_{i \in \calK}\sum_{j=1}^n \mun{ij} } + d_4 \mun{\max} + \frac{\Hb{\epsilon_n}}{ \sum_{i \in \calK} \sum_{j=1}^n \mun{ij}} }  }{\bracknorm{1 - \epsilon_n} \sqrt{ \sum_z \frac{\bracknorm{1 - \xin(z)}}{2 \qz(z)}  \bracknorm{  \frac{ \sum_{a \in \calK} \bracknorm{G_a(z) + \bigO{\mun{\max}}} \sum_{j=1}^n \mun{a j} }{\sum_{i \in \calK} \sum_{j=1}^n \mun{ij}}}^2 } }. \label{eq:94_5} 
    % & \frac{\avgD{P_k}{\pz}+  d_4 \mun{\max}  \sum_{i \in \calK}  \frac{ \sum_{j=1}^n  \mun{ij} }{\sum_{j=1}^n \mun{kj}} +\frac{\Hb{\epsilon_n}}{\sum_{j=1}^n \mun{kj}} }{\bracknorm{1 - \epsilon_n} \sqrt{\sum_z \frac{\bracknorm{1 - \xin(z)}}{2 \qz(z)} \bracknorm{\sum_{i \in \calK} G_i(z)  \frac{ \sum_{j=1}^n \mun{ij} }{\sum_{i \in \calK} \sum_{j=1}^n \mun{ij}}}^2 } }. \label{eq:94_5}
\end{align}
% 	, and using the definition of $\Psin{j}(z)$ and~\eqref{eq:74_1}, we bound the square term in the denominator of~\eqref{eq:94} to obtain 
% 	\begin{align}
% 		\frac{ \sum_{j = 1}^n\Psin{j}(z)}{\sum_{j=1}^n \mun{kj}} & =\! \sum_{i \in \calK} G_{\brackcurl{i}}(z)  \frac{ \sum\limits_{j=1}^n \mun{ij} }{\sum\limits_{j=1}^n \mun{kj}} + \sum_{\substack{\calT \subseteq \calK: \\ \card{\calT} >  1}}\! G_{\calT}(z) \frac{\sum\limits_{j=1}^n \bracknorm{\prod\limits_{i\in \calT} \mun{ij}} }{\sum\limits_{j=1}^n \mun{kj}} \\
% 		& \geq  \sum_{i \in \calK} G_{\brackcurl{i}}(z)\frac{{\sum_{j=1}^n \mun{ij}} }{\sum_{j=1}^n \mun{kj}} - d_5 \mun{\max} \sum_{i \in \calK}   \frac{\sum_{j=1}^n \mun{ij} }{\sum_{j=1}^n \mun{kj}}. \label{eq:94_1}
% 	\end{align}
Combining~\eqref{eq:86} and~\eqref{eq:92} with the fact that $\lim_{\substack{n \to \infty \\ n \in \calN^*}} \log M_k = \infty$, we conclude that $\lim_{\substack{n \to \infty \\ n \in \calN^*}}  \sum_{i \in \calK} \sum_{j=1}^n \allowbreak \mun{ij} = \infty $.  
Note that $\frac{\sum_{j=1}^n \mun{a j}}{\sum_{i \in \calK} \sum_{j=1}^n \mun{ij}}$ is bounded between $0$ and $1$ for any $a \in \calK$. 
We extract a convergent subsequence $\brackcurl{\frac{\sum_{j=1}^n \mun{a j}}{\sum_{i \in \calK} \sum_{j=1}^n \mun{ij}}}_{n \in \calN^\dagger}$, where $\calN^\dagger \subseteq \calN^*$ is an infinite set, with limit $\rho_a$. 
Note that $\sum_{a \in \calK} \rho_a =1$.
Since we have assumed in Section~\ref{sec:channelmodel} that there exists no $\brackcurl{\rho_k}_{k \in \calK}$ for which  $\sum_{k\in\calK}\rho_k  Q_k(z)=Q_\emptyset(z)$ for all $z \in \calZ$, the denominator in~\eqref{eq:94_5} is non-zero. 
% This implies that there exists at least one user $k$, for which $\sum_{j=1}^n \mun{kj} = \infty $. 
% Now, let us assume that all users indexed by $k \in \calT$, where $\calT \subset \calK$ is a proper subset with cardinality $\card{\calT} \geq 1$, satisfy $\sum_{j=1}^n \mun{kj} = \infty$. 
% Then, the throughput for users indexed by $\ell \in \calT^c$ is $\lim_{\substack{n \to \infty \\ n \in \calN^*}} \frac{\log M_\ell}{\sqrt{n \delta_n}} = 0$  which follows from the fact that $G_i(z) \neq 0$ for all $z \in \calZ$ and any $i \in \calK$ and the fact that 
% % 	Then, for any user indexed by $k \in \calK $, if we assume that $\lim_{\substack{n \to \infty \\ n \in \calN^*}} \frac{\sum_{j=1}^n \mun{kj} }{ \sum_{i \in \calK} \sum_{j=1}^n \mun{ij} } = \infty$, it follows from~\eqref{eq:94_5} that the throughput for user $k$ is $\liminf_{\substack{n \to \infty \\ n \in \calN^*}} \frac{\log M_k}{\sqrt{n \delta_n }} = 0$. 
% Hence, to get meaningful throughputs for all users after proper normalization, the ratio $\frac{\sum_{j=1}^n \mun{kj} }{\sum_{i \in \calK}\sum_{j=1}^n \mun{ij} } $ should be bounded for any $k \in \calK$. 
% Defining $\frac{\sum_{j=1}^n \mun{kj} }{ \sum_{i \in \calK}\sum_{j=1}^n \mun{ij} } \eqdef \rho_{k}^{\bracknorm{n}}$ for $n \in \calN^*$, where $\rho_k^{\bracknorm{n}} > 0$, we can extract a convergent subsequence {$\brackcurl{{\rho_{k}^{\bracknorm{n}}}}_{n \in \calN^\dagger}$, where $\calN^\dagger \subseteq \calN^*$ is an infinite set,} with limit ${\rho_{k}}$, where $\rho_k > 0$ is a constant. 
% Note that $\sum_{k \in \calK} \rho_k =1$.
Henceforth, we only consider the subsequence of codes with blocklength $n \in \calN^\dagger$. }
	Defining $\rhovector \eqdef \brackcurl{\rho_k}_{k \in \calK}$, we obtain from~\eqref{eq:94_5}, 
% 	\begin{align}
% 		\frac{\log M_k}{\sqrt{n\delta_n}} & \leq \frac{\avgD{P_k}{\pz}+  d_4 \mun{\max}  \sum_{i \in \calK}  \frac{ \sum_{j=1}^n  \mun{ij} }{\sum_{j=1}^n \mun{kj}} +\frac{\Hb{\epsilon_n}}{\sum_{j=1}^n \mun{kj}} }{\bracknorm{1 - \epsilon_n} \sqrt{\sum_z \frac{\bracknorm{1 - \xin(z)}}{2 \qz(z)} \bracknorm{\frac{ \sum_{i \in \calK}  \rho_{i}^{\bracknorm{n}} \bracknorm{ Q_i(z) - \qz(z) }}{\rho_k^{\bracknorm{n}}}   + \bigO{\mun{\max}}}^2 } }. \label{eq:98_1}
% 	\end{align}
	\begin{align}
		\liminf_{\substack{n \to \infty \\ n \in \calN^\dagger}} \frac{\log M_k}{\sqrt{n\delta_n}} & \leq \sqrt{2} \rho_k \frac{\avgD{P_k}{\pz}}{\sqrt{\sum_z \frac{\bracknorm{\sum_{i \in \calK} \rho_{i} \bracknorm{ Q_i(z) - \qz(z)}}^2}{\qz(z)}}} \displaybreak[0] \\
		& \stackrel{(a)}{=} \sqrt{\frac{2}{\chi{\bracknorm{\rhovector}}}} \rho_k \avgD{P_k}{\pz}, \label{eq:98}
	\end{align}
	where $(a)$ follows from the definition of $\chi{\bracknorm{\rhovector}}$.
	
	Using standard techniques, we lower bound $\log M_kL_k$, for $k \in \calK$, by
	\begin{align}
		\log M_kL_k & = \avgH{W_k S_k} \\
		& \geq \avgI{W_k S_k; \bzn} \\
		& \stackrel{(a)}{=} \avgI{\bxn_k;\bzn} \\
		& = \avgI{\bxn {\bracksq{\calK}}; \bzn} -\avgI{\bxn\bracksq{\calK \setminus \brackcurl{k} };\bzn|\bxn_k}, \label{eq:99}
	\end{align}
	where $(a)$ follows from the fact that $\bxn_k$ is a function of $W_k$ and $S_k$.
		Defining $d_6 \eqdef 2^K \max_{i \in \calK}\avgD{Q_{i}}{\qz} $, we then lower bound the first term in~\eqref{eq:99} by
		\begin{align}
		\avgI{\bxn {\bracksq{\calK}}; \bzn} & = \sum_{\xn{\bracksq{\calK}}} \sum_{\zn} \bracknorm{ \prod_{i \in \calK} \Pi^n_{X_i} \bracknorm{\xn_{i}} } W_{Z|X{\bracksq{\calK}}}^{\pn} \bracknorm{\zn|\xn{\bracksq{\calK}}} \log \frac{W_{Z|X{\bracksq{\calK}}}^{\pn} \bracknorm{\zn|\xn{\bracksq{\calK}}}}{\qhatn\bracknorm{\zn}} \\
		& = \sum_{\xn{\bracksq{\calK}}} \sum_{\zn} \bracknorm{ \prod_{i \in \calK} \Pi^n_{X_i} \bracknorm{\xn_{i}} } W_{Z|X{\bracksq{\calK}}}^{\pn} \bracknorm{\zn|\xn{\bracksq{\calK}}} \log \frac{W_{Z|X{\bracksq{\calK}}}^{\pn} \bracknorm{\zn|\xn{\bracksq{\calK}}}}{\qzn\bracknorm{\zn}} - \delta_n \\
		& = \sum_{j=1}^n \sum_{x_{(j)}{\bracksq{\calK}}} \sum_{z} \bracknorm{\prod_{i \in \calK} \Pi_{X_{ij}} (x_{ij}) } W_{Z|X{\bracksq{\calK}}}(z|x_{(j)}{\bracksq{\calK}}) \log \frac{W_{Z|X{\bracksq{\calK}}}(z|x_{(j)}{\bracksq{\calK}}) }{\qz(z) } - \delta_n    \\
		& = \sum_{j=1}^n \sum_{\calT \subseteq \calK} \bracknorm{\prod_{i \in \calT} \mun{ij} } \bracknorm{ \prod_{{i \in \calT^c}} \bracknorm{1 - \mun{ij}} } \avgD{Q_{\calT}}{\qz} - \delta_n  \\
		& \geq \sum_{j=1}^n \sum_{i \in \calK} \mun{ij}\bracknorm{ \prod_{\substack{i' \in \calK \setminus \brackcurl{i} }} \bracknorm{1 - \mun{i'j}} } \avgD{Q_{i}}{\qz}  - \delta_n \\
		& \stackrel{(a)}{\geq} \sum_{j=1}^n \sum_{i \in \calK} \mun{ij} \avgD{Q_i}{\qz} - d_6 \mun{\max} \sum_{i \in \calK} \sum_{j=1}^n  \mun{ij}  - \delta_n,  \label{eq:100}
		\end{align}
	where $(a)$ follows from the steps used to obtain~\eqref{eq:89} from~\eqref{eq:89_1}. 
	Note that, by definition, we have
	\begin{align}
		\sum_{\calT \subseteq \calK \setminus \brackcurl{k} } \bracknorm{ \prod_{i \in \calK \setminus \brackcurl{k} } \Pi_{X_{ij}}\bracknorm{x_{\calT,i}} } W_{Z_j|X_{(j)}{\bracksq{\calK \setminus \brackcurl{k}}}X_{kj}} \bracknorm{z|x_\calT{\bracksq{\calK \setminus \brackcurl{k}}}x} = W_{Z_j|X_{kj}}(z|x). \label{eq:100_e}
	\end{align}
	We upper bound the second term in~\eqref{eq:99} by
	\begin{align}
		& \avgI{\bxn\bracksq{\calK \setminus \brackcurl{k} };\bzn|\bxn_k} \nex
		&\dupspace = \avgH{\bzn | \bxn_k} - \avgH{\bzn|\bxn{\bracksq{\calK}}} \\
		& \dupspace\stackrel{(a)}{\leq} \sum_{j=1}^n  \bracknorm{ \avgH{Z_j | X_{kj}} - \avgH{Z_j |X_{(j)}{\bracksq{\calK}} } } \displaybreak[0] \\
		& \dupspace = \sum_{j=1}^n \avgI{X_{(j)}{\bracksq{\calK \setminus \brackcurl{k} }};Z_j | X_{kj} }  \\
		& \dupspace = \sum_{j=1}^n \sum_{\calT \subseteq \calK} \bracknorm{ \prod_{i \in \calK} \Pi_{X_{ij}}\bracknorm{x_{\calT,i}} } \avgD{Q_{\calT}}{W_{Z_j|X_{kj}=x_{\calT,k}}}  \\
		& \dupspace = \sum_{j=1}^n\sum_{\calT \subseteq \calK} \bracknorm{ \prod_{i \in \calK} \Pi_{X_{ij}}\bracknorm{x_{\calT,i}} } \avgD{Q_{\calT}}{\qz} \nex
		& \dupspace \dupspace - \sum_{j=1}^n \sum_{\calT \subseteq \calK} \bracknorm{ \prod_{i \in \calK} \Pi_{X_{ij}}\bracknorm{x_{\calT,i}} } \sum_z Q_{\calT}(z) \log \frac{W_{Z_j|X_{kj}}\bracknorm{z|x_{\calT,k}}}{\qz(z)} \\
		& \dupspace = \sum_{j=1}^n\sum_{\calT \subseteq \calK} \bracknorm{ \prod_{i \in \calK} \Pi_{X_{ij}}\bracknorm{x_{\calT,i}} } \avgD{Q_{\calT}}{\qz} \nex
		& \dupspace \dupspace - \sum_{j=1}^n \sum_x \Pi_{X_{kj}}(x) \sum_z  \sum_{\calT \subseteq \calK \setminus \brackcurl{k} } \bracknorm{ \prod_{i \in \calK \setminus \brackcurl{k} } \Pi_{X_{ij}}\bracknorm{x_{\calT,i}} } \nex
		& \dupspace \dupspace \dupspace \times W_{Z_j|X_{(j)}{\bracksq{\calK \setminus \brackcurl{k}}}X_{kj}} \bracknorm{z|x_\calT{\bracksq{\calK \setminus \brackcurl{k}}}x} \log \frac{W_{Z_j|X_{kj}}\bracknorm{z|x}}{\qz(z)} \\
		& \dupspace \stackrel{(b)}{=} \sum_{j=1}^n\sum_{\calT \subseteq \calK} \bracknorm{ \prod_{i \in \calK} \Pi_{X_{ij}}\bracknorm{x_{\calT,i}} } \avgD{Q_{\calT}}{\qz} - \sum_{j=1}^n \sum_{x} \Pi_{X_{kj}}(x) \avgD{W_{Z_j|X_{kj}=x}}{\qz} \\
		& \dupspace \leq \sum_{j=1}^n\sum_{\calT \subseteq \calK} \bracknorm{ \prod_{i \in \calK} \Pi_{X_{ij}}\bracknorm{x_{\calT,i}} } \avgD{Q_{\calT}}{\qz} - \sum_{j=1}^n \mun{kj} \avgD{W_{Z_j|X_{kj}=1}}{\qz} , \label{eq:100_a}
	\end{align}
	where $(a)$ follows from the fact that conditioning reduces entropy and the memoryless property of the channel, and $(b)$ follows from~\eqref{eq:100_e}.  
	Defining $d_7 \eqdef 2^K \max_{\substack{\calT \subseteq \calK:  \card{\calT} > 1}} \avgD{Q_{\calT}}{\qz}$, we upper bound the first term in~\eqref{eq:100_a} by
	\begin{align}
		& \sum_{j=1}^n\sum_{\calT \subseteq \calK} \bracknorm{ \prod_{i \in \calK} \Pi_{X_{ij}}\bracknorm{x_{\calT,i}} } \avgD{Q_{\calT}}{\qz} \nex
		& = \sum_{j=1}^n \sum_{\calT \subseteq \calK} \bracknorm{\prod_{i \in \calT}\mun{ij}} \bracknorm{\prod_{i \in \calT^c} \bracknorm{1 - \mun{ij}} } \avgD{Q_{\calT}}{\qz} \\
		& \stackrel{(a)}{ \leq} \sum_{j=1}^n \sum_{\calT \subseteq \calK} \bracknorm{\prod_{i \in \calT}\mun{ij}} \avgD{Q_{\calT}}{\qz}  \\
		& = \sum_{j=1}^n \sum_{i \in \calK} \mun{ij} \avgD{Q_i}{\qz} + \sum_{j=1}^n \sum_{\substack{\calT \subseteq \calK: \\ \card{\calT} > 1}} \bracknorm{\prod_{i \in \calT}\mun{ij}} \avgD{Q_{\calT}}{\qz} \\
		& \leq \sum_{j=1}^n \sum_{i \in \calK} \mun{ij} \avgD{Q_i}{\qz} + d_7 \mun{\max} \sum_{i \in \calK}  \sum_{j=1}^n \mun{ij} , \label{eq:100_b}
	\end{align}
	where $(a)$ follows  from the fact that $\bracknorm{\prod_{i \in \calT^c} \bracknorm{1 - \mun{ij}} } \leq 1$ for any $\calT \subseteq \calK$. 
	Then, {from Corollary~\ref{cor:corr-representation}}, we write
	\begin{align}
		W_{Z_j|X_{kj}}(z|1) &= Q_k(z) + \sum_{\substack{\calT \subseteq \calK \setminus \brackcurl{k}: \\ \calT \neq \emptyset}} \bracknorm{\prod_{i \in \calT} \mun{ij}} \bracknorm{ \sum_{\calU \subseteq \calT } \bracknorm{-1}^{\card{\calT} - \card{\calU}}   Q_{\calU \cup \brackcurl{k}}(z)}. \label{eq:100_3}
	\end{align}
Defining $d_8 \eqdef 2^K \max\limits_{\calT \subseteq \calK \setminus \brackcurl{k}: \calT \neq \emptyset} \abs{ \sum_z  \sum\limits_{\calU \subseteq \calT } \bracknorm{-1}^{\card{\calT} - \card{\calU}}   Q_{\calU \cup \brackcurl{k}}(z) \log \frac{Q_k(z)}{\qz(z)}} $ and using~\eqref{eq:100_3}, we bound the second KL divergence term in~\eqref{eq:100_a} by 
	\begin{align}
		& \avgD{W_{Z_j|X_{kj}=1}}{\qz} \nex
		& \dupspace = \avgD{W_{Z_j|X_{kj}=1}}{Q_k} + \sum_z W_{Z_j|X_{kj}}(z|1) \log \frac{Q_k(z)}{\qz(z)} \\
		& \dupspace \geq \sum_z \bracknorm{Q_k(z) + \sum_{\substack{\calT \subseteq \calK \setminus \brackcurl{k}: \\ \calT \neq \emptyset}} \bracknorm{\prod_{i \in \calT} \mun{ij}} \bracknorm{ \sum_{\calU \subseteq \calT } \bracknorm{-1}^{\card{\calT} - \card{\calU}}   Q_{\calU \cup \brackcurl{k}}(z)}} \log \frac{Q_k(z)}{\qz(z)} \\
		& \dupspace { \geq\avgD{Q_k}{\qz} - \sum_{\substack{\calT \subseteq \calK \setminus \brackcurl{k}: \\ \calT \neq \emptyset}}  \bracknorm{\prod_{i \in \calT} \mun{ij}} \abs{ \sum_z \bracknorm{ \sum_{\calU \subseteq \calT } \bracknorm{-1}^{\card{\calT} - \card{\calU}}   Q_{\calU \cup \brackcurl{k}}(z)} \log \frac{Q_k(z)}{\qz(z)}}} \\ 
		& \dupspace \geq \avgD{Q_k}{\qz} - d_8 \mun{\max}.  \label{eq:100_d}
	\end{align}
	Defining $d_9 \eqdef d_7 + d_8$ and combining~\eqref{eq:100_a},~\eqref{eq:100_b}, and~\eqref{eq:100_d}, we obtain
	\begin{align}
		 \!\!\!\!\!\avgI{\bxn\bracksq{\calK \setminus \brackcurl{k} };\bzn|\bxn_k} & \leq \sum_{j=1}^n \sum_{i \in \calK \setminus \brackcurl{k} } \mun{ij} \avgD{Q_i}{\qz} + d_7 \mun{\max} \sum_{i \in \calK} \sum_{j=1}^n  \mun{ij} + d_8 \mun{\max} \sum_{j=1}^n \mun{kj} \\
		& \leq  \sum_{j=1}^n \sum_{i \in \calK \setminus \brackcurl{k} } \mun{ij} \avgD{Q_i}{\qz} + d_{9}\mun{\max} \sum_{i \in \calK} \sum_{j=1}^n  \mun{ij}  . \label{eq:102_1}
	\end{align}
	Defining $d_{10} \eqdef d_6 + d_9$ and combining~\eqref{eq:100} and~\eqref{eq:102_1}, we bound~\eqref{eq:99} by
	\begin{align}
		\log M_kL_k & \! \geq \!\!\bracknorm{ \sum_{j=1}^n \mun{kj}\!\! }\!\! \avgD{Q_k}{\qz} \! - d_{10} \mun{\max}  \sum_{i \in \calK}  \sum_{j=1}^n\smash{\mun{ij}}\! -\! \delta_n. \label{eq:103}
	\end{align} 
		{Normalizing $\log M_kL_k$, where $k \in \calK$, by $\sqrt{n \delta_n}$, we obtain
	\begin{align}
		\frac{\log M_kL_k}{\sqrt{n \delta_n}} & \geq \frac{\bracknorm{ \sum_{j=1}^n \mun{kj} }\avgD{Q_k}{\qz} - d_{10} \mun{\max}  \sum_{i \in \calK}  \sum_{j=1}^n\mun{ij}   - \delta_n}{\sqrt{n \delta_n}} \\
		& = \frac{\bracknorm{\sum_{j=1}^n \mun{kj} } \bracknorm{ \avgD{Q_k}{\qz} - d_{10} \mun{\max} \sum_{i \in \calK}  \frac{ \sum_{j=1}^n\mun{ij}}{\sum_{j=1}^n \mun{kj} } - \frac{\delta_n}{\sum_{j=1}^n \mun{kj} }  }}{\sqrt{n \delta_n}}. \label{eq:104}
%		& = \bracknorm{ \frac{{\sum_{j=1}^n \mun{kj} }  }{\sqrt{n \delta_n}} }  \avgD{Q_k}{\qz}. 
	\end{align}}
%	where $(a)$ follows from the fact that $ \limn \frac{\delta_n}{ \sqrt{ n \avgD{\qhatn}{\qzn}}} = 0$ since $\limn  n \avgD{\qhatn}{\qzn} = \infty$. We bound
%	\begin{align}
%		\limn\frac{ n \bigO{{\mun{\max}}^2}}{\bracknorm{\sum_{j=1}^n \mun{kj} }} & \geq \limn\frac{ n \bigO{{\mun{\max}}^2}}{n \mun{\max} } = \limn\bigO{\mun{\max}} = 0. \label{eq:105}  \\
%		\limn\frac{n \bigO{{\mun{\min}}^2}}{\bracknorm{\sum_{j=1}^n \mun{kj} }} & \geq \limn\frac{ n \bigO{{\mun{\min}}^2}}{n \mun{\max} } = \limn\frac{\mun{\min}}{\mun{\max}} \bigO{\mun{\min}} \stackrel{(a)}{=} 0, \label{eq:106}
%	\end{align}
%	where $(a)$ follows from the fact that $\mun{\max} \geq \mun{\min} \geq 0$ for $n \in \mathbb{N}^*$ and $\limn \mun{\max} = \limn \mun{\min} = 0$. Using~\eqref{eq:105} and~\eqref{eq:106}, we bound~\eqref{eq:104} by 
%	\begin{align}
%		\limn \frac{\log M_kL_k}{\sqrt{n \avgD{\qhatn}{\qzn}}} & \geq \bracknorm{ \limn \frac{\bracknorm{\sum_{j=1}^n \mun{kj} }  }{\sqrt{n \avgD{\qhatn}{\qzn}}} }  \avgD{Q_k}{\qz}. \label{eq:107}
%	\end{align}
	Consider a sequence of codes for which~\eqref{eq:98} holds with equality for all $k \in \calK$. Proposition~\ref{prop:achievability} confirms the existence of such schemes. As a result, for an arbitrary $\xi>0$, we have
	\begin{align}
		{\liminf_{\substack{n \to \infty \\ n \in \calN^\dagger}}} \frac{\log M_k}{\sqrt{n \delta_n}} & \geq \bracknorm{1-\xi} \sqrt{\frac{2}{\chi{\bracknorm{\rhovector}}}} \rho_k \avgD{P_k}{\pz}. \label{eq:108}
	\end{align}
	Then, for that sequence of codes, using~\eqref{eq:86} and~\eqref{eq:92}, we obtain
	\begin{align} 
		\!\!\!\!\!\!\!\!\!{\liminf_{\substack{n \to \infty \\ n \in \calN^\dagger}}} \frac{\!\bracknorm{ \sum_{j=1}^n \mun{kj} }\!\avgD{P_k}{\pz}\! +\! d_4 \mun{\max}\!  \sum_{i \in \calK} \!\!\bracknorm{\! \sum_{j=1}^n  \!\mun{ij}\! } \!\!  + \!\Hb{\epsilon_n}}{\bracknorm{1-\epsilon_n} \sqrt{n \delta_n}} & \!\geq\! \bracknorm{1-\xi} \!\sqrt{\!\frac{2}{\chi{\bracknorm{\rhovector}}}} \rho_k \avgD{P_k}{\pz},  \\
		{\liminf_{\substack{n \to \infty \\ n \in \calN^\dagger}}} \frac{\bracknorm{ \sum_{j=1}^n \mun{kj} }\bracknorm{ \avgD{P_k}{\pz} + d_4 \mun{\max}  \sum_{i \in \calK} \frac{\sum_{j=1}^n  \mun{ij} }{\sum_{j=1}^n \mun{kj}} }}{\bracknorm{1-\epsilon_n} \sqrt{n \delta_n}} & \!\geq\! \bracknorm{1-\xi} \!\sqrt{\!\frac{2}{\chi{\bracknorm{\rhovector}}}} \rho_k \avgD{P_k}{\pz}, \\
		{\liminf_{\substack{n \to \infty \\ n \in \calN^\dagger}}} \frac{ \bracknorm{ \sum_{j=1}^n \mun{kj}} \avgD{P_k}{\pz} }{\sqrt{n \delta_n}} & \!\geq\! \bracknorm{1-\xi} \!\sqrt{\!\frac{2}{\chi{\bracknorm{\rhovector}}}} \rho_k \avgD{P_k}{\pz}, \\
		{\liminf_{\substack{n \to \infty \\ n \in \calN^\dagger}}}  \frac{ { \sum_{j=1}^n \mun{kj}} }{\sqrt{n \delta_n}} & \!\geq\! \bracknorm{1-\xi} \!\sqrt{\!\frac{2}{\chi{\bracknorm{\rhovector}}}} \rho_k. \label{eq:109}
	\end{align}	
	{
	However, since $\limsup_{n \to \infty} a_n \geq \liminf_{n \to \infty} a_n$ for any sequence $\brackcurl{a_n}$, we write
	\begin{align}
		\limsup_{\substack{n \to \infty \\ n \in \calN^\dagger}}  \frac{ { \sum_{j=1}^n \mun{kj}} }{\sqrt{n \delta_n}} & \!\geq\! \bracknorm{1-\xi} \!\sqrt{\!\frac{2}{\chi{\bracknorm{\rhovector}}}} \rho_k. \label{eq:109_1} 
	\end{align}
	Combining~\eqref{eq:104} and~\eqref{eq:109_1}, we obtain  
	\begin{align}
		{\limsup_{\substack{n \to \infty \\ n \in \calN^\dagger}}} \frac{\log M_kL_k}{\sqrt{n \delta_n}} & \geq  \bracknorm{1-\xi} \sqrt{\frac{2}{{\chi{\bracknorm{\rhovector}}}}} \rho_k \avgD{Q_k}{\qz}, \label{eq:110}
	\end{align}
	for an arbitrary $\xi > 0$, where $(a)$ follows from the fact that $\lim_{\substack{n \to \infty \\ n \in \calN^\dagger}} \sum_{j=1}^n \mun{kj} = \infty$. 
% 	, which is obtained by combining~\eqref{eq:86} and~\eqref{eq:92} with the fact that $\lim_{\substack{n \to \infty \\ n \in \calN^\dagger}} \log M_k = \infty$.
	Letting $\xi \downarrow 0$ in~\eqref{eq:110}, we obtain~\eqref{eq:con2}. 
	}
\end{proof}
{
Note that for any sequence $\brackcurl{a_n}_{n \in \mathbb{N}^*}$ and any infinite set $\calN \subseteq \mathbb{N}^*$, we have, by definition, 
\begin{align}
	\liminf_{n \to \infty} a_n \leq \liminf_{\substack{n \to \infty \\ n \in \calN}} a_n \leq \limsup_{\substack{n \to \infty \\ n \in \calN}} a_n \leq \limsup_{n \to \infty} a_n. \label{eq:lim1}
\end{align}
From Proposition~\ref{prop:converse} and equation~\eqref{eq:lim1}, we conclude that the covert capacity region is contained in the region defined by
\begin{align}
\bigcup_{\{\rho_k\}_{k\in\calK}\in[0,1]^K:\sum_{k \in \calK} \rho_k = 1}\left\{\{r_k\}_{k\in\calK}:\forall k\in\calK,\quad r_k\leq \sqrt{\frac{2}{\chi\bracknorm{\rhovector}}} \rho_k \avgD{P_k}{\pz}\right\}, \label{eq:region2}
\end{align}
and that, any achievable covert throughput tuple $r\bracksq{\calK}$ characterized by a specific $\rhovector$  and lying on the boundary of the region defined in~\eqref{eq:region2} is associated to an achievable key throughput of at least ${ \sqrt{ \frac{2}{\chi(\rhovector) }} \rho_k \left[\avgD{Q_k}{\qz} - \avgD{P_k}{\pz}\right]^+}$ for each $k \in \calK$. 
%	Proposition~\ref{prop:converse} proves that the covert communication scheme analyzed in Proposition~\ref{prop:achievability} is optimal.
}

\section{Conclusion} \label{sec:conclusion}
{We conclude with a discussion of extensions of our results and related problems of interest.
  
We confirm that our proof techniques and results extend to multiple \emph{non-innocent} symbols as in~[5, Theorem 2] and [6, Corollary 3]. However, our relatively concise notation for the subscript of the one-shot output distributions $P$ and $Q$ heavily relies on the fact that users who transmit can only transmit a uniquely defined information symbol. This allows us to index the distributions with the set of transmitting users. In the presence of multiple information symbols, this ease of notation is lost, and one needs to keep track of not only which user is transmitting but also what symbol is transmitted. That being said, the key conceptual results that lead to our characterization hold as in~\cite{Wang2016b,bloch2016covert}. 
For more details, the interested reader can refer to Appendix~\ref{sec:multipleinput}. 

Our resolvability analysis is not directly applicable to \ac{AWGN} channels since we use $\nu_{\min} \eqdef \min_z \qz(z) $ in the denominator of~\eqref{eq:63} and~\eqref{eq:37}, which is zero for \ac{AWGN} channels. However, our achievability results can be extended to \ac{AWGN} channels by using resolvability exponents as in~\cite{hayashi2006general, endo2014reliability, parizi2017exact} to obtain a bound for the KL divergence $\avgD{\smash{\qhatn}}{\qzn}$ that does not rely on the discrete or continuous nature of the channel output alphabet. 
The converse argument can be developed by extending the approach of~\cite{Wang2016b} to deal with multiple users, and one expects the covert capacity region to be
\begin{align*}
  \bigcup_{\{\rho_k\}_{k\in\calK}:\sum_{k}\rho_k=1}\left\{\{r_{k}\}_{k\in\calK}:r_k\leq\rho_k\right\}.
\end{align*}
For more details, the reader can refer to Appendix~\ref{sec:AWGN}. 

A final problem of interest is the characterization of the covert capacity region of a $K$-user \ac{MAC} in which the transmitters share a \emph{common} key. Unlike the situation addressed here, the common key scenario captures the ability of users to \emph{coordinate} their covert transmissions. One can approach the problem by following cooperative channel resolvability techniques studied in~\cite{helal2018multiple,Helhal2018a}.
}

\appendices

\section{Alternative representation of $\qal$ in Eq.~\eqref{eq:13}} \label{sec:altrep}
\begin{lemma}
  \label{lem:altrepresentation}
	For any set $\calS \subseteq \calK $, define $G_\calS(z) \eqdef \sum_{\calT \subseteq \calS} \bracknorm{-1}^{\card{\calS} - \card{\calT}}  Q_{\calT}(z)$. Then,
	\begin{align}
		\qal(z) = \qz(z) + \sum_{\substack{\calS \subseteq \calK: \\ \calS \neq \emptyset}} \bracknorm{\prod_{k \in \calS}  \rho_k \alpha_n} G_{\calS}(z). \label{eq:120}
	\end{align}
\end{lemma}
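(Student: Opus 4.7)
The plan is to start from the explicit definition of $\qal$ in~\eqref{eq:10} and repackage it via an inclusion-exclusion argument on the binary input pattern. Writing each channel input tuple $x{\bracksq{\calK}}\in\{0,1\}^K$ through the index set $\calU\eqdef\{k:x_k=1\}$, we have $W_{Z|X{\bracksq{\calK}}}(z|x{\bracksq{\calK}})=Q_\calU(z)$ and $\prod_{k\in\calK}\Pi_{X_k}(x_k)=\prod_{k\in\calU}\rho_k\alpha_n\prod_{k\in\calU^c}(1-\rho_k\alpha_n)$, so that
\begin{align*}
\qal(z)=\sum_{\calU\subseteq\calK}Q_\calU(z)\Bigl(\prod_{k\in\calU}\rho_k\alpha_n\Bigr)\prod_{k\in\calU^c}(1-\rho_k\alpha_n).
\end{align*}

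Next I expand the $(1-\rho_k\alpha_n)$ factors as $\prod_{k\in\calU^c}(1-\rho_k\alpha_n)=\sum_{\calR\subseteq\calU^c}(-1)^{\card{\calR}}\prod_{k\in\calR}\rho_k\alpha_n$. Setting $\calS\eqdef\calU\cup\calR$ (a disjoint union) turns the pair $(\calU,\calR)$ with $\calR\subseteq\calU^c$ into a pair $(\calU,\calS)$ with $\calU\subseteq\calS\subseteq\calK$, and $\card{\calR}=\card{\calS}-\card{\calU}$. Substituting and then swapping the order of summation (summing first over $\calS\subseteq\calK$, then over $\calU\subseteq\calS$) gives
\begin{align*}
\qal(z)=\sum_{\calS\subseteq\calK}\Bigl(\prod_{k\in\calS}\rho_k\alpha_n\Bigr)\sum_{\calU\subseteq\calS}(-1)^{\card{\calS}-\card{\calU}}Q_\calU(z)=\sum_{\calS\subseteq\calK}\Bigl(\prod_{k\in\calS}\rho_k\alpha_n\Bigr)G_\calS(z),
\end{align*}
by definition of $G_\calS$. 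Finally, isolating the $\calS=\emptyset$ term (for which the empty product equals $1$ and $G_\emptyset(z)=\qz(z)$) yields the claimed identity~\eqref{eq:120}.

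There is no real obstacle here; the argument is a purely combinatorial reindexing and the only subtle step is recognizing that expanding the $(1-\rho_k\alpha_n)$ factors plus the swap $\calS=\calU\cup\calR$ is exactly the Möbius inversion that produces $G_\calS$. Alternatively, one could prove~\eqref{eq:120} by verifying that its right-hand side equals $\sum_\calU\bigl(\prod_{k\in\calU}\rho_k\alpha_n\bigr)\bigl(\prod_{k\in\calU^c}(1-\rho_k\alpha_n)\bigr)Q_\calU(z)$ directly, via the standard inclusion-exclusion identity $\sum_{\calU\subseteq\calS}(-1)^{\card{\calS}-\card{\calU}}=\indic{\calS=\emptyset}$ applied after collecting the coefficient of each $Q_\calU(z)$; either route is essentially one-line.
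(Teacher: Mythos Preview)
Your proof is correct and follows essentially the same approach as the paper: both start from the subset expansion $\qal(z)=\sum_{\calU\subseteq\calK}Q_\calU(z)\bigl(\prod_{k\in\calU}\rho_k\alpha_n\bigr)\prod_{k\in\calU^c}(1-\rho_k\alpha_n)$, expand the complementary product, and reindex via $\calS=\calU\cup\calR$ to recognize the inner sum as $G_\calS$. The only cosmetic difference is that the paper first establishes the product expansion $\prod_{k\in\calS}(1-\beta_k)=1+\sum_{\calT\subseteq\calS,\,\calT\neq\emptyset}(-1)^{\card{\calT}}\prod_{k\in\calT}\beta_k$ by an explicit induction on $\card{\calS}$, whereas you invoke it directly as a known identity and frame the whole computation as M\"obius inversion; your presentation is more concise but the underlying manipulation is identical.
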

\begin{proof}
First, we prove the following statement by induction. For any set $\calS$ and $\beta_k \in [0,1]$ for $k \in \calK$,
\begin{align}
	\prod_{k \in \calS} \bracknorm{1 -  \beta_k} & = 1 + \sum_{\substack{\calT \subseteq \calS: \\ \calT \neq \emptyset}} \bracknorm{-1}^{\card{\calT}} \bracknorm{\prod_{k \in \calT}  \beta_k}. \label{eq:111_1}
\end{align}
It is straightforward to show that~\eqref{eq:111_1} is true for $\calS = \brackcurl{1}$. When $\calS = \brackcurl{1,2}$, we have 
\begin{align}
	\prod_{k \in \brackcurl{1,2}} \bracknorm{1 -  \beta_k} & =  1 + \sum_{\substack{\calT \subseteq \brackcurl{1,2}: \\\calT \neq \emptyset}} \bracknorm{-1}^{\card{\calT}} \bracknorm{ \prod_{k \in \calT}  \beta_k } \\
	& = 1 - \beta_1 - \beta_2 + \beta_1 \beta_2. \label{eq:111_2}
\end{align}
We assume that~\eqref{eq:111_1} is true for the set $\calS \eqdef \intseq{1}{K-1}$, where $K \in \mathbb{N}^*$. 
Then, for the set $\calS' \eqdef \calS \cup \brackcurl{K}$, we have
\begin{align}
	\prod_{k \in \calS'} \bracknorm{1 -  \beta_k} & = \bracknorm{1 -  {\beta_K}} \prod_{k \in \calS} \bracknorm{1 -  \beta_k} \displaybreak[0]\\ 
	& = \bracknorm{1 - {\beta_K}} \bracknorm{1 + \sum_{\substack{\calT \subseteq \calS: \\ \calT \neq \emptyset}} \bracknorm{-1}^{\card{\calT}} \bracknorm{\prod_{k \in \calT}  \beta_k}} \displaybreak[0]\\
	& = 1 - {\beta_K} + \sum_{\substack{\calT \subseteq \calS: \\ \calT \neq \emptyset}} \bracknorm{-1}^{\card{\calT}} \bracknorm{\prod_{k \in \calT}  \beta_k}  - {\beta_K} \bracknorm{ \sum_{\substack{\calT \subseteq \calS: \\ \calT \neq \emptyset}} \bracknorm{-1}^{\card{\calT}} \bracknorm{\prod_{k \in \calT}  \beta_k}}  \displaybreak[0]\\
	& \stackrel{(a)}{=} 1 + \sum_{\calT = \brackcurl{K} } \bracknorm{-1}^{\card{\calT}} \bracknorm{\prod_{k \in \calT}  \beta_k } + \sum_{\substack{\calT \subseteq \calS: \\ \card{\calT} = 1 }} \bracknorm{-1}^{\card{\calT}} \bracknorm{\prod_{k \in \calT}  \beta_k} \nex
	& \dupspace + \sum_{\substack{\calT \subseteq \calS : \\ \card{\calT} > 1 }} \bracknorm{-1}^{\card{\calT}} \bracknorm{\prod_{k \in \calT}  \beta_k}  + \sum_{ \substack{\calT \subseteq \calS' : \\ \card{\calT}>1 ,~ K \in \calT }} \bracknorm{-1}^{\card{\calT}} \bracknorm{\prod_{k \in \calT}  \beta_k}   \displaybreak[0]\\
	& \stackrel{(b)}{=} 1 + \sum_{\substack{\calT \subseteq \calS': \\ \card{\calT} = 1 }} \bracknorm{-1}^{\card{\calT}} \bracknorm{\prod_{k \in \calT}  \beta_k} + \sum_{\substack{\calT \subseteq \calS': \\ \card{\calT} > 1,~ K \not \in \calT }} \bracknorm{-1}^{\card{\calT}} \bracknorm{\prod_{k \in \calT}  \beta_k} \nex
	& \dupspace + \sum_{ \substack{\calT \subseteq \calS': \\ \card{\calT}>1,~ K \in \calT }} \bracknorm{-1}^{\card{\calT}} \bracknorm{\prod_{k \in \calT}  \beta_k}  \displaybreak[0]\\ 
	& = 1 + \sum_{\substack{\calT \subseteq \calS' : \\ \card{\calT} = 1 }} \bracknorm{-1}^{\card{\calT}} \bracknorm{\prod_{k \in \calT}  \beta_k} + \sum_{\substack{\calT \subseteq \calS' : \\ \card{\calT} > 1 }} \bracknorm{-1}^{\card{\calT}} \bracknorm{\prod_{k \in \calT}  \beta_k} \displaybreak[0]\\ 
	& = 1 + \sum_{\substack{\calT \subseteq \calS': \\ \calT \neq \emptyset}} \bracknorm{-1}^{\card{\calT}} \bracknorm{\prod_{k \in \calT}  \beta_k}, \label{eq:111_3}
\end{align}
where $(a)$ follows from the fact that $(- {\beta_K} )\bracknorm{ \sum_{\substack{\calT \subseteq \calS: \\ \calT \neq \emptyset}} \bracknorm{-1}^{\card{\calT}} \bracknorm{\prod_{k \in \calT}  \beta_k}} = \sum_{ \substack{\calT \subseteq \calS' : \\ \card{\calT}>1,~ K \in \calT }} \bracknorm{-1}^{\card{\calT}} \allowbreak \bracknorm{\prod_{k \in \calT}  \beta_k}$, and $(b)$  follows from the fact that $\sum_{\substack{\calT \subseteq \calS : \\ \card{\calT} > 1 }} \bracknorm{-1}^{\card{\calT}} \bracknorm{\prod_{k \in \calT}  \beta_k} = \sum_{ \substack{\calT \subseteq \calS': \\ \card{\calT}>1 ,~ K \not \in \calT }} \bracknorm{-1}^{\card{\calT}} \allowbreak \bracknorm{\prod_{k \in \calT}  \beta_k}$. 
From the definition of $\qal$ {in~\eqref{eq:10}}, we have
\begin{align}
	\qal(z) & = \sum_{x{\bracksq{\calK}}} \bracknorm{\prod_{k \in \calK} \Pi_{X_k}({x_k}) } W_{Z|X{\bracksq{\calK}}}(z|x{\bracksq{\calK}}) \\
	& = \sum_{\calT \subseteq \calK} \bracknorm{\prod_{k \in \calT}  \rho_k \alpha_n} \bracknorm{\prod_{k \in \calT^c} \bracknorm{ 1 -  \rho_k \alpha_n}} Q_{\calT}(z) \\
	& = \bracknorm{\prod_{k \in \calK} \bracknorm{1 -  \rho_k \alpha_n} } \qz(z) + \sum_{\substack{\calT \subseteq \calK: \\ \calT \neq \emptyset}} \bracknorm{\prod_{k \in \calT}  \rho_k \alpha_n} \bracknorm{\prod_{k \in \calT^c} \bracknorm{ 1 -  \rho_k \alpha_n}} Q_{\calT}(z) \\
	& \stackrel{(a)}{=} \bracknorm{\prod_{k \in \calK} \bracknorm{1 -  \rho_k \alpha_n} } \qz(z) + \sum_{\substack{\calT \subseteq \calK: \\ \calT \neq \emptyset}} \bracknorm{\prod_{k \in \calT}  \rho_k \alpha_n} \bracknorm{1 + \sum_{\substack{\calU \subseteq \calT^c : \\ \calU \neq  \emptyset}} \bracknorm{-1}^{\card{\calU}} \bracknorm{\prod_{k \in \calU}  \rho_k \alpha_n}} Q_{\calT}(z) \\
	& = \bracknorm{\prod_{k \in \calK} \bracknorm{1 -  \rho_k \alpha_n} } \qz(z) + \sum_{\substack{\calT \subseteq \calK: \\ \calT \neq \emptyset}} \bracknorm{\prod_{k \in \calT}  \rho_k \alpha_n} \bracknorm{\sum_{\calU \subseteq \calT^c} \bracknorm{-1}^{\card{\calU}} \bracknorm{\prod_{k \in \calU}  \rho_k \alpha_n}} Q_{\calT}(z), \label{eq:111}
\end{align}
where $(a)$ follows from~\eqref{eq:111_1}. 
Since $\calT$ and $\calU$ are disjoint sets, it follows from~\eqref{eq:111} that
\begin{align}
	\qal(z) & = \bracknorm{\prod_{k \in \calK} \bracknorm{1 -  \rho_k \alpha_n} } \qz(z) + \sum_{\substack{\calT \subseteq \calK: \\ \calT \neq \emptyset}} \bracknorm{\prod_{k \in \calT}  \rho_k \alpha_n}\bracknorm{\sum_{{\substack{\calS \subseteq \calK: \\ \calT \subseteq \calS}}} \bracknorm{-1}^{\card{\calS} - \card{\calT}} \bracknorm{\prod_{k \in \bracknorm{ \calS \setminus \calT}}  \rho_k \alpha_n} } Q_{\calT}(z)  \displaybreak[0]\\
	& \stackrel{(a)}{=} \qz(z) + \sum_{\substack{\calS \subseteq \calK : \\ \calS \neq \emptyset}} \bracknorm{-1}^{\card{\calS}} \bracknorm{\prod_{k \in \calS}  \rho_k \alpha_n}  \qz(z) + \sum_{{\substack{\calT \subseteq \calK: \\ \calT \neq \emptyset}}} \bracknorm{\sum_{\substack{\calS \subseteq \calK : \\  \calT \subseteq \calS}} \bracknorm{-1}^{\card{\calS} - \card{\calT}} \bracknorm{\prod_{k \in \calS}  \rho_k \alpha_n} } Q_{\calT}(z)  \displaybreak[0]\\
	& = \qz(z) + \sum_{\substack{\calS \subseteq \calK: \\ \calS \neq \emptyset}} \bracknorm{-1}^{\card{\calS}}  \bracknorm{\prod_{k \in \calS}  \rho_k \alpha_n}  \qz(z) + \sum_{\substack{\calS \subseteq \calK: \\ \calS \neq \emptyset}} \bracknorm{\prod_{k \in \calS}  \rho_k \alpha_n} \bracknorm{ \sum_{\substack{\calT \subseteq \calS: \\ \calT \neq \emptyset}} \bracknorm{-1}^{\card{\calS} - \card{\calT}}   Q_{\calT}(z)}  \displaybreak[0]\\ 
	& = \qz(z) + \sum_{\substack{\calS \subseteq \calK: \\ \calS \neq \emptyset}} \bracknorm{\prod_{k \in \calS}  \rho_k \alpha_n} \bracknorm{ \sum_{\calT \subseteq \calS } \bracknorm{-1}^{\card{\calS} - \card{\calT}}   Q_{\calT}(z)} , \label{eq:112}
\end{align}
where $(a)$ follows from~\eqref{eq:111_1}. 
Defining $G_\calS(z) \eqdef \sum_{\calT \subseteq \calS} \bracknorm{-1}^{\card{\calS} - \card{\calT}}  Q_{\calT}(z)$, we obtain
	\begin{align}
		\qal(z) = \qz(z) + \sum_{\substack{\calS \subseteq \calK: \\ \calS \neq \emptyset}} \bracknorm{\prod_{k \in \calS}  \rho_k \alpha_n} G_{\calS}(z). \label{eq:120}
	\end{align}
\end{proof}
\begin{corollary}
  \label{cor:corr-representation}
  For any set $\calS \subseteq \calK $, define $G_\calS(z) \eqdef \sum_{\calT \subseteq \calS} \bracknorm{-1}^{\card{\calS} - \card{\calT}}  Q_{\calT}(z)$. Then,
  \begin{align}
    W_{Z|X_k}(z|1) = Q_k(z) + \sum_{\substack{\calS \subseteq \calK \setminus \brackcurl{k}: \\ \calS \neq \emptyset}} \bracknorm{\prod_{i \in \calS} \rho_i \alpha_n} \bracknorm{ \sum_{\calT \subseteq \calS } \bracknorm{-1}^{\card{\calS} - \card{\calT}}   Q_{\calT \cup \brackcurl{k}}(z)}. \label{eq:119}
  \end{align}
\end{corollary}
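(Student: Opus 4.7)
The plan is to mimic the derivation of Lemma~\ref{lem:altrepresentation}, but applied to the reduced system in which user $k$'s symbol is pinned to $1$ and the marginalization runs only over the $K-1$ other users. Under this restriction, the role of the ``all-zero'' output $\qz$ in Lemma~\ref{lem:altrepresentation} is naturally played by $Q_k$, and any subset $\calT \subseteq \calK \setminus \brackcurl{k}$ of \emph{additional} users transmitting $1$ yields the output distribution $Q_{\calT \cup \brackcurl{k}}$. This indexing shift is the only real content of the corollary.

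Concretely, I would start from
\begin{align}
W_{Z|X_k}(z|1) = \sum_{x{\bracksq{\calK \setminus \brackcurl{k}}}} \bracknorm{\prod_{i \in \calK \setminus \brackcurl{k}} \Pi_{X_i}(x_i)} W_{Z|X{\bracksq{\calK}}}\bracknorm{z \big| x{\bracksq{\calK \setminus \brackcurl{k}}}, x_k = 1}, \nonumber
\end{align}
and split the sum according to the set $\calT \subseteq \calK \setminus \brackcurl{k}$ of indices whose symbol is $1$, so that
\begin{align}
W_{Z|X_k}(z|1) = \sum_{\calT \subseteq \calK \setminus \brackcurl{k}} \bracknorm{\prod_{i \in \calT} \rho_i \alpha_n} \bracknorm{\prod_{i \in (\calK \setminus \brackcurl{k}) \setminus \calT} \bracknorm{1 - \rho_i \alpha_n}} Q_{\calT \cup \brackcurl{k}}(z). \nonumber
\end{align}

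Next, I would expand each factor $\prod_{i \in (\calK \setminus \brackcurl{k}) \setminus \calT} (1 - \rho_i \alpha_n)$ using identity~\eqref{eq:111_1} (established in the proof of Lemma~\ref{lem:altrepresentation}) and recollect the terms by the resulting subsets $\calS \subseteq \calK \setminus \brackcurl{k}$. The algebra that follows is word-for-word the calculation from~\eqref{eq:111} to~\eqref{eq:112}, with every occurrence of $\calK$ replaced by $\calK \setminus \brackcurl{k}$ and every occurrence of $Q_\calT$ replaced by $Q_{\calT \cup \brackcurl{k}}$. The term that became $\qz$ in~\eqref{eq:120} becomes $Q_k$ here, arising from the $\calT = \emptyset$ contribution together with the leading $1$ produced by~\eqref{eq:111_1}.

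I do not expect a genuine obstacle: the proof is purely a bookkeeping exercise in reindexing Lemma~\ref{lem:altrepresentation}. The only step warranting care is verifying that when $\calS$ ranges over non-empty subsets of $\calK \setminus \brackcurl{k}$, the inner alternating sum $\sum_{\calT \subseteq \calS} (-1)^{\card{\calS} - \card{\calT}} Q_{\calT \cup \brackcurl{k}}(z)$ emerges cleanly; this is guaranteed because the shift $\calT \mapsto \calT \cup \brackcurl{k}$ is a bijection from subsets of $\calS$ to subsets of $\calS \cup \brackcurl{k}$ containing $k$, and it preserves the cardinality differences appearing in the sign. Alternatively, since the paper already has Lemma~\ref{lem:altrepresentation} in hand, one may simply invoke it verbatim for the $(K-1)$-user MAC indexed by $\calK \setminus \brackcurl{k}$ with the conditional channel $W_{Z|X{\bracksq{\calK \setminus \brackcurl{k}}}, X_k = 1}$, whose innocent output is $Q_k$ and whose subset-output distributions are $\{Q_{\calT \cup \brackcurl{k}}\}_{\calT \subseteq \calK \setminus \brackcurl{k}}$, to obtain~\eqref{eq:119} in one line.
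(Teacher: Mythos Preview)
Your proposal is correct and matches the paper's approach exactly: the paper states Corollary~\ref{cor:corr-representation} without a separate proof, and in the converse proof it is invoked with the phrase ``following steps similar to those leading to Corollary~\ref{cor:corr-representation},'' confirming that it is meant to be obtained by repeating the derivation of Lemma~\ref{lem:altrepresentation} with user $k$'s input pinned to $1$, just as you outline. Your observation that one may equivalently invoke Lemma~\ref{lem:altrepresentation} directly for the $(K-1)$-user channel $W_{Z|X{\bracksq{\calK \setminus \brackcurl{k}}}, X_k = 1}$ is a clean way to formalize this.
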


\section{Proof of Lemma~\ref{lem:covertprocess}}
\label{sec:covertprocess}
From the definition of $\avgD{Q_{\alpha_n}}{\qz}$, we have 
\begin{align}
  \avgD{Q_{\alpha_n}}{\qz} &= \sum_z Q_{\alpha_n}(z) \log \frac{Q_{\alpha_n}(z)}{\qz(z)} \\
                           & = \sum_z \qz(z) \bracknorm{1 + \frac{\alpha_n \zeta_n(z)}{\qz(z)}} \log  \bracknorm{1 + \frac{\alpha_n \zeta_n(z)}{\qz(z)}}. \label{eq:16}
\end{align}
Since $\log \bracknorm{1+x} < x - \frac{x^2}{2} + \frac{x^3}{3}$, for $x > -1$, we upper bound~\eqref{eq:16} by
\begin{align}
		\avgD{Q_{\alpha_n}}{\qz} & {\leq} \sum_z \qz(z) \bracknorm{1 + \frac{\alpha_n \zeta_n(z)}{\qz(z)}} \bracknorm{\frac{\alpha_n \zeta_n(z)}{\qz(z)} -  \frac{\alpha_n^2\zeta_n^2(z)}{2\qz^2(z)} +  \frac{\alpha_n^3\zeta_n^3(z)}{3\qz^3(z)} } \\
		& \stackrel{(a)}{=} \sum_z \frac{\alpha_n^2}{2} \bracknorm{\frac{\zeta_n^2(z)}{\qz(z)} -  \frac{\alpha_n\zeta_n^3(z)}{3\qz^2(z)} + \frac{2\alpha_n^2\zeta_n^4(z)}{3\qz^3(z)} }, \label{eq:17}
	\end{align}
	where, $(a)$ follows from the fact that $\sum_z \zeta_n(z) = 0$ from the definition of $\zeta_n$. 
	Since $\limn \alpha_n = 0$, $\alpha_n$ is small enough for a sufficiently large $n$ and $\frac{\alpha_n \zeta_n(z)}{\qz(z)} \in \bracksq{-\frac{1}{2},0}$ for any $z \in \calZ$ if $\zeta_n(z)<0$. Then, we lower bound~\eqref{eq:16} by
	\begin{align}
		\avgD{Q_{\alpha_n}}{\qz} & \!\stackrel{(a)}{\geq} \!\sum_z \qz(z) \!\bracknorm{1 + \frac{\alpha_n \zeta_n(z)}{\qz(z)}} \!\!\bracknorm{\frac{\alpha_n \zeta_n(z)}{\qz(z)} - \frac{\alpha_n^2\zeta_n^2(z)}{2\qz^2(z)}}\!\! \nex
		& \dupspace + \!\!\!\! \sum_{z: \zeta_n(z)<0} \!\!\!\! \qz(z)\!\! \bracknorm{1 + \frac{\alpha_n \zeta_n(z)}{\qz(z)}} \!\!\bracknorm{\frac{2\alpha_n^3 \zeta_n^3(z)}{3\qz^3(z)}} \displaybreak[0]\\
		& \stackrel{(b)}{\geq} \sum_z \frac{\alpha_n^2}{2} \bracknorm{\frac{\zeta_n^2(z)}{\qz(z)} - \frac{\alpha_n \zeta_n^3(z)}{\qz^2(z)}} + \sum_{z: \zeta_n(z)<0} \frac{2\alpha_n^3\zeta_n^3(z)}{3\qz^2(z)}, \label{eq:18}
	\end{align}
	where, $(a)$ follows from the inequalities $\log \bracknorm{1+x} > x - \frac{x^2}{2}$ for $x \geq 0$ and $\log \bracknorm{1+x} > x - \frac{x^2}{2} + \frac{2x^3}{3}$ for $x \in \bracksq{-\frac{1}{2}, 0}$, and $(b)$ follows from the fact that $\sum_z \zeta_n(z) = 0$. For $n$ large enough, we loosen the bounds in~\eqref{eq:17} and~\eqref{eq:18} to obtain
	\begin{align}
		\frac{\alpha_n^2}{2} \bracknorm{1 + \sqrt{\alpha_n}} \chi_n{\bracknorm{\rhovector}} \geq \avgD{Q_{\alpha_n}}{\qz} \geq \frac{\alpha_n^2}{2} \bracknorm{1 - \sqrt{\alpha_n}} \chi_n{\bracknorm{\rhovector}}. \label{eq:19}
	\end{align}
	From the definition of $\qal$, we have
	\begin{align}
		\qal(z) = \qz(z) + \alpha_n \bracknorm{\sum_{k \in \calK} \rho_k \bracknorm{Q_k(z) - \qz(z)} } + \bigO{\alpha_n^2}. \label{eq:20}
	\end{align}
	Using the definition of $\zeta_n$ and applying the limit, we obtain
	\begin{align}
		\limn \zeta_n(z) & = \limn \frac{\qal(z) - \qz(z)}{\alpha_n}  \\
		& = \limn \bracknorm{ \sum_{k \in \calK} \rho_k \bracknorm{Q_k(z) - \qz(z)} + \bigO{\alpha_n}} \\
		& \stackrel{(a)}{=} \sum_{k \in \calK} \rho_k \bracknorm{Q_k(z) - \qz(z)} \\
		& = \zeta(z), \label{eq:21}
	\end{align}
	where $(a)$ follows from the fact that $\limn \alpha_n = 0$. 
	From~\eqref{eq:21} and the definition of $\chi_n{\bracknorm{\rhovector}}$, it follows that 
	\begin{align}
		\limn \chi_n{\bracknorm{\rhovector}} = \limn \sum_z \frac{\zeta_n^2(z)}{\qz(z)} = \sum_z \frac{\zeta^2(z)}{\qz(z)} = \chi{\bracknorm{\rhovector}}. \label{eq:22}
	\end{align} 
	Finally, for a non-empty set $\calT \subseteq \calK$, define $\lambda_{n,\calT}(z) \eqdef \frac{ W_{Z|X\bracksq{\calT}} \bracknorm{z|x_{\emptyset}\bracksq{\calT}} - \qz(z)}{\alpha_n} $. Note that $\sum_z \lambda_{n, \calT}(z) = 0$.  Then, for any non-empty set $\calT \subseteq \calK$, we have
	\begin{align}
		\avgI{X{{\bracksq{\calT}}};Z} &= \sum_{x {\bracksq{\calT}}} \sum_z \bracknorm{\prod_{k \in \calT} \Pi_{X_k}\bracknorm{x_k}} W_{Z|X{\bracksq{\calT}}} \bracknorm{z | x{{\bracksq{\calT}}}} \log \bracknorm{\frac{W_{Z|X{\bracksq{\calT}}} \bracknorm{z | x{\bracksq{\calT}}}}{\qal(z)}}  \displaybreak[0]\\
		& = \sum_{x {\bracksq{\calT}}} \sum_z \bracknorm{\prod_{k \in \calT} \Pi_{X_k}\bracknorm{x_k}} W_{Z|X{\bracksq{\calT}}} \bracknorm{z | x{\bracksq{\calT}}} \log \bracknorm{\frac{W_{Z|X{\bracksq{\calT}}} \bracknorm{z | x{\bracksq{\calT}}}}{\qz(z)}} - \avgD{\qal}{\qz} \label{eq:23}  \displaybreak[0]\\
		& \stackrel{(a)}{=} \!\!\! \sum_{\substack{\calU \subseteq \calT: \\ \card{\calU}> 1}}  \bracknorm{ \prod_{k \in \calU} \rho_k \alpha_n } \bracknorm{\prod_{k \in \calU^c} \bracknorm{1 - \rho_k \alpha_n} } \sum_z W_{Z|X{\bracksq{\calT}}} \bracknorm{z|x_{\calU}{\bracksq{\calT}}} \log \bracknorm{ \frac{W_{Z|X{\bracksq{\calT}}} \bracknorm{z | x_{\calU}{\bracksq{\calT}}}}{\qz(z)}} \nex
		& \dupspace + \sum_{k \in \calT} \rho_k \alpha_n \bracknorm{ \prod_{\substack{i \in \calT: \\ i \neq k}} \bracknorm{1 - \rho_i \alpha_n} } \sum_z W_{Z|X{\bracksq{\calT}}}\bracknorm{z| x_{\brackcurl{k}}{\bracksq{\calT}}} \log \bracknorm{\frac{W_{Z|X{\bracksq{\calT}}}\bracknorm{z| x_{\brackcurl{k}}{\bracksq{\calT}}}}{\qz(z)}} \nex
		& \dupspace + \bracknorm{\prod_{k \in \calT} \bracknorm{1 - \rho_k \alpha_n} } \sum_z W_{Z|X\bracksq{\calT}}\bracknorm{z|x_\emptyset \bracksq{\calT}} \log \bracknorm{ \frac{W_{Z|X\bracksq{\calT} } \bracknorm{z|x_\emptyset \bracksq{\calT}} }{\qz(z)}}  - \avgD{\qal}{\qz} \label{eq:23_1} \displaybreak[0] \\
		& = \sum_{k \in \calT} \rho_k \alpha_n \bracknorm{ \prod_{\substack{i \in \calT: \\ i \neq k} } \bracknorm{1 - \rho_i \alpha_n} } \sum_z W_{Z|X{\bracksq{\calT}}}\bracknorm{z| x_{\brackcurl{k}}{\bracksq{\calT}}} \log \bracknorm{\frac{W_{Z|X{\bracksq{\calT}}}\bracknorm{z| x_{\brackcurl{k}}{\bracksq{\calT}}}}{\qz(z)}} \nex
		& \dupspace + \bracknorm{\prod_{k \in \calT} \bracknorm{1 - \rho_k \alpha_n} } \sum_z  \bracknorm{\qz(z) + \alpha_n \lambda_{n,\calT}(z)} \log \bracknorm{ 1 +  \alpha_n \frac{ \lambda_{n,\calT}(z)}{\qz(z)}} \nex
		& \dupspace \dupspace - \avgD{\qal}{\qz} + \bigO{\alpha_n^2} \label{eq:23_2} \displaybreak[0]\\
		& \stackrel{(b)}{=}  \sum_{k \in \calT} \rho_k \alpha_n \sum_z W_{Z|X{\bracksq{\calT}}}\bracknorm{z| x_{\brackcurl{k}}{\bracksq{\calT}}} \log \bracknorm{\!\frac{W_{Z|X{\bracksq{\calT}}}\bracknorm{z| x_{\brackcurl{k}}{\bracksq{\calT}}}}{\qz(z)}\!}-\! \avgD{\qal}{\qz} + \! \bigO{\alpha_n^2} \displaybreak[0]\\
		& =  \sum_{k \in \calT} \rho_k \alpha_n \sum_z \bracknorm{ Q_k(z) + \bigO{\alpha_n}} \log \bracknorm{\frac{ Q_k(z) + \bigO{\alpha_n}}{\qz(z)}}- \avgD{\qal}{\qz} + \bigO{\alpha_n^2} \displaybreak[0]\\
		& =  \sum_{k \in \calT} \rho_k \alpha_n \sum_z  Q_k(z)  \log \bracknorm{ \bracknorm{ \frac{Q_k(z)}{\qz(z)} }\bracknorm{1 + \frac{\bigO{\alpha_n}}{Q_k(z)}}}- \avgD{\qal}{\qz} + \bigO{\alpha_n^2}  \\
		& = \sum_{k \in \calT} \rho_k \alpha_n \bracknorm{\avgD{Q_k}{\qz} +  \sum_z  Q_k(z)  \log \bracknorm{ 1 + \frac{\bigO{\alpha_n}}{Q_k(z)}}}- \avgD{\qal}{\qz} + \bigO{\alpha_n^2} \displaybreak[0]\\
		& = \sum_{k \in \calT} \rho_k \alpha_n \avgD{Q_k}{\qz}- \avgD{\qal}{\qz} + \bigO{\alpha_n^2}, \label{eq:24} 
	\end{align}
	where $(a)$ follows from splitting the first term in~\eqref{eq:23} into three based on the number of users sending symbol $1$, and $(b)$ follows from the fact that the second term in~\eqref{eq:23_2} can be reduced to $\bigO{\alpha_n^2}$ by expanding $\log \bracknorm{ 1 +  \alpha_n \frac{ \lambda_{n,\calT}(z)}{\qz(z)}}$ using Taylor series.
	
\section{Bernstein's inequality} \label{sec:Bernstein}
    \begin{lemma}
    	Let $\brackcurl{ U_i}_{i=1}^n$ be independent zero-mean random variables such that $\abs{U_i} \leq c$ {for a finite $c > 0$} almost surely for all $i \in \intseq{1}{n}$. Then, for any $t>0$, 
    	\begin{align}
    		\P{\sum_{i=1}^n U_i > t} \leq \exp \bracknorm{- \frac{\frac{1}{2}t^2}{\sum_{i=1}^n \E{U_i^2} + \frac{1}{3}ct}}. \label{eq:113}
    	\end{align}
    \end{lemma}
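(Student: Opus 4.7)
The plan is to prove Bernstein's inequality via the classical Chernoff bounding technique, which is a textbook argument. First, for any $\lambda > 0$, I would apply Markov's inequality to the nonnegative random variable $\exp\bracknorm{\lambda \sum_{i=1}^n U_i}$ to obtain $\P{\sum_{i=1}^n U_i > t} \leq \exp(-\lambda t) \prod_{i=1}^n \E{\exp(\lambda U_i)}$, where the factorization uses the independence of $\brackcurl{U_i}_{i=1}^n$. The problem then reduces to controlling each moment-generating function and to optimizing the resulting exponent over $\lambda$.

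For each individual MGF, I would expand $\exp(\lambda U_i)$ in a Taylor series. Since $\E{U_i} = 0$ the linear term drops, and since $\abs{U_i} \leq c$ almost surely, the higher moments satisfy $\abs{\E{U_i^k}} \leq c^{k-2} \E{U_i^2}$ for all $k \geq 2$. Combining this with the elementary fact that $k! \geq 2 \cdot 3^{k-2}$ for $k \geq 2$ and summing the resulting geometric series yields, for every $\lambda \in (0, 3/c)$, the bound $\E{\exp(\lambda U_i)} \leq 1 + \frac{\lambda^2 \E{U_i^2}/2}{1 - \lambda c/3} \leq \exp\bracknorm{\frac{\lambda^2 \E{U_i^2}/2}{1 - \lambda c/3}}$, where the last step uses $1 + u \leq e^u$.

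Multiplying these bounds over $i$ and substituting into the Chernoff estimate produces $\P{\sum_{i=1}^n U_i > t} \leq \exp\bracknorm{-\lambda t + \frac{\lambda^2 V/2}{1 - \lambda c/3}}$, where $V \eqdef \sum_{i=1}^n \E{U_i^2}$. It remains to minimize the exponent over $\lambda \in (0, 3/c)$; I would choose $\lambda^* \eqdef t/(V + ct/3)$, which manifestly lies in this interval. A short calculation shows $1 - \lambda^* c/3 = V/(V + ct/3)$, so both terms in the exponent become $t^2/(V + ct/3)$ up to sign and a factor of $1/2$, and the exponent collapses to $-\frac{t^2/2}{V + ct/3}$, which is exactly~\eqref{eq:113}.

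I do not anticipate any genuine obstacle, as this is a completely standard derivation. The only mildly delicate point is the passage from the moment bound $\abs{\E{U_i^k}} \leq c^{k-2}\E{U_i^2}$ to the closed-form geometric estimate of $\E{\exp(\lambda U_i)}$, which is routine once the comparison $k! \geq 2 \cdot 3^{k-2}$ is noted. An equivalent route would be to establish the scalar inequality $e^x - 1 - x \leq \frac{x^2/2}{1 - x/3}$ for $x < 3$ and apply it pointwise before taking expectations; either approach yields the same constants and leads cleanly to~\eqref{eq:113}.
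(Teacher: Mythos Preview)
Your argument is the standard Chernoff--Bernstein derivation and is correct in every step; the moment bound $\abs{\E{U_i^k}} \leq c^{k-2}\E{U_i^2}$, the comparison $k! \geq 2\cdot 3^{k-2}$, the geometric summation, and the choice $\lambda^* = t/(V+ct/3)$ all combine exactly as you describe to yield~\eqref{eq:113}. There is nothing to compare against here: the paper merely \emph{states} Bernstein's inequality in Appendix~\ref{sec:Bernstein} as a known tool and does not supply a proof, so your write-up would in fact fill a gap rather than duplicate anything.
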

    
\section{Proof of Lemma~\ref{lem:reliability}} \label{sec:lemma_rel}
The $K$ users encode messages $W{{\bracksq{\calK}}} = m{{\bracksq{\calK}}}$ using keys $S{\bracksq{\calK}} = \ell{\bracksq{\calK}}$ into codewords $ \xn_{\calK}\bracknorm{m{\bracksq{\calK}}, \ell{\bracksq{\calK}}} $ and transmit them over a discrete memoryless \ac{MAC}. The following two events lead to a decoding error.
\begin{itemize}
	\item The transmitted codewords do not satisfy $\bracknorm{ \xn_{\calK}\bracknorm{m{\bracksq{\calK}}, \ell{\bracksq{\calK}}} ,\yn} \in \agamman{}$. 
	\item A different message vector $\widetilde{m}{{\bracksq{\calK}}} \neq m{{\bracksq{\calK}}}$ exists such that $\bracknorm{\xn_{\calK}\bracknorm{\widetilde{m}{\bracksq{\calK}},\ell{\bracksq{\calK}}},\yn} \in \agamman{}$.
\end{itemize}
Define the event
\begin{align}
	\calE_{m{\bracksq{\calK}}} \eqdef \brackcurl{\bracknorm{\bxn_{\calK}\bracknorm{m{\bracksq{\calK}},\ell{\bracksq{\calK}}},\byn} \in \agamman{}}. \label{eq:47}
\end{align}
The probability of decoding error at the legitimate receiver averaged over all random codebooks is given by
\begin{align}
	\E{P_e^n} & = \P{\widehat{W}{{\bracksq{\calK}}} \neq W{{\bracksq{\calK}}} } \\
	& = \E{ \frac{1}{\bracknorm{\prod_{k \in \calK} M_{k}}} \sum_{m{\bracksq{\calK}}} \sum_\yn W_{Y|X{\bracksq{\calK}}}^{\pn} \bracknorm{\yn|\bxn_{\calK}\bracknorm{m{\bracksq{\calK}},\ell{\bracksq{\calK}}}} \indic{\calE_{m{{\bracksq{\calK}}}}^c \cup \bigcup_{\widetilde{m}{{\bracksq{\calK}}} \neq m{{\bracksq{\calK}}} } \calE_{\widetilde{m}{{\bracksq{\calK}}}} } } \\
	& \stackrel{(a)}{\leq} \E{ \frac{1}{\bracknorm{\prod_{k \in \calK} M_{k}}} \sum_{m{\bracksq{\calK}}} \sum_\yn W_{Y|X{\bracksq{\calK}}}^{\pn} \bracknorm{\yn| \bxn_{\calK}\bracknorm{m{\bracksq{\calK}},\ell{\bracksq{\calK}}}} \indic{\calE_{m{\bracksq{\calK}}}^c  } } \nex
	& \dupspace + \E{ \frac{1}{\bracknorm{\prod_{k \in \calK} M_{k}}} \sum_{m{\bracksq{\calK}}} \sum_\yn W_{Y|X{\bracksq{\calK}}}^{\pn} \bracknorm{\yn| \bxn_{\calK}\bracknorm{m{\bracksq{\calK}},\ell{\bracksq{\calK}}} } \sum_{\widetilde{m}{{\bracksq{\calK}}} \neq m{{\bracksq{\calK}}} }\indic{ \calE_{\widetilde{m}{\bracksq{\calK}}} } }\label{eq:48},
\end{align}
where $(a)$ follows from the application of the union bound. 
We bound the first term in~\eqref{eq:48} by
\begin{align}
	& \E{ \frac{1}{\bracknorm{\prod_{k \in \calK} M_{k}}} \sum_{m{\bracksq{\calK}}} \sum_\yn W_{Y|X{\bracksq{\calK}}}^{\pn} \bracknorm{\yn| \bxn_{\calK}\bracknorm{m{\bracksq{\calK}},\ell{\bracksq{\calK}}}} \indic{\calE_{m{\bracksq{\calK}}}^c  } } \nex
	& \dupspace = \sum_{\xn{{\bracksq{\calK}}}}  \sum_\yn W_{Y|X{\bracksq{\calK}}}^{\pn} \bracknorm{\yn | \xn{\bracksq{\calK}} } \bracknorm{ \prod_{k \in \calK} \Pi_{X_k}^{\pn} \bracknorm{\xn_k}} \indic{\bracknorm{ \xn{\bracksq{\calK}} , \yn } \in {\agamman{}}^c} \\
	& \dupspace = \P[W_{Y|X{\bracksq{\calK}}}^{\pn} \bracknorm{\prod_{k\in \calK} \Pi_{X_k}^{ \pn} }]{{\agamman{}}^c}  \\
	& \dupspace \stackrel{(a)}{\leq} \sum_{\substack{\calT \subseteq \calK: \\ \calT \neq \emptyset}}\P[W_{Y|X{\bracksq{\calK}}}^{\pn} \bracknorm{\prod_{k\in \calK} \Pi_{X_k}^{ \pn} }]{{\agamman{_\calT}}^c} 	\\
	& \dupspace = \sum_{\substack{\calT \subseteq \calK: \\ \calT \neq \emptyset}}\P[W_{Y|X{\bracksq{\calK}}}^{\pn} \bracknorm{\prod_{k\in \calK} \Pi_{X_k}^{ \pn} }]{\log \frac{W_{Y|X{\bracksq{\calK}}}^{\pn}\bracknorm{\byn|\bxn{\bracksq{\calK}}}}{W^\pn_{Y|X{\bracksq{\calT^c}}}\bracknorm{\byn|\bxn{\bracksq{\calT^c}}} } < \gamma_{\calT} } \\
	& \dupspace =  \sum_{\substack{\calT \subseteq \calK: \\ \calT \neq \emptyset}}\P[W_{Y|X{\bracksq{\calK}}}^{\pn} \bracknorm{\prod_{k\in \calK} \Pi_{X_k}^{ \pn} }]{ \sum_{i=1}^n \log \frac{W_{Y|X{\bracksq{\calK}}} \bracknorm{Y|X {\bracksq{\calK}}}}{W_{Y|X{\bracksq{\calT^c}}} \bracknorm{Y|X{\bracksq{\calT^c}}} } < \gamma_{\calT} }, \label{eq:49}
\end{align}
where $(a)$ follows from the fact that ${\agamman{}}^c = \bigcup_{\substack{\calT \subseteq \calK: \\ \calT \neq \emptyset}} {\agamman{_{\calT}}}^c$ and the application of the union bound. 
We define a zero-mean\footnote{{since $\E{\log \frac{W_{Y|X{\bracksq{\calK}}} \bracknorm{Y|X{\bracksq{\calK}}}}{W_{Y|X{\bracksq{\calT^c}}} \bracknorm{Y|X{\bracksq{\calT^c}}} }} = \avgI{X{\bracksq{\calT}}; Y | X{\bracksq{\calT^c}}}$.}} random variable $U_{\calT} \eqdef \avgI{X{\bracksq{\calT}}; Y | X{\bracksq{\calT^c}}} - \log \frac{W_{Y|X{\bracksq{\calK}}} \bracknorm{Y|X{\bracksq{\calK}}}}{W_{Y|X{\bracksq{\calT^c}}} \bracknorm{Y|X{\bracksq{\calT^c}}} } $. 
Note that $\abs{U_{\calT}}$ is bounded almost surely, and
\begin{align}
	\E{U_{\calT}^2} & = \E{\log^2 \frac{W_{Y|X{\bracksq{\calK}}} \bracknorm{Y|X{\bracksq{\calK}}}}{W_{Y|X{\bracksq{\calT^c}}} \bracknorm{Y|X{\bracksq{\calT^c}}} } } - {\bracknorm{\mathbb{I}\bracknorm{X{\bracksq{\calT}}; Y | X{\bracksq{\calT^c}}}}^2.} \label{eq:50} 
\end{align}
Let us analyze the expectation term on the right hand side of~\eqref{eq:50}. 
\begin{align}
	& \E{\log^2 \frac{W_{Y|X{\bracksq{\calK}}} \bracknorm{Y|X{\bracksq{\calK}}}}{W_{Y|X{\bracksq{\calT^c}}} \bracknorm{Y|X{\bracksq{\calT^c}}} } } \nex
	& \dupspace = \sum_y \sum_{x{\bracksq{\calK}}}  \bracknorm{\prod_{k \in \calK} \Pi_{X_k} \bracknorm{x_k}} W_{Y|X{\bracksq{\calK}}}\bracknorm{y|x{\bracksq{\calK}}} \log^2 \frac{W_{Y|X{\bracksq{\calK}}} \bracknorm{y|x{\bracksq{\calK}}}}{W_{Y|X{\bracksq{\calT^c}}} \bracknorm{y|x{\bracksq{\calT^c}}} } \label{eq:51_1} \\
	& \dupspace { \stackrel{(a)}{=} \sum_y \sum_{x{\bracksq{\calK}} \neq x_\emptyset\bracksq{\calK}}  \bracknorm{\prod_{k \in \calK} \Pi_{X_k} \bracknorm{x_k}} W_{Y|X{\bracksq{\calK}}}\bracknorm{y|x{\bracksq{\calK}}} \log^2 \frac{W_{Y|X{\bracksq{\calK}}} \bracknorm{y|x{\bracksq{\calK}}}}{W_{Y|X{\bracksq{\calT^c}}} \bracknorm{y|x{\bracksq{\calT^c}}} } } \nonumber \\
	& \dupspace \dupspace {+  \sum_y \bracknorm{\prod_{k \in \calK} \bracknorm{1 - \rho_k \alpha_n} } \pz(y) \log^2 \frac{\pz(y)}{W_{Y|X{\bracksq{\calT^c}}} \bracknorm{y|x_{\emptyset}{\bracksq{\calT^c}}} } }\label{eq:51_2} \\
	& \dupspace \stackrel{(b)}{=} \sum_y \bracknorm{\prod_{k \in \calK} \bracknorm{1 - \rho_k \alpha_n} } \pz(y) \log^2 \frac{\pz(y)}{W_{Y|X{\bracksq{\calT^c}}} \bracknorm{y|x_{\emptyset}{\bracksq{\calT^c}}} }  + \bigO{\alpha_n} \\
	& \dupspace \stackrel{(c)}{=} \sum_y \pz(y) \log^2 \frac{W_{Y|X{\bracksq{\calT^c}}} \bracknorm{y|x_{\emptyset}{\bracksq{\calT^c}}} }{\pz(y)}  + \bigO{\alpha_n},  \label{eq:51_3}
%	& = \sum_y \pz(y) \log^2 \bracknorm{1 + \frac{\bigO{\alpha_n}}{\pz(y)} }  + \bigO{\alpha_n} \\  
%	& \stackrel{(a)}{=} \bigO{\alpha_n}, \label{eq:51} 
\end{align}
{where $(a)$ follows from splitting the first sum on the right hand side of~\eqref{eq:51_1} into two based on whether $x\bracksq{\calK}$ equals $x_\emptyset\bracksq{\calK}$ or not, $(b)$ follows from the fact that the first term in~\eqref{eq:51_2} is on the order of $\alpha_n$ since at least one of the symbols in $x\bracksq{\calK}$ is a $1$, and $(c)$ follows from the expansion of the product term and the fact that $\log^2 \frac{\pz(y)}{W_{Y|X{\bracksq{\calT^c}}} \bracknorm{y|x_{\emptyset}{\bracksq{\calT^c}}} } = \log^2 \frac{W_{Y|X{\bracksq{\calT^c}}} \bracknorm{y|x_{\emptyset}{\bracksq{\calT^c}}} }{\pz(y)}$. 
Expanding the numerator in the $\log^2$ term in~\eqref{eq:51_3}, we obtain
\begin{align}
	W_{Y|X{\bracksq{\calT^c}}} \bracknorm{y|x_{\emptyset}{\bracksq{\calT^c}}} & = \sum_{x\bracksq{\calT}} \bracknorm{\prod_{k\in \calT} \Pi_{X_k} \bracknorm{x\bracksq{\brackcurl{k}}} } W_{Y|X\bracksq{\calT^c}X\bracksq{\calT}}\bracknorm{y|x_\emptyset\bracksq{\calT^c}x\bracksq{\calT}} \\
	& = \sum_{x\bracksq{\calT} \neq x_\emptyset\bracksq{\calT}} \bracknorm{\prod_{k\in \calT} \Pi_{X_k} \bracknorm{x\bracksq{\brackcurl{k}}} } W_{Y|X\bracksq{\calT^c}X\bracksq{\calT}}\bracknorm{y|x_\emptyset\bracksq{\calT^c}x\bracksq{\calT}} \nex
	& \dupspace + \bracknorm{\prod_{k\in \calT} \bracknorm{1 - \rho_k \alpha_n} }  \pz(y) \label{eq:51_4} \\
	& \stackrel{(a)}{=} \pz(y) + \bigO{\alpha_n}, \label{eq:51_5}
\end{align}
where $(a)$ follows from the fact that the first term in~\eqref{eq:51_4} is on the order of $\alpha_n$ since at least one of the symbols in $x\bracksq{\calT}$ is a $1$ and from the expansion of the product term. 
Combining~\eqref{eq:51_3} and~\eqref{eq:51_5}, we obtain
\begin{align}
	\E{\log^2 \frac{W_{Y|X{\bracksq{\calK}}} \bracknorm{Y|X{\bracksq{\calK}}}}{W_{Y|X{\bracksq{\calT^c}}} \bracknorm{Y|X{\bracksq{\calT^c}}} } } & = \sum_y \pz(y) \log^2 \bracknorm{1 + \frac{\bigO{\alpha_n}}{\pz(y)} }  + \bigO{\alpha_n} \\  
	& \stackrel{(a)}{=} \bigO{\alpha_n}, \label{eq:51} 
\end{align}
where $(a)$ follows from using the Taylor series of the $\log$ term.}
Let us now analyze the mutual information term on the right hand side of~\eqref{eq:50}.
\begin{align}
	\mathbb{I}\bracknorm{X{\bracksq{\calT}}; Y | X{\bracksq{\calT^c}}} & = \avgI{X{\bracksq{\calK}};Y} - \avgI{X{\bracksq{\calT^c}}; Y}  \\
	& \stackrel{(a)}{=} \sum_{k \in \calK} \rho_k \alpha_n \avgD{P_k}{\pz} - \sum_{k \in \calT^c} \rho_k \alpha_n \avgD{P_k}{\pz} + \bigO{\alpha_n^2} \\
	& = \sum_{k \in \calT} \rho_k \alpha_n \avgD{P_k}{\pz} + \bigO{\alpha_n^2}, \label{eq:52}
\end{align}
where $(a)$ follows from {Lemma~\ref{lem:covertprocess}}. Using the definition of $\gamma_{\calT}$, for an arbitrary $\mu \in \bracknorm{0,1}$, we upper bound~\eqref{eq:49} using Bernstein's inequality as follows.
\begin{align}
	& \E{ \frac{1}{\bracknorm{\prod_{k \in \calK} M_{k}}} \sum_{m{\bracksq{\calK}}} \sum_\yn W_{Y|X{\bracksq{\calK}}}^{\pn} \bracknorm{\yn| \bxn_{\calK}\bracknorm{m{\bracksq{\calK}},1{\bracksq{\calK}}}} \indic{\calE_{m{\bracksq{\calK}}}^c  } }  \nex
	& \dupspace \leq \sum_{\substack{\calT \subseteq \calK: \\ \calT \neq \emptyset}}\P[W_{Y|X{\bracksq{\calK}}}^{\pn} \bracknorm{\prod_{k\in \calK} \Pi_{X_k}^{ \pn} }]{ \sum_{i=1}^n \log \frac{W_{Y|X{\bracksq{\calK}}} \bracknorm{Y|X{\bracksq{\calK}}}}{W_{Y|X{\bracksq{\calT^c}}} \bracknorm{Y|X{\bracksq{\calT^c}}} } < \gamma_{\calT} }  \displaybreak[0]\\
	& \dupspace = \sum_{\substack{\calT \subseteq \calK: \\ \calT \neq \emptyset}} \P{ \sum_{i=1}^n U_{\calT} > \mu n \avgI{X{\bracksq{\calT}}; Y | X {\bracksq{\calT^c}}} }  \displaybreak[0]\\
	& \dupspace \stackrel{(a)}{\leq} \sum_{\substack{\calT \subseteq \calK: \\ \calT \neq \emptyset}}  \exp \bracknorm{- \frac{\frac{1}{2} \bracknorm{\mu n \avgI{X{\bracksq{\calT}}; Y | X {\bracksq{\calT^c}}}}^2}{n \bigO{\alpha_n} + \frac{1}{3}c \mu n \avgI{X{\bracksq{\calT}}; Y | X {\bracksq{\calT^c}}} }}  \displaybreak[0]\\
	& \dupspace \leq \sum_{\substack{\calT \subseteq \calK: \\ \calT \neq \emptyset}} \exp \bracknorm{-c_1 n \alpha_n} \displaybreak[0]\\
	& \dupspace \stackrel{(b)}{\leq} \exp \bracknorm{-c_2 n \alpha_n}, \label{eq:53}
\end{align}
for appropriate constants $c, c_1, c_2 > 0$, where $(a)$ follows from using Bernstein's inequality, and $(b)$ follows from the fact that, for a finite $K$, there exist $2^{K}-1$ non-empty subsets of $\calK$.
Denoting the $\card{\calT}$-length vector $\bracknorm{1,1,\ldots, 1}$ by $1{\bracksq{\calT}}$ for any non-empty set $\calT \subseteq \calK$, we upper bound the second term in~\eqref{eq:48} by
\begin{align}
	& \E{ \frac{1}{\bracknorm{\prod_{k \in \calK} M_{k}}} \sum_{m{\bracksq{\calK}}} \sum_\yn W_{Y|X{\bracksq{\calK}}}^{\pn} \bracknorm{\yn| \bxn_{\calK}\bracknorm{m{\bracksq{\calK}},\ell{\bracksq{\calK}}} } \sum_{\widetilde{m}{{\bracksq{\calK}}} \neq m{{\bracksq{\calK}}} }\indic{ \calE_{\widetilde{m}{\bracksq{\calK}}} } } \nex
	& \dupspace { \stackrel{(a)}{=} \E{ \frac{1}{\bracknorm{\prod_{k \in \calK} M_{k}}}  \sum_{\substack{\calT \subseteq \calK: \\ \calT \neq \emptyset }}\sum_{m{\bracksq{\calT}}} \sum_{m{\bracksq{\calT^c}}} \sum_{\substack{\widetilde{m}\bracksq{\calT} : \\ \widetilde{m}_k \neq m_k, \forall k \in \calT}} \sum_\yn W_{Y|X{\bracksq{\calK}}}^{\pn} \bracknorm{\yn| \bxn_{\calK}\bracknorm{m{\bracksq{\calK}},\ell{\bracksq{\calK}}} } \indic{ \calE_{\widetilde{m}{\bracksq{\calT}} m\bracksq{\calT^c}} } }} \label{eq:54_1} \displaybreak[0]\\
	& \dupspace { \stackrel{(b)}{=} \sum_\yn \sum_{\substack{\calT \subseteq \calK: \\ \calT \neq \emptyset}} \sum_{ \substack{ \widetilde{m}{\bracksq{\calT}}: \\ \widetilde{m}_k \neq 1, \forall k \in \calT } } \sum_{\xn_{\calK} \bracknorm{1{\bracksq{\calK}}, \ell{\bracksq{\calK}}}}  \sum_{\xn_{\calT} \bracknorm{\widetilde{m}{\bracksq{\calT}}, \ell{\bracksq{\calT}}}} W^{\pn}_{Y|X{\bracksq{\calK}}} \bracknorm{\yn | \xn_{\calK} \bracknorm{1{\bracksq{\calK}}, \ell{\bracksq{\calK}}}} \bracknorm{\prod_{k\in \calK} \Pi^{\pn}_{X_k}\bracknorm{\xn_{k}\bracknorm{1,\ell_k}} }} \nex
	& \dupspace \dupspace { \times \bracknorm{ \prod_{k\in \calT} \Pi^{\pn}_{X_k}\bracknorm{\xn_{k}\bracknorm{\widetilde{m}_k,\ell_k}} }\indic{\bracknorm{\xn_{\calT} \bracknorm{\widetilde{m}{\bracksq{\calT}}, \ell{\bracksq{\calT}}}, \xn_{\calT^c} \bracknorm{1{\bracksq{\calT^c}}, \ell{\bracksq{\calT^c}}} ,\yn } \in \agamman{} }  }\displaybreak[0]\\
	& \dupspace =\! \sum_\yn \! \sum_{\substack{\calT \subseteq \calK: \\ \calT \neq \emptyset}} \! \sum_{ \substack{ \widetilde{m}{\bracksq{\calT}}: \\ \widetilde{m}_k \neq 1, \forall k \in \calT } } \sum_{\xn_{\calT^c} \bracknorm{1{\bracksq{\calT^c}}, \ell{\bracksq{\calT^c}}}}  \sum_{\xn_{\calT} \bracknorm{\widetilde{m}{\bracksq{\calT}}, \ell{\bracksq{\calT}}}} \!\!\!\!\!\!\!\! W^{\pn}_{Y|X{\bracksq{\calT^c}}} \bracknorm{\yn | \xn_{\calT^c} \bracknorm{1{\bracksq{\calT^c}}, \ell{\bracksq{\calT^c}}}} \bracknorm{\prod_{k\in \calT^c} \Pi^{\pn}_{X_k}\bracknorm{\xn_{k}\bracknorm{1,\ell_k}}\!\! } \nex
	& \dupspace \dupspace  \times \bracknorm{ \prod_{k\in \calT} \Pi^{\pn}_{X_k}\bracknorm{\xn_{k}\bracknorm{\widetilde{m}_k,\ell_k}} }\indic{\bracknorm{\xn_{\calT} \bracknorm{\widetilde{m}{\bracksq{\calT}}, \ell{\bracksq{\calT}}}, \xn_{\calT^c} \bracknorm{1{\bracksq{\calT^c}}, \ell{\bracksq{\calT^c}}} ,\yn } \in \agamman{} } \label{eq:54_2} \displaybreak[0]\\
	& \dupspace \stackrel{{(c)}}{\leq} \! \sum_\yn \! \sum_{\substack{\calT \subseteq \calK: \\ \calT \neq \emptyset}} \! \sum_{ \substack{ \widetilde{m}{\bracksq{\calT}}: \\ \widetilde{m}_k \neq 1, \forall k \in \calT } } \! \sum_{\xn_{\calT^c} \bracknorm{1{\bracksq{\calT^c}}, \ell{\bracksq{\calT^c}}}} \!  \sum_{\xn_{\calT} \bracknorm{\widetilde{m}{\bracksq{\calT}}, \ell{\bracksq{\calT}}}} \!\!\!\!\!\!\!\! W^{\pn}_{Y|X{\bracksq{\calT^c}}} \bracknorm{\yn | \xn_{\calT^c} \bracknorm{1{\bracksq{\calT^c}}, \ell{\bracksq{\calT^c}}}} \bracknorm{\prod_{k\in \calT^c} \Pi^{\pn}_{X_k}\bracknorm{\xn_{k}\bracknorm{1,\ell_k}} \!\!} \nex
	& \dupspace  \dupspace \times \bracknorm{ \prod_{k\in \calT} \Pi^{\pn}_{X_k}\bracknorm{\xn_{k}\bracknorm{\widetilde{m}_k,\ell_k}} }\indic{\bracknorm{\xn_{\calT} \bracknorm{\widetilde{m}{\bracksq{\calT}}, \ell{\bracksq{\calT}}}, \xn_{\calT^c} \bracknorm{1{\bracksq{\calT^c}}, \ell{\bracksq{\calT^c}}} ,\yn } \in \agamman{_\calT} } \label{eq:54_3}  \displaybreak[0]\\
	& \dupspace \leq { \sum_{\substack{\calT \subseteq \calK: \\ \calT \neq \emptyset}}e^{- \gamma_\calT} \bracknorm{ \prod_{k \in \calT }M_k} } \bracknorm{ \sum_\yn \sum_{\xn{\bracksq{\calK}}} W^{\pn}_{Y|X[\calK]}\bracknorm{\yn|\xn{\bracksq{\calK}} } \bracknorm{ \prod_{k\in \calK} \Pi^{\pn}_{X_k}\bracknorm{\xn_{k}} }}  \\
	& \dupspace = \sum_{\substack{\calT \subseteq \calK: \\ \calT \neq \emptyset}}e^{- \gamma_\calT} \bracknorm{ \prod_{k \in \calT }M_k}, \label{eq:54}
\end{align} 
{where $(a)$ follows from rewriting the left hand side of~\eqref{eq:54_1} in terms of the positions in which the two vectors $m\bracksq{\calK}$ and $\widetilde{m} \bracksq{\calK}$ do not match, $(b)$ follows from setting $m\bracksq{\calK} = 1\bracksq{\calK}$ without loss of generality, and $(c)$ follows from the fact that $\agamman{}$ in the indicator function of~\eqref{eq:54_2} is a subset of $\agamman{_\calT}$ in the indicator function of~\eqref{eq:54_3} by definition of $\agamman{}$. }
Combining~\eqref{eq:53} and~\eqref{eq:54}, we upper bound~\eqref{eq:48} by
\begin{align}
	\E{P_e^n} \leq \sum_{\substack{\calT \subseteq \calK: \\ \calT \neq \emptyset}}e^{- \gamma_\calT} \bracknorm{ \prod_{k \in \calT }M_k}  + \exp \bracknorm{-c_2 n \alpha_n}. \label{eq:55}
\end{align}
Using the definition of $\gamma_\calT$,~\eqref{eq:52} and~\eqref{eq:55}, we conclude that for an arbitrary $\delta \in \bracknorm{0,1}$ and $n$ large enough, if
	\begin{align}
		\sum_{k \in \calT} \log M_k &=\bracknorm{1-\delta} \bracknorm{1 - \mu} n \alpha_n \sum_{k \in \calT} \rho_k  \avgD{P_k}{\pz} , \label{eq:56}
	\end{align}
	for every non-empty set $\calT \subseteq \calK$, then there exists a constant $\xi > 0$ such that 
	\begin{align}
		\E{P_e^n} \leq \exp \bracknorm{-\xi n \alpha_n}. \label{eq:57}
	\end{align}
	If $\calT$ is a singleton set $\brackcurl{k}$, where $k \in \calK$, it follows from~\eqref{eq:56} that
	\begin{align}
		\log M_k = \bracknorm{1-\delta} \bracknorm{1 - \mu} \rho_k n \alpha_n   \avgD{P_k}{\pz}. \label{eq:58}
	\end{align}
	Observing~\eqref{eq:56} and~\eqref{eq:58}, we conclude that~\eqref{eq:56} is automatically satisfied for every non-empty set $\calT \subseteq \calK$, if $\log M_k$ satisfies~\eqref{eq:58} for every $k \in \calK$.
%	We extend the same argument to all $\calT \subseteq \calK \backslash \emptyset$ to show that the sum rate constraint in~\eqref{eq:56} is inactive for any $\calT \subseteq \calK \backslash \emptyset$ if $\log M_k$ satisfies~\eqref{eq:58} for every $k \in \calK$.
	
\section{Proof of Lemma~\ref{lem:resolvability}} \label{sec:lemma_res}
	Define the set $\bseteta{} \eqdef \bigcap_{\substack{\calT \subseteq \calK: \\ \calT \neq \emptyset}} \bseteta{_\calT}$ with 
	\begin{align}
		\bseteta{_{\calT}} & \eqdef \brackcurl{ \bracknorm{\xn{\bracksq{\calT}}, \zn} \in \calX^n{\bracksq{\calT}} \times \calZ^n: \log \frac{W^\pn_{Z|X{\bracksq{\calT}}}\bracknorm{\zn|\xn{\bracksq{\calT}}} }{\qaln\bracknorm{\zn}} \leq \eta_{\calT}}, \label{eq:59}
	\end{align}
	where 
	\begin{align}
	\eta_\calT & \eqdef \bracknorm{1+\mu} n \avgI{X{\bracksq{\calT}};Z}, \label{eq:59_1}
	\end{align}
	for every non-empty set $\calT \subseteq \calK$ and an arbitrary $\mu > 0$. For $\bracknorm{m{\bracksq{\calK}},\ell{\bracksq{\calK}}} \in \times_{k=1}^K \intseq{1}{M_k} \times \times_{k=1}^K \intseq{1}{L_k}$, $\mathbb{E}_{\sim \bracknorm{m{\bracksq{\calK}},\ell{\bracksq{\calK}}}}$ denotes the expectation taken over all $ \brackcurl{\bxn_{\calK}{\bracknorm{\widetilde{m}{\bracksq{\calK}},\widetilde{\ell}{\bracksq{\calK}}}}}_{\substack{ \bracknorm{\widetilde{m}{\bracksq{\calK}},\widetilde{\ell}{\bracksq{\calK}}} \in \bracknorm{ \times_{k=1}^K \intseq{1}{M_k} \times \times_{k=1}^K \intseq{1}{L_k} }: \\ \bracknorm{\widetilde{m}{\bracksq{\calK}},\widetilde{\ell}{\bracksq{\calK}}} \neq \bracknorm{m{\bracksq{\calK}},\ell{\bracksq{\calK}}} }} $. Let us analyze the KL divergence between $\qhatn$ and $\qaln$ averaged over all random codebooks.
\begin{align}
	& \E{\avgD{\qhatn}{\qaln}} \nex
	& = \E{ \sum_\zn \qhatn\bracknorm{\zn} \log \frac{\qhatn\bracknorm{\zn}}{\qaln\bracknorm{\zn}} } \\
	& =\E{\! \sum_\zn \frac{ \sum\limits_{m{{\bracksq{\calK}}}} \sum\limits_{\ell{{\bracksq{\calK}}}}  W_{Z|X{{\bracksq{\calK}}}}^{\pn} \bracknorm{\zn | \bxn_\calK \bracknorm{m{{\bracksq{\calK}}}, \ell{{\bracksq{\calK}}}}  }}{\bracknorm{\prod\limits_{k \in \calK} M_{k} L_{k} }}  \log \!\bracknorm{\!  \frac{\sum\limits_{\widetilde{m}{{\bracksq{\calK}}}} \sum\limits_{\widetilde{\ell}{{\bracksq{\calK}}}}  \!W_{Z|X{{\bracksq{\calK}}}}^{\pn} \bracknorm{\zn | \bxn_\calK \bracknorm{\widetilde{m}{{\bracksq{\calK}}}, \widetilde{\ell}{{\bracksq{\calK}}}}}}{\bracknorm{\prod\limits_{k \in \calK} M_{k}L_{k} }\qaln\bracknorm{\zn}}}\!\! }  \\
	& \stackrel{(a)}{\leq}   \sum_\zn \sum_{m{{\bracksq{\calK}}}} \sum_{\ell{{\bracksq{\calK}}}} \sum_{\xn_{\calK}\bracknorm{m{\bracksq{\calK}}, \ell{\bracksq{\calK}}}} \frac{  W^{\pn}_{Z|X{\bracksq{\calK}}}\bracknorm{\zn | \xn_\calK \bracknorm{m{{\bracksq{\calK}}}, \ell{{\bracksq{\calK}}}}  }   \bracknorm{ \prod_{k \in \calK} \Pi^{\pn}_{X_k}\bracknorm{\xn_{k}\bracknorm{m_{k},\ell_{k}}}}}{\bracknorm{ \prod_{k \in \calK} M_{k}L_{k} }} \nex
	& \dupspace \dupspace \times \log \mathbb{E}_{\sim \bracknorm{m{\bracksq{\calK}},\ell{\bracksq{\calK}}}} \bracknorm{\frac{\sum_{\widetilde{m}{{\bracksq{\calK}}}} \sum_{\widetilde{\ell}{{\bracksq{\calK}}}}  W_{Z|X{{\bracksq{\calK}}}}^{\pn} \bracknorm{\zn | \bxn_\calK \bracknorm{\widetilde{m}{{\bracksq{\calK}}}, \widetilde{\ell}{{\bracksq{\calK}}}}}}{\bracknorm{\prod_{k \in \calK} M_{k}L_{k} }\qaln\bracknorm{\zn}}}, \label{eq:60}
\end{align}
where $(a)$ follows from Jensen's inequality. 
Let us analyze the $\log$ term in~\eqref{eq:60}. 
\begin{align}
	& \log \mathbb{E}_{\sim \bracknorm{m{\bracksq{\calK}},\ell{\bracksq{\calK}}}} \bracknorm{\frac{\sum_{\widetilde{m}{{\bracksq{\calK}}}} \sum_{\widetilde{\ell}{{\bracksq{\calK}}}}  W_{Z|X{{\bracksq{\calK}}}}^{\pn} \bracknorm{\zn | \bxn_\calK \bracknorm{\widetilde{m}{{\bracksq{\calK}}}, \widetilde{\ell}{{\bracksq{\calK}}}}}}{\bracknorm{\prod_{k \in \calK} M_{k}L_{k} }\qaln\bracknorm{\zn}}} \nex
	& \dupspace { = \log \bracknorm{\!\frac{\splitfrac{\sum_{\calT \subseteq \calK}\sum_{\widetilde{m}\bracksq{\calT^c}: \widetilde{m}_k \neq m_k, \forall k \in {\calT^c}} \sum_{\widetilde{\ell}\bracksq{\calT^c}: \widetilde{\ell}_k \neq \ell_k, \forall k \in {\calT^c}} \sum_{\xn_{\calT^c}\bracknorm{\widetilde{m}\bracksq{\calT^c},\widetilde{\ell}\bracksq{\calT^c}}}}{ W_{Z|X\bracksq{\calT}X\bracksq{\calT^c}}\bracknorm{\zn|\xn_\calT\bracknorm{m\bracksq{\calT}, \ell\bracksq{\calT}}\xn_{\calT^c}\bracknorm{\widetilde{m}\bracksq{\calT^c}, \widetilde{\ell}\bracksq{\calT^c}}} \bracknorm{\prod_{k\in \calT^c} \Pi_{X_k}^\pn\bracknorm{\xn_k\bracknorm{\widetilde{m}_k, \widetilde{\ell}_k}} } }}{\bracknorm{\prod_{k \in \calK} M_{k}L_{k} }\qaln\bracknorm{\zn}}\!} }\label{eq:61_1} \displaybreak[0]\\
	& \dupspace { \stackrel{(a)}{=} \log \left(\frac{\splitfrac{\sum_{\calT \subseteq \calK: \, \calT \neq \emptyset}\sum_{\widetilde{m}\bracksq{\calT^c}: \widetilde{m}_k \neq m_k, \forall k \in {\calT^c}} \sum_{\widetilde{\ell}\bracksq{\calT^c}: \widetilde{\ell}_k \neq \ell_k, \forall k \in {\calT^c}} \sum_{\xn_{\calT^c}\bracknorm{\widetilde{m}\bracksq{\calT^c},\widetilde{\ell}\bracksq{\calT^c}}}}{ W_{Z|X\bracksq{\calT}X\bracksq{\calT^c}}\bracknorm{\zn|\xn_\calT\bracknorm{m\bracksq{\calT}, \ell\bracksq{\calT}}\xn_{\calT^c}\bracknorm{\widetilde{m}\bracksq{\calT^c}, \widetilde{\ell}\bracksq{\calT^c}}} \bracknorm{\prod_{k\in \calT^c} \Pi_{X_k}^\pn\bracknorm{\xn_k\bracknorm{\widetilde{m}_k, \widetilde{\ell}_k}} } }}{\bracknorm{\prod_{k \in \calK} M_{k}L_{k} }\qaln\bracknorm{\zn}}\right.} \nex
	& \dupspace \dupspace \dupspace \dupspace { \left. + \frac{\splitfrac{\sum_{\widetilde{m}\bracksq{\calK}: \widetilde{m}_k \neq m_k, \forall k \in {\calK}} \sum_{\widetilde{\ell}\bracksq{\calK}: \widetilde{\ell}_k \neq \ell_k, \forall k \in {\calK}} \sum_{\xn_{\calK}\bracknorm{\widetilde{m}\bracksq{\calK},\widetilde{\ell}\bracksq{\calK}}}}{ W_{Z|X\bracksq{\calK}}\bracknorm{\zn|\xn_{\calK}\bracknorm{\widetilde{m}\bracksq{\calK}, \widetilde{\ell}\bracksq{\calK}}} \bracknorm{\prod_{k\in \calK} \Pi_{X_k}^\pn\bracknorm{\xn_k\bracknorm{\widetilde{m}_k, \widetilde{\ell}_k }} } }}{\bracknorm{\prod_{k \in \calK} M_{k}L_{k} }\qaln\bracknorm{\zn}} \right)}  \displaybreak[0]\\
	& \dupspace { \stackrel{(b)}{=} \log  \bracknorm{ \sum_{\substack{\calT \subseteq \calK: \\ \calT \neq \emptyset}} \frac{  W^\pn_{Z|X{\bracksq{\calT}}}\bracknorm{\zn |  \xn_{\calT}\bracknorm{m{\bracksq{\calT}}, \ell{\bracksq{\calT}}}}}{\bracknorm{ \prod_{k \in \calT} M_{k}L_{k}}\qaln\bracknorm{\zn}}  + \frac{\sum_{\widetilde{m}\bracksq{\calK}: \widetilde{m}_k \neq m_k, \forall k \in {\calK}} \sum_{\widetilde{\ell}\bracksq{\calK}: \widetilde{\ell}_k \neq \ell_k, \forall k \in {\calK}} \qaln\bracknorm{\zn}  }{\bracknorm{\prod_{k \in \calK} M_{k}L_{k} }\qaln\bracknorm{\zn}} } }  \displaybreak[0]\\
	& \dupspace \leq \log  \bracknorm{ \sum_{\substack{\calT \subseteq \calK: \\ \calT \neq \emptyset}} \frac{  W^\pn_{Z|X{\bracksq{\calT}}}\bracknorm{\zn |  \xn_{\calT}\bracknorm{m{\bracksq{\calT}}, \ell{\bracksq{\calT}}}}}{\bracknorm{ \prod_{k \in \calT} M_{k}L_{k}}\qaln\bracknorm{\zn}}  + 1  } \\
	& \dupspace = \log  \bracknorm{ \sum_{ \substack{\calT \subseteq \calK: \\ \calT \neq \emptyset} } \frac{  W^\pn_{Z|X{\bracksq{\calT}}}\bracknorm{\zn |  \xn_{\calT}\bracknorm{m{\bracksq{\calT}}, \ell{\bracksq{\calT}}}}}{\bracknorm{ \prod_{k \in \calT} M_{k}L_{k}}\qaln\bracknorm{\zn}}  + 1  } \indic{\bracknorm{ \xn_{\calK}\bracknorm{m{\bracksq{\calK}}, \ell{\bracksq{\calK}}}, \zn } \in \bseteta{}} \nex
	& \dupspace \dupspace + \log  \bracknorm{ \sum_{ \substack{\calT \subseteq \calK: \\ \calT \neq \emptyset} } \frac{  W^\pn_{Z|X{\bracksq{\calT}}}\bracknorm{\zn |  \xn_{\calT}\bracknorm{m{\bracksq{\calT}}, \ell{\bracksq{\calT}}}}}{\bracknorm{ \prod_{k \in \calT} M_{k}L_{k}}\qaln\bracknorm{\zn}}  + 1  } \indic{\bracknorm{ \xn_{\calK}\bracknorm{m{\bracksq{\calK}}, \ell{\bracksq{\calK}}}, \zn } \not \in \bseteta{}}, \label{eq:61}
\end{align}
{where $(a)$ follows from splitting the numerator term in~\eqref{eq:61_1} into two based on whether $\calT$ is empty or not and $(b)$ follows from the fact that $\sum\limits_{\xn_{\calK}\bracknorm{\widetilde{m}\bracksq{\calK},\widetilde{\ell}\bracksq{\calK}}} \!\!\!  W_{Z|X\bracksq{\calK}}\bracknorm{\zn|\xn_{\calK}\bracknorm{\widetilde{m}\bracksq{\calK}, \widetilde{\ell}\bracksq{\calK}}} \bracknorm{\prod\limits_{k\in \calK} \Pi_{X_k}^\pn\bracknorm{\xn_k\bracknorm{\widetilde{m}_k, \widetilde{\ell}_k }} } \allowbreak = \qaln\bracknorm{\zn}$. }
We upper bound the first term in~\eqref{eq:61} by 
\begin{align}
	& \log  \bracknorm{ \sum_{ \substack{\calT \subseteq \calK: \\ \calT \neq \emptyset} } \frac{  W^\pn_{Z|X{\bracksq{\calT}}}\bracknorm{\zn |  \xn_{\calT}\bracknorm{m{\bracksq{\calT}}, \ell{\bracksq{\calT}}}}}{\bracknorm{ \prod_{k \in \calT} M_{k}L_{k}}\qaln\bracknorm{\zn}}  + 1  } \indic{\bracknorm{ \xn_{\calK}\bracknorm{m{\bracksq{\calK}}, \ell{\bracksq{\calK}}}, \zn } \in \bseteta{}} \nex
	& \dupspace \leq \log \bracknorm{\sum_{\substack{\calT \subseteq \calK: \\ \calT \neq \emptyset}} \frac{e^{\eta_\calT}}{\bracknorm{\prod_{k \in \calT }M_{k}L_{k}}} + 1} \\
	& \dupspace \leq \sum_{\substack{\calT \subseteq \calK: \\ \calT \neq \emptyset}} \frac{e^{\eta_\calT}}{\bracknorm{\prod_{k \in \calT }M_{k} L_{k}}}. \label{eq:62}
\end{align}
Defining $\nu_{\min} \eqdef \min_{z} \qz(z)$, we upper bound the second term in~\eqref{eq:61} by
\begin{align}
	& \log  \bracknorm{ \sum_{ \substack{\calT \subseteq \calK: \\ \calT \neq \emptyset} } \frac{  W^\pn_{Z|X{\bracksq{\calT}}}\bracknorm{\zn |  \xn_{\calT}\bracknorm{m{\bracksq{\calT}}, \ell{\bracksq{\calT}}}}}{\bracknorm{ \prod_{k \in \calT} M_{k}L_{k}}\qaln\bracknorm{\zn}}  + 1  } \indic{\bracknorm{ \xn_{\calK}\bracknorm{m{\bracksq{\calK}}, \ell{\bracksq{\calK}}}, \zn } \not \in \bseteta{}} \nex
	& \dupspace \leq \bracknorm{ \log \bracknorm{\frac{1}{\qaln\bracknorm{\zn}} } + \log \bracknorm{ \sum_{ \substack{\calT \subseteq \calK: \\ \calT \neq \emptyset} } \frac{  W^\pn_{Z|X{\bracksq{\calT}}} \bracknorm{\zn |  \xn_{\calT}\bracknorm{m{\bracksq{\calT}}, \ell{\bracksq{\calT}}}}}{\bracknorm{ \prod_{k \in \calT} M_{k}L_{k}}}  + \qaln(\zn)  } } \nex
	& \dupspace \dupspace \dupspace \times \indic{\bracknorm{ \xn_{\calK}\bracknorm{m{\bracksq{\calK}}, \ell{\bracksq{\calK}}}, \zn } \not \in \bseteta{}}  \\ 
	& \dupspace { \stackrel{(a)}{\leq} \bracknorm{ n \log \bracknorm{\frac{1}{\bracknorm{ \prod_{k \in \calK} \bracknorm{1 - \rho_k \alpha_n}} \nu_{\min}}} + \log \bracknorm{ \sum_{ \substack{\calT \subseteq \calK: \\ \calT \neq \emptyset} } 1  + 1  } } \indic{\bracknorm{ \xn_{\calK}\bracknorm{m{\bracksq{\calK}}, \ell{\bracksq{\calK}}}, \zn } \not \in \bseteta{}}} \displaybreak[0]\\ 
	& \dupspace \stackrel{(b)}{\leq} \bracknorm{ n \log \bracknorm{\frac{1}{\bracknorm{ \prod_{k \in \calK} \bracknorm{1 - \rho_k \alpha_n}} \nu_{\min}}} + \log \bracknorm{ 2^K } } \indic{\bracknorm{ \xn_{\calK}\bracknorm{m{\bracksq{\calK}}, \ell{\bracksq{\calK}}}, \zn } \not \in \bseteta{}} \\
	& \dupspace \leq n \log \bracknorm{\frac{2^K}{\bracknorm{ \prod_{k \in \calK}\bracknorm{1 - \rho_k \alpha_n}} \nu_{\min}}} \indic{\bracknorm{ \xn_{\calK}\bracknorm{m{\bracksq{\calK}}, \ell{\bracksq{\calK}}}, \zn } \not \in \bseteta{}}, \label{eq:63} 
\end{align}
{where $(a)$ follows from the fact that we can upper bound both $\frac{  W^\pn_{Z|X{\bracksq{\calT}}} \bracknorm{\zn |  \xn_{\calT}\bracknorm{m{\bracksq{\calT}}, \ell{\bracksq{\calT}}}}}{\bracknorm{ \prod_{k \in \calT} M_{k}L_{k}}}$ and $\qaln(\zn)$ by $1$ and $(b)$ follows from the fact that there only exist $2^K -1$ non-empty subsets of $\calK$.}
Combining~\eqref{eq:62} and~\eqref{eq:63}, we upper bound~\eqref{eq:60} by
\begin{align}
	\E{\avgD{\qhatn}{\qaln}} & \leq \sum_\zn \qaln \bracknorm{\zn} \bracknorm{\sum_{\substack{\calT \subseteq \calK: \\ \calT \neq \emptyset}} \frac{e^{\eta_\calT}}{\bracknorm{\prod_{k \in \calT }M_k L_k}}} +  n \log \bracknorm{\frac{2^K}{\prod_{k \in \calK}\bracknorm{1 - \rho_k \alpha_n} \nu_{\min}}} \P{{\bseteta{}}^c} \displaybreak[0] \\
	& = \sum_{\substack{\calT \subseteq \calK: \\ \calT \neq \emptyset}} \frac{e^{\eta_\calT}}{\bracknorm{\prod_{k \in \calT }M_k L_k}} +  n \log \bracknorm{\frac{2^K}{\prod_{k \in \calK}\bracknorm{1 - \rho_k \alpha_n} \nu_{\min}}} \P{{\bseteta{}}^c}. \label{eq:64} 
\end{align}
{From the definition of $\bseteta{}$}, we obtain
\begin{align}
	\P{{\bseteta{}}^c} \leq \sum_{\substack{\calT \subseteq \calK: \\ \calT \neq \emptyset}} \P{{\bseteta{_{\calT}}}^c}, \label{eq:65}
\end{align}
which follows from the fact that ${\bseteta{}}^c = \bigcup_{\substack{\calT \subseteq \calK: \\ \calT \neq \emptyset}} {\bseteta{_{\calT}}}^c$ and the application of the union bound. We define a zero-mean random variable $V_{\calT} \eqdef \log \frac{W_{Z|X{\bracksq{\calT}}}(Z|X{\bracksq{\calT}}) }{\qal(Z)} - \avgI{X{\bracksq{\calT}}; Z}$ since $\E{\log \frac{W_{Z|X{\bracksq{\calT}}}(Z|X{\bracksq{\calT}}) }{\qal(Z)}} = \avgI{X{\bracksq{\calT}}; Z}$. Note that $\abs{V_\calT}$ is bounded almost surely, and 
\begin{align}
	\E{V^2_\calT} &= \E{\log^2 \frac{W_{Z|X{\bracksq{\calT}}}(Z|X{\bracksq{\calT}}) }{\qal(Z)}} - { \bracknorm{\avgI{X\bracksq{\calT};Z}}^2} \\
	& \stackrel{(a)}{=} \E{\log^2 \frac{W_{Z|X{\bracksq{\calT}}}(Z|X{\bracksq{\calT}}) }{\qal(Z)}} +\bigO{\alpha_n^2}, \label{eq:66_1}
\end{align}
where $(a)$ follows from {Lemma~\ref{lem:covertprocess}.} 
Let us analyze the expectation term in~\eqref{eq:66_1}. 
\begin{align}
	& \E{\log^2 \frac{W_{Z|X{\bracksq{\calT}}}(Z|X{\bracksq{\calT}}) }{\qal(Z)}} \nex 
	& \dupspace = \sum_z \sum_{x{\bracksq{\calT}}} \bracknorm{ \prod_{k \in \calT} \Pi_{X_k}({x_k} ) } W_{Z|X{\bracksq{\calT}}} (z|x{\bracksq{\calT}}) \log^2 \frac{W_{Z|X{\bracksq{\calT}}} (z|x{\bracksq{\calT}})}{\qal(z)} \label{eq:66_3}  \\
	& \dupspace { \stackrel{(a)}{=} \sum_z \bracknorm{ \prod_{k \in \calT} (1 - \rho_k \alpha_n) } W_{Z|X{\bracksq{\calT}}} (z|x_{\emptyset}{\bracksq{\calT}}) \log^2 \frac{W_{Z|X{\bracksq{\calT}}} (z|x_{\emptyset}{\bracksq{\calT}})}{\qal(z)}} \nex
	& \dupspace  \dupspace {+ \sum_z \sum_{x{\bracksq{\calT}} \neq x_\emptyset \bracksq{\calT}} \bracknorm{ \prod_{k \in \calT} \Pi_{X_k}(x_k ) } W_{Z|X{\bracksq{\calT}}} (z|x{\bracksq{\calT}}) \log^2 \frac{W_{Z|X{\bracksq{\calT}}} (z|x{\bracksq{\calT}})}{\qal(z)}} \label{eq:66_6} \\
	& \dupspace { \stackrel{(b)}{=} \sum_z \bracknorm{ \prod_{k \in \calT} (1 - \rho_k \alpha_n) } W_{Z|X{\bracksq{\calT}}} (z|x_{\emptyset}{\bracksq{\calT}}) \log^2 \frac{W_{Z|X{\bracksq{\calT}}} (z|x_{\emptyset}{\bracksq{\calT}})}{\qal(z)} + \bigO{\alpha_n}} \\
	& \dupspace {\stackrel{(c)}{=} \sum_z W_{Z|X{\bracksq{\calT}}} (z|x_{\emptyset}{\bracksq{\calT}}) \log^2 \frac{\qal(z)}{W_{Z|X{\bracksq{\calT}}} (z|x_{\emptyset}{\bracksq{\calT}})} + \bigO{\alpha_n}, }\label{eq:66_4}
%	& = \sum_z W_{Z|X{\bracksq{\calT}}} (z|x_{\emptyset}{\bracksq{\calT}}) \log^2 \bracknorm{1 + \frac{\bigO{\alpha_n}}{W_{Z|X{\bracksq{\calT}}} (z|x_{\emptyset}{\bracksq{\calT}})}}  + \bigO{\alpha_n}	\\
%	& \stackrel{(d)}{=} \bigO{\alpha_n}, \label{eq:66_2}
\end{align}
{where $(a)$ follows from splitting the term on the right hand side of~\eqref{eq:66_3} into two based on whether $x\bracksq{\calT} = x_\emptyset\bracksq{\calT}$ or not, $(b)$ follows from the fact that at least one of the symbols in $x\bracksq{\calT}$ in the second term in~\eqref{eq:66_6} is the symbol $1$, and $(c)$ follows from the expansion of the product term. 
From the definition of $\qal$, we obtain
\begin{align}
	\qal(z) & = \sum_{x\bracksq{\calT}} W_{Z|X\bracksq{\calT}}(z|x\bracksq{\calT} ) \bracknorm{\prod_{k \in \calT} \Pi_{X_k}(x_k) } \\ 
	& = \sum_{x\bracksq{\calT} \neq x_\emptyset\bracksq{\calT}} W_{Z|X\bracksq{\calT}}(z|x\bracksq{\calT} ) \bracknorm{\prod_{k \in \calT} \Pi_{X_k}(x_k) } + W_{Z|X\bracksq{\calT}}(z|x_\emptyset\bracksq{\calT} ) \bracknorm{\prod_{k \in \calT} \bracknorm{1 - \rho_k \alpha_n} } \displaybreak[0] \\ 
	& =  W_{Z|X\bracksq{\calT}}(z|x_\emptyset\bracksq{\calT} )  + \bigO{\alpha_n}. \label{eq:66_5}
\end{align}
Combining~\eqref{eq:66_4} and~\eqref{eq:66_5}, we obtain
\begin{align}
	\E{\log^2 \frac{W_{Z|X{\bracksq{\calT}}}(Z|X{\bracksq{\calT}}) }{\qal(Z)}} & = \sum_z W_{Z|X{\bracksq{\calT}}} (z|x_{\emptyset}{\bracksq{\calT}}) \log^2 \bracknorm{1 + \frac{\bigO{\alpha_n}}{W_{Z|X{\bracksq{\calT}}} (z|x_{\emptyset}{\bracksq{\calT}})}}  + \bigO{\alpha_n}	\\
	& \stackrel{(a)}{=} \bigO{\alpha_n}, \label{eq:66_2}
\end{align}
where $(a)$ follows from the application of the Taylor series of the $\log$ term. }
Using the definition of $\eta_\calT$ in~\eqref{eq:59_1}, for an arbitrary $\mu>0$, we upper bound~\eqref{eq:65} by 
\begin{align}
	\P{{\bseteta{}}^c} & \leq \sum_{\substack{\calT \subseteq \calK: \\ \calT \neq \emptyset}} \P{ \log \frac{W^\pn_{Z|X{\bracksq{\calT}}}\bracknorm{\bzn|\bxn{\bracksq{\calT}}} }{\qaln\bracknorm{\bzn}} > \eta_{\calT} } \\
	& = \sum_{\substack{\calT \subseteq \calK: \\ \calT \neq \emptyset}} \P{ \sum_{i=1}^n \log \frac{W_{Z|X{\bracksq{\calT}}}(Z|X{\bracksq{\calT}})}{\qal(Z)} > \bracknorm{1+\mu} n \avgI{X{\bracksq{\calT}};Z} }  \\
	& = \sum_{\substack{\calT \subseteq \calK: \\ \calT \neq \emptyset}}  \P{ \sum_{i=1}^n V_\calT > \mu n \avgI{X{\bracksq{\calT}};Z} }  \\
	& \stackrel{(a)}{\leq} \sum_{\substack{\calT \subseteq \calK: \\ \calT \neq \emptyset}}  \exp \bracknorm{- \frac{\frac{1}{2} \bracknorm{\mu n \avgI{X{\bracksq{\calT}};Z}}^2 }{n \bigO{\alpha_n} + \frac{1}{3} c \mu n \avgI{X{\bracksq{\calT}};Z} }}  \\
	& \stackrel{(b)}{ \leq} \sum_{\substack{\calT \subseteq \calK: \\ \calT \neq \emptyset}} \exp \bracknorm{- c_1 n \alpha_n}\\
	& \leq \exp \bracknorm{- c_2 n \alpha_n}, \label{eq:66}
\end{align}
for appropriate constants $c, c_1, c_2>0$, where $(a)$ follows from using Bernstein's inequality, {and $(b)$ follows from the fact that $\mathbb{I}\bracknorm{X{\bracksq{\calT}}; Z} = \sum_{k \in \calT} \rho_k \alpha_n \avgD{Q_k}{\qz} + \bigO{\alpha_n^2}$, for any non-empty set $\calT \subseteq \calK$, from Lemma~\ref{lem:covertprocess}. }
Combining{~\eqref{eq:64} and~\eqref{eq:66},} for an appropriate  constant $c_3>0$, we obtain
\begin{align}
	\E{\avgD{\qhatn}{\qaln}} & \leq \sum_{\substack{\calT \subseteq \calK: \\ \calT \neq \emptyset}} \frac{e^{\eta_\calT}}{\bracknorm{\prod_{k \in \calT }M_k L_k}} + \exp \bracknorm{- c_3 n \alpha_n}. \label{eq:67}
\end{align}
%For any non-empty set $\calT \subseteq \calK$, it follows from Lemma~\ref{lem:covertprocess} that 
%\begin{align}
%	\mathbb{I}\bracknorm{X{\bracksq{\calT}}; Z} & = \sum_{k \in \calT} \rho_k \alpha_n \avgD{Q_k}{\qz} + \bigO{\alpha_n^2}. \label{eq:68}
%\end{align}
Using the definition of $\eta_\calT$, we conclude from~\eqref{eq:67} that for an arbitrary $\delta \in \bracknorm{0,1}$ and a large $n$, if
	\begin{align}
		\sum_{k \in \calT} \log (M_kL_k) = \bracknorm{1+\delta} \bracknorm{1 + \mu} n \alpha_n \sum_{k \in \calT} \rho_k  \avgD{Q_k}{\qz}, \label{eq:69}
	\end{align}
	for every non-empty set $\calT \subseteq \calK $, then there exists a constant $\xi > 0$, such that
	\begin{align}
		\E{\avgD{\qhatn}{\qaln}} \leq \exp \bracknorm{-\xi n \alpha_n}. \label{eq:70}
	\end{align}
	If $\calT$ is a singleton set $\brackcurl{k}$, where $k \in \calK$, it follows from~\eqref{eq:69} that
	\begin{align}
		\log (M_kL_k) = \bracknorm{1+\delta} \bracknorm{1 + \mu}  \rho_k n \alpha_n   \avgD{Q_k}{\qz}. \label{eq:71}
	\end{align}
	Observing~\eqref{eq:69} and~\eqref{eq:71}, we conclude that~\eqref{eq:69} is automatically satisfied for every non-empty set $\calT \subseteq \calK$, if $\log M_kL_k$ satisfies~\eqref{eq:71} for every $k \in \calK$. 
%	Extending the same argument to all non-empty subsets $\calT \subseteq \calK$, we conclude that the constraints on the sum of $\log M_k L_k$ are inactive if~\eqref{eq:71} is satisfied for every $k \in \calK$. 
	
\section{Proof of Lemma~\ref{lem:identificationcode}}
        \label{sec:proof-lemma-identification}
Since $\avgD{\qhatn}{\qaln} \leq \exp \bracknorm{-\xi_2 n \alpha_n}$, it follows from Pinsker's inequality that $\V{\qhatn, \qaln} \allowbreak \leq \exp \bracknorm{- \frac{1}{2} \xi_2 n \alpha_n}$. Furthermore, we write
\begin{align}
	\D{\qhatn}{\qzn} & = \D{\qhatn}{\qaln} + \sum_{\zn} \qhatn(\zn) \log \bracknorm{\frac{\qaln(\zn)}{\qzn(\zn)}} \\
	& = \D{\qhatn}{Q_{\alpha_n}^{\pn}} + \D{\qaln}{\qzn} + \sum_{\zn} \bracknorm{\qhatn(\zn) - \qaln(\zn) }\log \bracknorm{\frac{\qaln(\zn)}{\qzn(\zn)}}. \label{eq:35}
\end{align}
Rearranging the terms in~\eqref{eq:35} and taking the absolute value yields
\begin{align}
	\abs{\D{\qhatn}{\qzn} - \D{\qaln}{\qzn}} & \leq \D{\qhatn}{Q_{\alpha_n}^{\pn}} + \abs{ \sum_{\zn} \bracknorm{\qhatn(\zn) - \qaln(\zn) }\log \bracknorm{\frac{\qaln(\zn)}{\qzn(\zn)}}}. \label{eq:36}
\end{align}
Defining $\nu_{\min} \eqdef \min_{z} \qz(z)$, we bound the second term on the right hand side of~\eqref{eq:36} for $n$ large enough as follows.
\begin{align}
 	& \left|\sum_{\zn} \bracknorm{\qhatn(\zn) - \qaln(\zn) }\log \bracknorm{\frac{\qaln(\zn)}{\qzn(\zn)}}\right| \nex
 	& \dupspace \leq \sum_{\zn} \left|\qhatn(\zn) - \qaln(\zn) \right| \left| \log \bracknorm{\frac{\qaln(\zn)}{\qzn(\zn)}}\right|  \\
 	& \dupspace =\! \sum_{\zn} \! \left|\qhatn(\zn) - \qaln(\zn) \right| \! \bracknorm{\! \log\! \bracknorm{\frac{\qaln(\zn)}{\qzn(\zn)}} \!\!\indic{\qaln(\zn) \!\geq\! \qzn(\zn)} \!+\! \log\! \bracknorm{\frac{\qzn(\zn)}{\qaln(\zn)}} \!\!\indic{\qaln(\zn) \!<\! \qzn(\zn)}\! \!} \\ 
 	& \dupspace \stackrel{(a)}{\leq} \sum_{\zn} \left|\qhatn(\zn) - \qaln(\zn) \right|  \Bigg( n \log \frac{1}{\nu_{\min}} \indic{\qaln(\zn) \geq \qzn(\zn)}  \nex
 	& \dupspace \dupspace \dupspace \dupspace \dupspace \dupspace \dupspace \dupspace \dupspace  + \sum_{i=1}^n \log \frac{\qz(z_i)}{\bracknorm{ \prod_{k \in \calK} \bracknorm{1 - \rho_k \alpha_n}}\qz(z_i)} \indic{\qaln(\zn) < \qzn(\zn)} \Bigg) \\
 	& \dupspace \leq \sum_{\zn} \left|\qhatn(\zn) - \qaln(\zn) \right|   \bracknorm{ n \log \frac{1}{\nu_{\min}}  + n \log \frac{1}{\prod_{k \in \calK} \bracknorm{1 - \rho_k \alpha_n}} } \\
 	& \dupspace = 2 \V{\qhatn, \qaln} \bracknorm{ n \log \frac{1}{ \bracknorm{ \prod_{k \in \calK} \bracknorm{1 - \rho_k \alpha_n} }\nu_{\min}} } \displaybreak[0] \\
 	& \dupspace \leq 2 \exp \bracknorm{- \frac{1}{2} \xi_2 n \alpha_n}  \bracknorm{ n \log \frac{1}{ \bracknorm{ \prod_{k \in \calK} \bracknorm{1 - \rho_k \alpha_n}} \nu_{\min}} }, \label{eq:37}
\end{align}
where $(a)$ follows from the fact that $\qal(z) \geq \prod_{k \in \calK} \bracknorm{1 - \rho_k \alpha_n} {\qz(z)} $. 

\section{Extension to non-binary input alphabets} \label{sec:multipleinput}
    To be more specific, each user is now characterized by a distinct input alphabet  $\calX_k \eqdef \intseq{0}{N_k}$ with one innocent symbol $0$ and $N_k$ information symbols. The input distributions defined in the manuscript need to be suitably modified as follows.
    \begin{align}
      \forall k,\quad \Pi_{X_k}(0)=1-\rho_k\alpha_n, \text{ and for $i\in\intseq{1}{N_k}$, } \, \Pi_{X_k}(i)=\rho_k\beta_{k,i}\alpha_n,
    \end{align}
    where $\sum_{i\in\intseq{1}{N_k}}\beta_{k,i}=1$. We need to introduce a new notation to describe the distributions induced by a fixed choice of input.
    \begin{align}
      \forall \mathbf{x}\in\bigtimes_{i=1}^K\calX_i \quad Q_{\mathbf{x}}(z) = W_{Z|X[\calK]}(z|\mathbf{x}).
    \end{align}
As done before, for a given vector $\mathbf{x}$, the vector $\mathbf{x}[\calT]$ is the subvector of size $\card{\calT}$ comprised of the components of $\mathbf{x}$ with index in $\calT$. In addition $\mathbf{x}_{\calT}=(x_{\calT,1},\dots,x_{\calT,K})$ is a $K$ length vector which contains the symbol $0$ in positions indexed by $\calT^c$. For $\calT\subseteq\calK$, we define the distributions
\begin{align}
  Q_{\calT}(z) \eqdef \sum_{\mathbf{x}_\calT}\left(\prod_{k\in\calT}\beta_{k,x_{\calT,k}}\right) Q_{\mathbf{x}_\calT}(z),
\end{align}
and
\begin{align}
Q_{\alpha_n}(z) \eqdef \sum_{x{{\bracksq{\calK}}}} W_{Z|X{{\bracksq{\calK}}}}(z|x{{\bracksq{\calK}}})\bracknorm{ \prod_{k \in \calK} \Pi_{X_k}(x_k)}.
\end{align}

In the special case that $\calT$ is a singleton, say $\{k\}$, we simply write $Q_{k}$ in place of $Q_{\calT}$, and if the unique non-zero symbol in $\mathbf{x}_\calT$ is $i\in\intseq{1}{N_k}$,  we write $Q_{k,i}$ in place of $Q_{\mathbf{x}_\calT}$. With this convention, note that we have
\begin{align}
  Q_{k}(z) = \sum_{i\in\intseq{1}{N_k}}\beta_{k,i}Q_{k,i}(z).
\end{align}
For $\rhovector\in[0,1]^K$ and $\betavector\in[0,1]^K$, we finally introduce
\begin{align}
  \chi(\rhovector,\betavector) = \sum_z\frac{\left(\sum_{k\in\calK}\rho_k\left(Q_k(z)-Q_\emptyset(z)\right)\right)^2}{Q_\emptyset(z)} =  \sum_z\frac{\left(\sum_{k\in\calK}\rho_k\left(\sum_{i\in\intseq{1}{N_k}}\beta_{k,i}Q_{k,i}(z)-Q_\emptyset(z)\right)\right)^2}{Q_\emptyset(z)}.
\end{align}

Similar notation holds when focusing on the main channel instead of the warden channel, in which case we write $P$ instead of $Q$.

With this notation, one can check that Lemma 1 may be extended to obtain
\begin{align}
      \frac{\alpha_n^2}{2} \bracknorm{1 + \sqrt{\alpha_n}} \chi_n{\bracknorm{\rhovector,\betavector}} \geq \avgD{Q_{\alpha_n}}{\qz} \geq \frac{\alpha_n^2}{2} \bracknorm{1 - \sqrt{\alpha_n}} \chi_n{\bracknorm{\rhovector,\betavector}},
\end{align}
and
\begin{align}
    	\avgI{X{{\bracksq{\calT}}};Z} = \alpha_n \sum_{k \in \calT} \rho_k\sum_{i=1}^{N_k}\beta_{k,i}\avgD{Q_{k,i}}{\qz} + \bigO{\alpha_n^2}.
\end{align}
Notice that while $\chi(\rhovector)$ only depends on the distributions $\brackcurl{Q_k}_{k \in \calK}$, the expansion of the mutual information involves distributions $\brackcurl{Q_{k,i}}$, which effectively forces us to keep track of the $\{\beta_{k,i}\}$.

It is then not too painful to check that the covert capacity region contains the region defined by
	\begin{align}
		\mathop{\bigcup_{\{\rho_k\}_{k\in\calK}\in[0,1]^K:\sum_{k \in \calK} \rho_k = 1}}_{\{\beta_{k,i}\}_{k\in\intseq{1}{K},i\in\intseq{1}{N_k}}:\forall k\{\beta_{k,i}\}_{i=1}^{N_{k}}\in[0,1]^{N_k},\sum_{i=1}^{N_k}\beta_{k,i}=1}\!\!\!\!\!\!\!\!\!\!\!\!\!\!\!\!\!\!\!\!\!\!\left\{\{r_k\}_{k\in\calK}:\forall k\in\calK,\quad r_k\leq \sqrt{\frac{2}{\chi\bracknorm{\rhovector,\betavector}}} \rho_k \sum_{i=1}^{N_k}\beta_{k,i}\avgD{P_{k,i}}{\pz}\right\}.  
	\end{align}

For the converse part, one can define
\begin{align}
  1-\mu_{kj}^{(n)} \eqdef \Pi_{X_{kj}}(0) \eqdef \frac{ \sum_{m_k = 1}^{M_k} \sum_{\ell_k = 1}^{L_k} \indic{X_{kj}(m_{k}, \ell_k)=0} }{M_kL_k},
\end{align}
and for $i\in\intseq{1}{N_k}$
\begin{align}
  \Pi_{X_{kj}}(i)\eqdef \mu_{kj}^{(n)} \beta_{k,i,j}^{(n)}.
\end{align}

in which case the steps leading to the lower bound  of the KL divergence are identical thanks to our redefinition of $Q_{\calT}$ and $Q_k$ done earlier. The steps leading to the upper bound of the mutual information require slightly more care because (88) in the manuscript must be replaced by
\begin{multline}
  \sum_{\calT \subseteq \calK} \sum_{\mathbf{x}_\calT} \bracknorm{\prod_{i \in \calK} \Pi_{X_{ij}} \bracknorm{x_{\calT,i}}} \avgD{P_{\mathbf{x}_\calT}}{\pz} \\
  - \sum_y \sum_{\calT \subseteq \calK} \sum_{\mathbf{x}_\calT}  \bracknorm{\prod_{i \in \calK} \Pi_{X_{ij}} \bracknorm{x_{\calT,i}}}P_{\mathbf{x}_\calT}\bracknorm{y}  \log \frac{W_{Y_j|X_{\bracknorm{j}}\bracksq{\calK \setminus \brackcurl{k}}} \bracknorm{y|x_{\calT} \bracksq{\calK \setminus \brackcurl{k}}} }{\pz(y)},
\end{multline}
to account for multiple information symbols. Checking how this affects the remaining calculations requires additional care, but one can check that we obtain a modified version of (102) in the manuscript of the form
\begin{align}
  \avgI{X_{kj}; Y_j |X_j \bracksq{\calK\!\setminus \!\brackcurl{k}} } \leq \mun{kj} \sum_{\ell=1}^{N_k}\beta_{i,\ell,j}^{(n)} \avgD{P_{k,\ell}}{\pz} + d_4 \mun{\max} { \sum_{i \in \calK} \mun{ij}\sum_{\ell=1}^{N_i}\beta_{i,\ell,j}^{(n)} }.
\end{align}
Following the exact same steps in the manuscript, one obtains the converse matching the achievability region highlighted earlier. The analysis of the least achievable key rates on the boundary follows similarly.

While we could include all the considerations outlined above in the manuscript, we feel that they do not really add much to the paper, and require the introduction of another complex layer of notation. We would welcome the reviewer's opinion on this matter, but we propose for now to add the following simpler statement in the conclusion.

\section{Extension to AWGN channels} \label{sec:AWGN}
     If covertness were to be measured with variational distance and if one were to use on-off-keying, one could follow the approach outlined in~[32] and handle the covert constraint as done in our achievability proof. However, since our results focus on KL divergence to measure covertness, an achievability proof must accommodate the continuous nature of the AWGN channel alphabet and possibly the need to use input distributions that are not discrete (see~[5]). One solution is to use a resolvability exponent approach~[39],~[40],~[41] instead of the typical-sequence approach used in the manuscript to obtain bounds for the KL divergence $\avgD{\smash{\qhatn}}{\qzn}$ that do not depend on the alphabet cardinality. One technical aspect of this approach is that one must perform a careful Taylor series of the resolvability exponent.

    As for the converse part, one can follow the steps used in [5] with the necessary adaptations to handle multiple users. More specifically, Following the single-letterization approach of~[5], we obtain
    \begin{align}
\avgD{\hat{Q}_n}{Q_0^{\otimes n}} & \geq \sum_{j=1}^n\avgD{\hat{Q}_j}{Q_0}\\
  & = \sum_{j=1}^n\left(-h(\hat{Q}_j^n) +\frac{1}{2}\log 2\pi \sigma^2 + \E[\hat{Q}_j]{\frac{Z^2}{2\sigma^2}}\right)
\end{align}
Because $Z_j$ and the channel inputs $\{X_{i,j}\}_{}$ are independent (by definition), we have
\begin{align}
  \Var{Z} = \sigma^2 + \sum_{i\in\calK}\underbrace{\Var{X_{i,j}}}_{\eqdef \theta_{i,j}}.
\end{align}
Since a Gaussian distribution maximizes the differential entropy among all variables with the same variance and since the variance is a lower bound on the second order moment, we obtain
\begin{align}
\avgD{\hat{Q}_j}{Q_0}
&\geq -\frac{1}{2}\log(2\pi e (\sigma^2+\sum_{i\in\calK} \theta_{i,j}))+\frac{1}{2}\log 2\pi \sigma^2  + \frac{\sigma^2+\sum_{i\in\calK} \theta_{i,j}}{2\sigma^2}\\
&=\frac{\sum_{i\in\calK}\theta_{i,j}}{2\sigma^2} - \frac{1}{2}\log\left(1+\frac{\sum_{i\in\calK} \theta_{i,j}}{\sigma^2}\right).
\end{align}
Since we can argue as done in the manuscript that every $\sum_{i\in\calK}\theta_{i,j}$ should vanish, we obtain
\begin{align}
  \avgD{\hat{Q}_n}{Q_0^{\otimes n}}
  &\geq \sum_{j=1}^n \left(\frac{(\sum_{i\in\calK}\theta_{i,j})^2}{4\sigma^4} + o\left((\sum_{i\in\calK}\theta_{i,j})^2\right)\right)\\
  &\geq \frac{1}{n}\left(\frac{(\sum_{i\in\calK}\sum_{j=1}^n\theta_{i,j})^2}{4\sigma^4} + o\left((\sum_{i\in\calK}\sum_{j=1}^n\theta_{i,j})^2\right)\right).
\end{align}
Note that the last step should be argued a bit more carefully but is nevertheless correct. Similarly, we can upper bound $\log M_k$ as
\begin{align}
  \log M_k
  &\leq \frac{1}{1-\epsilon_n}\left( \sum_{j=1}^n\frac{1}{2}\log\left(1+\frac{\theta_{k,j}}{\sigma^2}\right) + H_b(\epsilon_n)\right)\\
  &\leq \frac{1}{1-\epsilon_n}\left( \sum_{j=1}^n\frac{\theta_{k,j}}{2\sigma^2} + H_b(\epsilon_n)\right)
\end{align}
Putting everything together, one would then obtain
\begin{align}
  \frac{\log M_k}{\sqrt{n\avgD{\hat{Q}_j}{Q_0}}}
  &\leq \frac{\sum_{j=1}^n\frac{\theta_{k,j}}{2\sigma^2} + H_b(\epsilon_n)}{(1-\epsilon_n)\sqrt{\frac{(\sum_{i\in\calK}\sum_{j=1}^n\theta_{i,j})^2}{4\sigma^4} + o\left((\sum_{i\in\calK}\sum_{j=1}^n\theta_{i,j})^2\right)}}\\
  &= \frac{1}{(1-\epsilon_n)\sqrt{1+o(1))}}\left(\frac{\sum_{i=1}^n\theta_{k,i}}{\sum_{i\in\calK}\sum_{j=1}^n\theta_{i,j}}+\frac{H_b(\epsilon_n)}{\sum_{i\in\calK}\sum_{j=1}^n\theta_{i,j}}\right)
\end{align}

Reproducing the reasoning in the manuscript to deal with all the terms properly, we would then obtain that the covert capacity must be contained in the region described by
\begin{align}
\bigcup_{\{\rho_k\}_{k\in\calK}:\sum_{k}\rho_k=1}\left\{\{r_{k}\}_{k\in\calK}:r_k\leq\rho_k\right\}.
\end{align}

\bibliographystyle{IEEEtranS}

\end{document}